\newcommand{\iprod}[2]{\left \langle #1, #2 \right\rangle}
\newcommand{\rank}{\mathop{\sf rank}}
\newcommand{\norm}[1]{\|{#1} \|}
\newcommand{\argmin}{\mathop{\rm argmin}}
\newcommand{\argmax}{\mathop{\rm argmax}}
\newcommand{\diag}{\mathop{\text{diag}}}
\newcommand{\wh}{\widehat}
\newcommand{\wt}{\widetilde}
\newcommand{\ba}{\[
	\begin{aligned}}
\newcommand{\ea}{
	\end{aligned}
	\]}
\newcommand{\baa}{\begin{equation}
	\begin{aligned}}
\newcommand{\eaa}{
	\end{aligned}
	\end{equation}}
\newcommand{\fnorm}[1]{\|#1\|_{\rm F}}
\newcommand{\Tr}{\mathop{\sf Tr}}
\begin{document}
	

	
\title{Bayesian Model Selection with Graph Structured Sparsity}

\author{\name Youngseok Kim \email youngseok@uchicago.edu\\
	\name Chao Gao \email chaogao@galton.uchicago.edu \\
	\addr{Department of Statistics\\
		University of Chicago\\
		Chicago, IL 60637, USA}}
\editor{Francois Caron}
\date{\today}

\maketitle

\begin{abstract}%
	We propose a general algorithmic framework for Bayesian model selection. A spike-and-slab Laplacian prior is introduced to model the underlying structural assumption. Using the notion of effective resistance, we derive an EM-type algorithm with closed-form iterations to efficiently explore possible candidates for Bayesian model selection. The deterministic nature of the proposed algorithm makes it more scalable to large-scale and high-dimensional data sets compared with existing stochastic search algorithms. When applied to sparse linear regression, our framework recovers the EMVS algorithm \citep{rovckova2014emvs} as a special case. We also discuss extensions of our framework using tools from graph algebra to incorporate complex Bayesian models such as biclustering and submatrix localization. Extensive simulation studies and real data applications are conducted to demonstrate the superior performance of our methods over its frequentist competitors such as $\ell_0$ or $\ell_1$ penalization.
\end{abstract}

\vspace{0.07in}

\begin{keywords}
	spike-and-slab prior, graph laplacian, variational inference, expectation maximization, sparse linear regression, biclustering
\end{keywords}


\section{Introduction}
\label{sec:introduction}

Bayesian model selection has been an important area of research for several decades. While the general goal is to estimate the most plausible sub-model from the posterior distribution \citep{barry1993bayesian,diebolt1994estimation,richardson1997bayesian,bottolo2010evolutionary} for a wide class of learning tasks, most of the developments of Bayesian model selection have been focused on variable selection in the setting of sparse linear regression \citep{hans2007shotgun,li2010bayesian,ghosh2011rao,rovckova2014emvs,wang2018simple}. One of the main challenges of Bayesian model selection is its computational efficiency. Recently, 
\citet{rovckova2014emvs} discovered that Bayesian variable selection in sparse linear regression can be solved by an EM algorithm \cite{dempster1977maximum,neal1998view} with a closed-form update at each iteration. Compared with previous stochastic search type of algorithms such as Gibbs sampling \citep{george1993variable,george1997approaches}, this deterministic alternative greatly speeds up computation for large-scale and high-dimensional data sets.

The main thrust of this paper is to develop of a general framework of Bayesian models that includes sparse linear regression, change-point detection, clustering and many other models as special cases. We will derive a general EM-type algorithm that efficiently explores possible candidates for Bayesian model selection. When applied to sparse linear regression, our model and algorithmic frameworks naturally recover the proposal of \cite{rovckova2014emvs}. The general framework proposed in this paper can be viewed as an algorithmic counterpart of the theoretical framework for Bayesian high-dimensional structured linear models in \cite{gao2015general}. While the work \cite{gao2015general} is focused on optimal posterior contraction rate and oracle inequalities, the current paper pursues a general efficient and scalable computational strategy.

In order to study various Bayesian models from a unified perspective, we introduce a spike-and-slab Laplacian prior distribution on the model parameters. The new prior distribution is an extension of the classical spike-and-slab prior \citep{mitchell1988bayesian,george1993variable,george1997approaches} for Bayesian variable selection. Our new definition incorporates the graph Laplacian of the underlying graph representing the model structure, and thus gives the name of the prior. Under this general framework, the problem of Bayesian model selection can be recast as selecting a subgraph of some base graph determined by the statistical task. Here, the base graph and its subgraphs represent the structures of the full model and the corresponding sub-models, respectively. Various choices of base graphs lead to specific statistical estimation problems such as sparse linear regression, clustering and change-point detection. In addition, the connection to graph algebra further allows us to build prior distributions for even more complicated models. For example, using graph products such as Cartesian product or Kronecker product \citep{imrich2000product,leskovec2010kronecker}, we can construct prior distributions for biclustering models from the Laplacian of the graph products of row and column clustering structures. This leads to great flexibility in analyzing real data sets of complex structures.

Our Bayesian model selection follows the procedure of 
\citet{rovckova2014emvs} that evaluates the posterior probabilities of sub-models computed from the solution path of the EM algorithm.
However, the derivation of the EM algorithm under our general framework is indeed nontrivial task. When the underlying base graph of the model structure is a tree, the derivation of the EM algorithm is straightforward by following the arguments in \cite{rovckova2014emvs}. On the other hand, for a general base graph that is not a tree, the arguments in \cite{rovckova2014emvs} do not apply. To overcome this difficulty, we introduce a relaxation through the concept of effective resistance \citep{lovasz1993random,ghosh2008minimizing,spielman2007spectral} that adapts to the underlying graphical structure of the model. The lower bound given by this relaxation is then used to derive a variational EM algorithm that works under the general framework.

Model selection with graph structured sparsity has also been studied in the frequentist literature. For example, generalized Lasso \citep{tibshirani2011solution,arnold2014genlasso} and its multivariate version network Lasso \citep{hallac2015network} encode the graph structured sparsity with $\ell_1$ regularization. Algorithms based on $\ell_0$ regularization have also been investigated recently \citep{fan2018approximate, xu2019iterative}. Compared with these frequentist methods, our proposed Bayesian model selection procedure tends to achieve better model selection performance in terms of false discovery proportion and power in a wide range of model scenarios, which will be shown through an extensive numerical study under various settings.

The rest of the paper is organized as follows. In Section \ref{sec:model}, we introduce the general framework of Bayesian models and discuss the spike-and-slab Laplacian prior. The EM algorithm will be derived in Section \ref{sec:EM} for both the case of trees and general base graphs. In Section \ref{sec:clustering}, we discuss how to incorporate latent variables and propose a new Bayesian clustering models under our framework. Section \ref{sec:algebra} introduces the techniques of graph products and several important extensions of our framework. We will also discuss a non-Gaussian spike-and-slab Laplacian prior in Section \ref{sec:iso} with a natural application to reduced isotonic regression \citep{schell1997reduced}. Finally, extensive simulated and real data analysis will be presented in Section \ref{sec:num}.



\section{A General Framework of Bayesian Models}
\label{sec:model}

In this section, we describe a general framework for building Bayesian structured models on graphs. To be specific, the prior structural assumption on the parameter $\theta\in\mathbb{R}^p$ will be encoded by a graph. Throughout the paper, $G=(V,E)$ is an undirected graph with $V=[p]$ and some $E\subset\{(i,j):1\leq i<j\leq p\}$. It is referred to as the \textit{base graph} of the model, and our goal is to learn a sparse subgraph of $G$ from the data. We use $p=|V|$ and $m=|E|$ for the node size and edge size of the base graph.

\subsection{Model Description}\label{sec:model-des}

We start with the Gaussian linear model $y\,|\,\beta,\sigma^2\sim N(X\beta,\sigma^2 I_n)$ that models an $n$-dimensional observation. The design matrix $X\in\mathbb{R}^{n\times p}$ is determined by the context of the problem. Given some nonzero vector $w\in\mathbb{R}^p$, the Euclidean space $\mathbb{R}^p$ can be decomposed as a direct sum of the one-dimensional subspace spanned by $w$ and its orthogonal complement. In other words, we can write
$$\beta=\frac{1}{\|w\|^2}ww^T \beta + \left(I_p-\frac{1}{\|w\|^2}ww^T\right)\beta.$$
The structural assumption will be imposed by a prior on the second term above. To simplify the notation, we introduce the space $\Theta_w=\left\{\theta\in\mathbb{R}^p: w^T\theta=0\right\}$. Then, any $\beta\in\mathbb{R}^p$ can be decomposed as $\beta=\alpha w+\theta$ for some $\alpha\in\mathbb{R}$ and $\theta\in\Theta_w$. The likelihood is thus given by
\begin{equation}
y\,|\,\alpha,\theta,\sigma^2 \sim N( X(\alpha w + \theta), \sigma^2 I_n). \label{eq:likelihood}
\end{equation}

The prior distribution on the vector $\alpha w + \theta$ will be specified by independent priors on $\alpha$ and $\theta$. They are given by
\begin{eqnarray}
\label{eq:alpha-prior} \alpha \,|\, \sigma^2 &\sim& N(0,\sigma^2/\nu), \\
\label{eq:Laplacian-prior} \theta\,|\,\gamma,\sigma^2 &\sim& p(\theta\,|\,\gamma,\sigma^2) \propto \prod_{(i,j)\in E}\exp\left( -\frac{(\theta_i - \theta_j)^2}{2\sigma^2[v_0\gamma_{ij} + v_1(1-\gamma_{ij})]} \right) \mathbb{I}\{\theta \in \Theta_w \}.
\end{eqnarray}
Under the prior distribution, $\alpha$ is centered at $0$ and has precision $\nu/\sigma^2$. The parameter $\theta$ is modeled by a prior distribution on $\Theta_w$ that encodes a pairwise relation between $\theta_i$ and $\theta_j$. Here, $v_0$ is a very small scalar and $v_1$ is a very large scalar. For a pair $(i,j)\in E$ in the base graph, the prior enforces the closedness between $\theta_i$ and $\theta_j$ when $\gamma_{ij}=1$. Our goal is then to learn the most probable subgraph structure encoded by $\{\gamma_{ij}\}$, which will be estimated from the posterior distribution.

We finish the Bayesian modeling by putting priors on $\gamma$ and $\sigma^2$. They are given by
\begin{eqnarray}
\label{eq:prior-gamma} \gamma\,|\,\eta &\sim& p(\gamma\,|\,\eta)\propto\prod_{(i,j)\in E}\eta^{\gamma_{ij}}(1-\eta)^{1-\gamma_{ij}}\mathbb{I}\{\gamma\in\Gamma\}, \\
\label{eq:prior-eta}\eta &\sim& \text{Beta}(A,B), \\
\label{eq:prior-sigma}\sigma^2 &\sim& \text{InvGamma}(a/2,b/2).
\end{eqnarray}
Besides the standard conjugate priors on $\eta$ and $\sigma^2$, the independent Bernoulli prior on $\gamma$ is restricted on a set $\Gamma\subset\{0,1\}^m$. This restriction is sometimes useful for particular models, but for now we assume that $\Gamma=\{0,1\}^m$ until it is needed in Section \ref{sec:clustering}.

The Bayesian model is now fully specified. The joint distribution is
\begin{equation}
p(y,\alpha,\theta,\gamma,\eta,\sigma^2)=p(y\,|\,\alpha,\theta,\sigma^2)p(\alpha\,|\,\sigma^2)p(\theta\,|\,\gamma,\sigma^2)p(\gamma\,|\,\eta)p(\eta)p(\sigma^2).\label{eq:joint}
\end{equation}
Among these distributions, the most important one is $p(\theta|\gamma,\sigma^2)$. To understand its properties, we introduce the \textit{incidence matrix} $D\in\mathbb{R}^{m\times p}$ for the base graph $G=(V,E)$. The matrix $D$ has entries $D_{ei}=1$ and $D_{ej}=-1$ if $e=(i,j)$, and $D_{ek}=0$ if $k\neq i,j$. We note that the definition of $D$ depends on the order of edges $\{(i,j)\}$ even if $G$ is an undirected graph. However, this does not affect any application that we will need in the paper. We then define the Laplacian matrix
$$L_{\gamma}=D^T\diag\left(v_0^{-1}\gamma+v_1^{-1}(1-\gamma)\right)D.$$
It is easy to see that $L_{\gamma}$ is the graph Laplacian of the weighted graph with adjacency matrix $\{v_0^{-1}\gamma_{ij}+v_1^{-1}(1-\gamma_{ij})\}$. Thus, we can write (\ref{eq:Laplacian-prior}) as
\begin{equation}
p(\theta\,|\,\gamma,\sigma^2) \propto \exp\left(-\frac{1}{2\sigma^2}\theta^TL_{\gamma}\theta\right)\mathbb{I}\{\theta\in\Theta_w\}. \label{eq:prior-Laplacian-form}
\end{equation}
Given its form, we name (\ref{eq:prior-Laplacian-form}) the \textit{spike-and-slab Laplacian prior}.
\begin{proposition}\label{prop:Laplacian}
	Suppose $G=(V,E)$ is a connected base graph. For any $\gamma\in\{0,1\}^m$ and $v_0,v_1\in(0,\infty)$, the graph Laplacian $L_{\gamma}$ is positive semi-definite and has rank $p-1$. The only eigenvector corresponding to its zero eigenvalue is proportional to $\mathds{1}_p$, the vector with all entries $1$. As a consequence, as long as $\mathds{1}_p^Tw\neq 0$, the spike-and-slab Laplacian prior is a non-degenerate distribution on $\Theta_w$. Its density function with respect to the Lebesgue measure restricted to $\Theta_w$ is
	$$p(\theta\,|\,\gamma,\sigma^2)=\frac{1}{(2\pi\sigma^2)^{(p-1)/2}}\sqrt{\text{det}_w(L_{\gamma})}\exp\left(-\frac{1}{2\sigma^2}\theta^TL_{\gamma}\theta\right)\mathbb{I}\{\theta\in\Theta_w\},$$
	where $\text{det}_w(L_{\gamma})$ is the product of all nonzero eigenvalues of the positive semi-definite matrix $\left(I_p-\frac{1}{\|w\|^2}ww^T\right)L_{\gamma}\left(I_p-\frac{1}{\|w\|^2}ww^T\right)$.
\end{proposition}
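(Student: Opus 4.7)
My plan is to prove the four claims in order: (i) $L_\gamma\succeq 0$ with the asserted null space, (ii) $\text{rank}(L_\gamma)=p-1$, (iii) non-degeneracy on $\Theta_w$, and (iv) the density formula with the stated normalizing constant.

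For (i) and (ii), I would use the factorization $L_\gamma=D^T W_\gamma D$ with $W_\gamma=\diag(v_0^{-1}\gamma+v_1^{-1}(1-\gamma))$ whose diagonal entries are all strictly positive since $v_0,v_1\in(0,\infty)$. Then
\[
\theta^T L_\gamma \theta=\sum_{(i,j)\in E}\left[v_0^{-1}\gamma_{ij}+v_1^{-1}(1-\gamma_{ij})\right](\theta_i-\theta_j)^2\ \geq\ 0,
\]
which gives positive semi-definiteness. Equality forces $\theta_i=\theta_j$ along every edge; connectedness of $G$ then forces $\theta\in\text{span}(\mathds{1}_p)$. Conversely $L_\gamma\mathds{1}_p=D^T W_\gamma D\mathds{1}_p=0$ because each row of $D$ sums to $0$, so the null space is exactly $\text{span}(\mathds{1}_p)$ and the rank is $p-1$.

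For (iii), I would observe that the hypothesis $\mathds{1}_p^T w\ne 0$ means $\mathds{1}_p\notin\Theta_w$, hence $\Theta_w\cap\ker L_\gamma=\{0\}$. Combined with (i), this gives $\theta^T L_\gamma \theta>0$ for every nonzero $\theta\in\Theta_w$, so the unnormalized density on $\Theta_w$ is integrable against the Lebesgue measure on that hyperplane and the prior is a proper distribution.

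For (iv), I would parameterize $\Theta_w$ by an orthonormal basis: let $U\in\mathbb{R}^{p\times(p-1)}$ satisfy $U^T U=I_{p-1}$ and $UU^T=P:=I_p-\|w\|^{-2}ww^T$, so that the map $z\mapsto Uz$ is an isometry from $\mathbb{R}^{p-1}$ onto $\Theta_w$. A standard Gaussian integral gives
\[
\int_{\Theta_w}\exp\left(-\frac{1}{2\sigma^2}\theta^T L_\gamma\theta\right)d\theta=\int_{\mathbb{R}^{p-1}}\exp\left(-\frac{1}{2\sigma^2}z^T(U^T L_\gamma U)z\right)dz=\frac{(2\pi\sigma^2)^{(p-1)/2}}{\sqrt{\det(U^T L_\gamma U)}}.
\]
To finish, I would match $\det(U^T L_\gamma U)$ with $\det_w(L_\gamma)$. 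Since $PL_\gamma P=U(U^T L_\gamma U)U^T$ and $U$ has orthonormal columns, the nonzero eigenvalues of $PL_\gamma P$ coincide with the eigenvalues of $U^T L_\gamma U$, so their products agree. The main subtlety is checking that $PL_\gamma P$ has exactly $p-1$ nonzero eigenvalues (equivalently, that $U^T L_\gamma U$ is invertible), which I would verify by showing $\ker(PL_\gamma P)=\text{span}(w)$: the inclusion $\supseteq$ follows from $Pw=0$, and for $\subseteq$, $PL_\gamma P\theta=0$ means $L_\gamma P\theta\in\text{span}(w)$, but $\text{range}(L_\gamma)=\mathds{1}_p^\perp$ and $w\notin\mathds{1}_p^\perp$ by the hypothesis $\mathds{1}_p^T w\ne 0$, so $L_\gamma P\theta=0$, forcing $P\theta\in\text{span}(\mathds{1}_p)\cap\Theta_w=\{0\}$, i.e., $\theta\in\text{span}(w)$. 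This last verification tying together the projection $P$, the range of $L_\gamma$, and the hypothesis on $w$ is the only nontrivial step; everything else is routine linear algebra.
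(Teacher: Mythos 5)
Your proof is correct, and it reaches the normalizing constant by a somewhat different route than the paper. The paper first establishes a general lemma on degenerate Gaussian densities (its Lemma \ref{lem:density-to-distribution}): any kernel $\exp(-\tfrac12 x^T\Omega x)\mathbb{I}\{x\in\mathcal{M}(V)\}$ with $\mathcal{M}(\Omega V)=\mathcal{M}(V)$ normalizes to $N(0,V(V^T\Omega V)^{-1}V^T)$, and then it computes $\det_+$ of that covariance via pseudo-inverse identities ($VRV^T=(VR^{-1}V^T)^+$, and $VV^T=I_p-ww^T/\|w\|^2$ from its Lemma \ref{lem:matrix_inv}); the spectral facts about $L_\gamma$ are simply cited as standard. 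You instead prove the spectral facts from scratch via the quadratic form $\sum_{(i,j)\in E}[v_0^{-1}\gamma_{ij}+v_1^{-1}(1-\gamma_{ij})](\theta_i-\theta_j)^2$ and connectedness, and you compute the integral directly by parameterizing $\Theta_w$ with an orthonormal $U$ and matching $\det(U^TL_\gamma U)$ with $\det_+(PL_\gamma P)$ through the fact that $U(U^TL_\gamma U)U^T$ and $U^TL_\gamma U$ share nonzero eigenvalues; your kernel computation $\ker(PL_\gamma P)=\mathrm{span}(w)$ cleanly pins down that $\det_w(L_\gamma)$ is a product of exactly $p-1$ positive eigenvalues. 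Your argument is more elementary and self-contained (no pseudo-inverses, no appendix lemmas); what the paper's detour buys is reusability — the same degenerate-Gaussian lemma and the explicit covariance $V(V^TL_\gamma V)^{-1}V^T$ are invoked again in Propositions \ref{prop:reduced-model}, \ref{prop:projection-clustering}, and \ref{prop:half-gaussian}, where having the distribution identified as a concrete $N(0,\Sigma)$ is needed for the weak-convergence arguments. Both are sound; the underlying identity (restricted Gaussian integral equals $(2\pi\sigma^2)^{(p-1)/2}$ over the square root of the pseudo-determinant of the projected Laplacian) is the same.
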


The proposition reveals two important conditions that lead to the well-definedness of the spike-and-slab Laplacian prior: the connectedness of the base graph $G=(V,E)$ and $\mathds{1}_p^Tw\neq 0$. Without either condition, the distribution would be degenerate on $\Theta_w$. Extensions to a base graph that is not necessarily connected is possible. We leave this task to Section \ref{sec:clustering} and Section \ref{sec:algebra}, where tools from graph algebra are introduced.

\subsection{Examples}\label{sec:ex}

The Bayesian model (\ref{eq:joint}) provides a very general framework. By choosing a different base graph $G=(V,E)$, a design matrix $X$, a grounding vector $w\in\mathbb{R}^p$ and a precision parameter $\nu$, we then obtain a different model. Several important examples are given below.

\begin{example}[Sparse linear regression]\label{ex:slr}
	The sparse linear regression model $y\,|\,\theta,\sigma^2\sim N(X\theta,\sigma^2 I_n)$ is a special case of (\ref{eq:likelihood}). To put it into the general framework, we can expand the design matrix $X\in\mathbb{R}^{n\times p}$ and the regression vector $\theta\in\mathbb{R}^p$ by $[0_n,X]\in\mathbb{R}^{n\times (p+1)}$ and $[\theta_0;\theta]\in\mathbb{R}^{p+1}$. With the grounding vector $w=[1;0_p]$, the sparse linear regression model can be recovered from  (\ref{eq:likelihood}). For the prior distribution, the base graph $G$ consists of nodes $V=\{0,1,...,p\}$ and edges $\{(0,i): i\in[p]\}$. We set $\nu=\infty$, so that $\theta_0=0$ with prior probability one. Then, (\ref{eq:Laplacian-prior}) is reduced to
	$$\theta\,|\,\gamma,\sigma^2\sim p(\theta\,|\,\gamma,\sigma^2) \propto\prod_{i=1}^p\exp\left(-\frac{\theta_i^2}{2\sigma^2[v_0\gamma_{0i}+v_1(1-\gamma_{0i})]}\right).$$
	That is, $\theta_i|\gamma,\sigma^2\sim N(0,\sigma^2[v_0\gamma_{0i}+v_1(1-\gamma_{0i})])$ independently for all $i\in[n]$. This is recognized as the spike-and-slab Gaussian prior for Bayesian sparse linear regression considered by \cite{george1993variable,george1997approaches,rovckova2014emvs}.
\end{example}

\begin{example}[Change-point detection]\label{ex:cpm}
	Set $n=p$, $X=I_n$, and $w=\mathds{1}_n$. We then have $y_i\,|\,\theta_i,\sigma^2\sim N(\alpha+\theta_i,\sigma^2)$ independently for all $i\in[n]$ from (\ref{eq:likelihood}). For the prior distribution on $\alpha$ and $\theta$, we consider $\nu=0$ and a one-dimensional chain graph $G=(V,E)$ with $E=\{(i,i+1): i\in[n-1]\}$. This leads to a flat prior on $\alpha$, and the prior on $\theta$ is given by
	$$\theta\,|\,\gamma,\sigma^2\sim p(\theta\,|\,\gamma,\sigma^2)\propto \prod_{i=1}^{n-1}\exp\left(-\frac{(\theta_i-\theta_{i+1})^2}{2\sigma^2[v_0\gamma_{i,i+1}+v_1(1-\gamma_{i,i+1})]}\right)\mathbb{I}\{\mathds{1}_p^T\theta=0\}.$$
	A more general change-point model on a tree can also be obtained by constructing a tree base graph $G$.
\end{example}

\begin{example}[Two-dimensional image denoising] \label{ex:2dimgrid}
	Consider a rectangular set of observations $y\in\mathbb{R}^{n_1\times n_2}$. With the same construction in Example \ref{ex:cpm} applied to ${\sf sec}(y)$, we obtain $y_{ij}\,|\,\theta_{ij},\sigma^2\sim N(\alpha+\theta_{ij},\sigma^2)$ independently for all $(i,j)\in[n_1]\times [n_2]$ from (\ref{eq:likelihood}). To model images, we consider a prior distribution that imposes closedness to nearby pixels. Consider $\nu=0$ and a base graph $G=(V,E)$ shown in the picture below.
	\begin{center}
		\begin{tikzpicture}
		[scale=.5,auto=left,every node/.style={}]
		\node (vn1) at (1,1) {$\theta_{n_11}$};
		\node (v31) at (1,3)  {$\vdots$};
		\node (v21) at (1,5)  {$\theta_{21}$};
		\node (v11) at (1,7)  {$\theta_{11}$};
		\node (vn2) at (3,1) {$\theta_{n_12}$};
		\node (v32) at (3,3) {$\vdots$}; 
		\node (v22) at (3,5) {$\theta_{22}$}; 
		\node (v12) at (3,7) {$\theta_{12}$};
		\node (vn3) at (5,1) {$\cdots$};
		\node (v33) at (5,3) {$\ddots$}; 
		\node (v23) at (5,5) {$\cdots$}; 
		\node (v13) at (5,7) {$\cdots$}; 
		\node (vnm) at (7,1) {$\theta_{n_1n_2}$};
		\node (v3m) at (7,3) {$\vdots$}; 
		\node (v2m) at (7,5) {$\theta_{2n_2}$};
		\node (v1m) at (7,7) {$\theta_{1n_2}$};
		\foreach \from/\to in {v11/v21,v21/v31,v31/vn1,v11/v12,v12/v13,v13/v1m,v21/v22,v22/v23,v23/v2m,v1m/v2m,v2m/v3m,v3m/vnm,vn1/vn2,vn2/vn3,vn3/vnm,v12/v22,v22/v32,v32/vn2,v31/v32,v13/v23,v23/v33,v32/v33,v33/vn3,v33/v3m}
		\draw (\from) -- (\to);
		\end{tikzpicture}
	\end{center}
	We then obtain a flat prior on $\alpha$, and
	$$\theta\,|\,\gamma,\sigma^2\sim p(\theta\,|\,\gamma,\sigma^2)\propto \prod_{(ik,jl)\in E}\exp\left(-\frac{(\theta_{ik}-\theta_{jl})^2}{2\sigma^2[v_0\gamma_{ik,jl}+v_1(1-\gamma_{ik,jl})]}\right)\mathbb{I}\{\mathds{1}_{n_1}^T\theta\mathds{1}_{n_2}=0\}.$$
	Note that $G$ is not a tree in this case.
\end{example}

\section{EM Algorithm}\label{sec:EM}

In this section, we will develop efficient EM algorithms for the general model. It turns out that the bottleneck is the computation of $\text{det}_w(L_{\gamma})$ given some $\gamma\in\{0,1\}^m$.
\begin{lemma}\label{lem:matrix-tree}
	Let $\text{spt}(G)$ be the set of all spanning trees of $G$. Then
	$$\text{det}_w(L_{\gamma})=\frac{(\mathds{1}_p^Tw)^2}{\|w\|^2}\sum_{T\in\text{spt}(G)}\prod_{(i,j)\in T}\left[v_0^{-1}\gamma_{ij}+v_1^{-1}(1-\gamma_{ij})\right].$$
	In particular, if $G$ is a tree, then $\text{det}_w(L_{\gamma})=\frac{(\mathds{1}_p^Tw)^2}{\|w\|^2}\prod_{(i,j)\in E}\left[v_0^{-1}\gamma_{ij}+v_1^{-1}(1-\gamma_{ij})\right]$.
\end{lemma}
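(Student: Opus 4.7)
The plan is to decompose $\text{det}_w(L_\gamma)$ into a purely geometric factor (depending only on the angle between $w$ and $\mathds{1}_p$) times the pseudo-determinant $\text{pdet}(L_\gamma) := \prod_{\lambda_i \neq 0} \lambda_i(L_\gamma)$, and then apply the weighted Kirchhoff matrix-tree theorem to the latter. First I pick any orthonormal basis $U \in \mathbb{R}^{p \times (p-1)}$ of $\Theta_w$. Since $P = I_p - ww^T/\|w\|^2$ fixes $\Theta_w$ pointwise, one has $U^T(PL_\gamma P)U = U^T L_\gamma U$, and this $(p-1)\times(p-1)$ matrix has as its eigenvalues exactly the nonzero eigenvalues of $PL_\gamma P$. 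Hence $\text{det}_w(L_\gamma) = \det(U^T L_\gamma U)$.

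Next, I pick an orthonormal basis $V \in \mathbb{R}^{p \times (p-1)}$ of $\mathds{1}_p^\perp$, noting that by Proposition \ref{prop:Laplacian} the kernel of $L_\gamma$ is $\text{span}(\mathds{1}_p)$, so $\det(V^T L_\gamma V) = \text{pdet}(L_\gamma)$. Writing $\hat{\mathds{1}} = \mathds{1}_p/\sqrt{p}$ and expanding the columns of $U$ in the orthonormal basis $\{V, \hat{\mathds{1}}\}$ of $\mathbb{R}^p$ gives
\[
U = VA + \hat{\mathds{1}}\, b^T, \qquad A = V^T U,\quad b = U^T \hat{\mathds{1}}.
\]
Because $L_\gamma \hat{\mathds{1}} = 0$, all cross-terms vanish and $U^T L_\gamma U = A^T (V^T L_\gamma V) A$, so $\text{det}_w(L_\gamma) = \det(A)^2 \,\text{pdet}(L_\gamma)$. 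A short computation then yields
\[
\det(A)^2 = \det(U^T V V^T U) = \det(I_{p-1} - bb^T) = 1 - \|b\|^2 = (\hat{w}^T \hat{\mathds{1}})^2 = \frac{(\mathds{1}_p^T w)^2}{p\|w\|^2},
\]
using $VV^T = I_p - \hat{\mathds{1}}\hat{\mathds{1}}^T$, $U^T U = I_{p-1}$, and $\|U^T \hat{\mathds{1}}\|^2 = \|P\hat{\mathds{1}}\|^2 = 1 - (\hat{w}^T \hat{\mathds{1}})^2$ with $\hat{w} = w/\|w\|$.

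Finally, the weighted matrix-tree theorem applied to $L_\gamma = D^T\,\text{diag}(v_0^{-1}\gamma + v_1^{-1}(1-\gamma))\,D$ gives
\[
\text{pdet}(L_\gamma) = p \sum_{T \in \text{spt}(G)} \prod_{(i,j) \in T}\bigl[v_0^{-1}\gamma_{ij} + v_1^{-1}(1-\gamma_{ij})\bigr],
\]
which follows from Cauchy--Binet applied to any $(p-1)\times(p-1)$ principal minor of $L_\gamma$, using that every $(p-1)\times(p-1)$ minor of a tree's incidence matrix is $\pm 1$ and that all $p$ diagonal cofactors of $L_\gamma$ are equal. Multiplying by the geometric factor $(\mathds{1}_p^T w)^2/(p\|w\|^2)$ yields the stated formula, and the tree case is the specialization $|\text{spt}(G)| = 1$. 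The main technical step is the cross-term vanishing identity $U^T L_\gamma U = A^T(V^T L_\gamma V)A$, which decouples the $w$-dependence from the graph structure using only $L_\gamma \hat{\mathds{1}} = 0$; everything else is either classical linear algebra or the standard weighted Kirchhoff theorem.
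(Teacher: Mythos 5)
Your proof is correct and follows essentially the same route as the paper's: both factor $\text{det}_w(L_\gamma)$ as $\det(V^TU)^2\det(U^TL_\gamma U)$ with one orthonormal frame spanning $w^\perp$ and the other spanning $\mathds{1}_p^\perp$, evaluate the geometric factor as $(\mathds{1}_p^Tw)^2/(p\|w\|^2)$, and obtain the spanning-tree sum from Cauchy--Binet. The only (cosmetic) difference is that you compute the geometric factor via the rank-one identity $\det(I_{p-1}-bb^T)=1-\|b\|^2$, whereas the paper uses a block-determinant computation together with its Lemma on oblique projections.
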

The lemma suggests that the hardness of computing $\text{det}_w(L_{\gamma})$ depends on the number of spanning trees of the base graph $G$. When the base graph is a tree, $\text{det}_w(L_{\gamma})$ is factorized over the edges of the tree, which greatly simplifies the derivation of the algorithm. We will derive a closed-form EM algorithm in Section \ref{sec:alg-tree} when $G$ is a tree, and the algorithm for a general $G$ will be given in Section \ref{sec:alg-non}.

\subsection{The Case of Trees}\label{sec:alg-tree}

We treat $\gamma$ as latent. Our goal is to maximize the marginal distribution after integrating out the latent variables. That is,
\begin{equation}
\max_{\alpha,\theta\in\Theta_w,\eta,\sigma^2}\log \sum_{\gamma}p(y,\alpha,\theta,\gamma,\eta,\sigma^2),\label{eq:EM-hard-obj}
\end{equation}
where $p(y,\alpha,\theta,\gamma,\eta,\sigma^2)$ is given by (\ref{eq:joint}). Since the summation over $\gamma$ is intractable, we consider an equivalent form of (\ref{eq:EM-hard-obj}), which is
\begin{equation}
\max_q\max_{\alpha,\theta\in\Theta_w,\eta,\sigma^2}\sum_{\gamma}q(\gamma)\log\frac{p(y,\alpha,\theta,\gamma,\eta,\sigma^2)}{q(\gamma)}. \label{eq:EM-equivalent}
\end{equation}
Then, the EM algorithm is equivalent to iteratively updating $q,\alpha,\theta\in\Theta_w,\eta,\sigma^2$ \citep{neal1998view}.

Now we illustrate the EM algorithm that solves (\ref{eq:EM-equivalent}). The E-step is to update $q(\gamma)$ given the previous values of $\theta,\eta,\sigma$. In view of (\ref{eq:joint}), we have
\begin{equation}
q^{\rm new}(\gamma)\propto p(y,\alpha,\theta,\gamma,\eta,\sigma^2)\propto p(\theta\,|\,\gamma,\sigma^2)p(\gamma\,|\,\eta). \label{eq:E-step-joint}
\end{equation}
According to (\ref{lem:matrix-tree}), $p(\theta\,|\,\gamma,\sigma^2)$ can be factorized when the base graph $G=(V,E)$ is a tree. Therefore, with a simpler notation $q_{ij}=q(\gamma_{ij}=1)$, we can write the update for $q$ as $q^{\rm new}(\gamma)=\prod_{(i,j)\in E}(q_{ij}^{\rm new})^{\gamma_{ij}}(1-q_{ij}^{\rm new})^{1-\gamma_{ij}}$, where
\begin{equation}
q_{ij}^{\rm new}=\frac{\eta\phi(\theta_i-\theta_j;0,\sigma^2 v_0)}{\eta\phi(\theta_i-\theta_j;0,\sigma^2 v_0)+(1-\eta)\phi(\theta_i-\theta_j;0,\sigma^2v_1)}. \label{eq:E-tree}
\end{equation}
Here, $\phi(\cdot;\mu,\sigma^2)$ stands for the density function of $N(\mu,\sigma^2)$.

To derive the M-step, we introduce the following function
\begin{equation}
F(\alpha,\theta;q)=\|y-X(\alpha w+\theta)\|^2+\nu\alpha^2+\theta^TL_{q}\theta, \label{eq:F()}
\end{equation}
where $L_{q}$ is obtained by replacing $\gamma$ with $q$ in the definition of the graph Laplacian $L_{\gamma}$.
The M-step consists of the following three updates,
\begin{eqnarray}
\label{eq:M1} (\alpha^{\rm new},\theta^{\rm new}) &=& \argmin_{\alpha,\theta\in\Theta_w}F(\alpha,\theta;q^{\rm new}), \\
\label{eq:M2}(\sigma^2)^{\rm new} &=& \argmin_{\sigma^2}\left[\frac{F(\alpha^{\rm new},\theta^{\rm new};q^{\rm new})+b}{2\sigma^2} + \frac{p+n+a+2}{2}\log(\sigma^2)\right], \\
\label{eq:M3}\eta^{\rm new} &=& \argmax_{\eta}\left[\left(A-1+q^{\rm new}_{\rm sum}\right)\log\eta + \left(B-1+p-1-q^{\rm new}_{\rm sum}\right)\log(1-\eta)\right],
\end{eqnarray}
where the notation $q^{\rm new}_{\rm sum}$ stands for $\sum_{(i,j)\in E}q_{ij}^{\rm new}$. While (\ref{eq:M1}) is a simple quadratic programming, (\ref{eq:M2}) and (\ref{eq:M3}) have closed forms, which are given by
\begin{equation}
(\sigma^2)^{\rm new}=\frac{F(\alpha^{\rm new},\theta^{\rm new};q^{\rm new})+b}{p+n+a+2}\quad\text{and}\quad \eta^{\rm new}=\frac{A-1+q^{\rm new}_{\rm sum}}{A+B+p-3}.\label{eq:M4}
\end{equation}

We remark that the EMVS algorithm \citep{rovckova2014emvs} is a special case for the sparse linear regression problem discussed in Example \ref{ex:slr}. When $G$ is a tree, the spike-and-slab graph Laplacian prior \eqref{eq:prior-Laplacian-form} is proportional to the product of individual spike-and-slab priors
\ba
p(\theta\,|\,\gamma,\sigma^2)\propto \prod_{(i,j) \in E}\exp\left(-\frac{(\theta_i-\theta_j)^2}{2\sigma^2[v_0\gamma_{ij}+v_1(1-\gamma_{ij})]}\right),
\ea
supported on $\Theta_w$, as we have seen in Example~\ref{ex:slr} and \ref{ex:cpm}. In this case, the above EM algorithm we have developed can also be extended to models with alternative prior distributions, such as the spike-and-slab Lasso prior \citep{rovckova2018spike} and the finite normal mixture prior \citep{stephens2016false}.

\subsection{General Graphs}\label{sec:alg-non}

When the base graph $G$ is not a tree, the E-step becomes computationally infeasible due to the lack of separability of $p(\theta|\gamma,\sigma^2)$ in $\gamma$. In fact, given the form of the density function in Proposition \ref{prop:Laplacian}, the main problem lies in the term $\sqrt{\det_w(L_{\gamma})}$, which cannot be factorized over $(i,j)\in E$ when the base graph $G=(V,E)$ is not a tree (Lemma \ref{lem:matrix-tree}). To overcome the difficulty, we consider optimizing a lower bound of the objective function (\ref{eq:EM-equivalent}). This means we need to find a good lower bound for $\log\det_w(L_{\gamma})$. Similar techniques are also advocated in the context of learning exponential family graphical models \citep{wainwright2008graphical}.

By Lemma \ref{lem:matrix-tree}, we can write
\begin{equation}
\log\text{det}_w(L_{\gamma})=\log\sum_{T\in\text{spt}(G)}\prod_{(i,j)\in T}\left[v_0^{-1}\gamma_{ij}+v_1^{-1}(1-\gamma_{ij})\right]+\log \frac{(\mathds{1}_p^Tw)^2}{\|w\|^2}.
\end{equation}
We only need to lower bound the first term on the right hand side of the equation above, because the second term is independent of $\gamma$. By Jensen's inequality, for any non-negative sequence $\{\lambda(T)\}_{T\in\text{spt}(G)}$ such that $\sum_{T\in\text{spt}(G)}\lambda(T)=1$, we have
\ba
	{}& \log\sum_{T\in\text{spt}(G)}\prod_{(i,j)\in T}\left[v_0^{-1}\gamma_{ij}+v_1^{-1}(1-\gamma_{ij})\right] \\
	\geq {}& \sum_{T\in\text{spt}(G)}\lambda(T)\log \prod_{(i,j)\in T}\left[v_0^{-1}\gamma_{ij}+v_1^{-1}(1-\gamma_{ij})\right] - \sum_{T\in\text{spt}(G)}\lambda(T)\log\lambda(T) \\
	= {}& \sum_{(i,j)\in E}\left(\sum_{T\in\text{spt}(G)}\lambda(T)\mathbb{I}\{(i,j)\in T\}\right)\log\left[v_0^{-1}\gamma_{ij}+v_1^{-1}(1-\gamma_{ij})\right] - \sum_{T\in\text{spt}(G)}\lambda(T)\log\lambda(T).
\ea
One of the most natural choices of the weights $\{\lambda(T)\}_{T\in\text{spt}(G)}$ is the uniform distribution
\ba
\lambda(T)=\frac{1}{|\text{spt}(G)|}.
\ea
This leads to the following lower bound
\begin{eqnarray}
\nonumber && \log\sum_{T\in\text{spt}(G)}\prod_{(i,j)\in T}\left[v_0^{-1}\gamma_{ij}+v_1^{-1}(1-\gamma_{ij})\right] \\
&\geq& \sum_{(i,j)\in E}r_{ij}\log\left[v_0^{-1}\gamma_{ij}+v_1^{-1}(1-\gamma_{ij})\right] + \log|\text{spt}(G)|, \label{eq:lower}
\end{eqnarray}
where
\begin{equation}
r_{ij} = \frac{1}{|\text{spt}(G)|}\sum_{T\in\text{spt}(G)}\mathbb{I}\{(i,j)\in T\}.\label{eq:effective-res}
\end{equation}

The quantity $r_{ij}$ defined in (\ref{eq:effective-res}) is recognized as the \textit{effective resistance} between the $i$th and the $j$th nodes \citep{lovasz1993random,ghosh2008minimizing}. Given a graph, we can treat each edge as a resistor with resistance $1$. Then, the effective resistance between the $i$th and the $j$th nodes is the resistance between $i$ and $j$ given by the whole graph. That is, if we treat the entire graph as a resistor. Let $L$ be the (unweighted) Laplacian matrix of the base graph $G=(V,E)$, and $L^+$ its pseudo-inverse. Then, an equivalent definition of (\ref{eq:effective-res}) is given by the formula
$$
r_{ij} = (e_i-e_j)^TL^+(e_i-e_j),
$$
where $e_j$ is the basis vector with the $i$th entry $1$ and the remaining entries $0$. Therefore, computation of the effective resistance can leverage fast Laplacian solvers in the literature \citep{spielman2004nearly,livne2012lean}. Some important examples of effective resistance are listed below:
\begin{itemize}
	\item When $G$ is the complete graph of size $p$, then $r_{ij} = 2/p$ for all $(i,j)\in E$.
	\item When $G$ is the complete bipartite graph of sizes $p$ and $k$, then $r_{ij} = \frac{p+k-1}{pk}$ for all $(i,j)\in E$.
	\item When $G$ is a tree, then $r_{ij} = 1$ for all $(i,j)\in E$.
	\item When $G$ is a two-dimensional grid graph of size $n_1\times n_2$, then $r_{ij} \in [0.5,0.75]$ depending on how close the edge $(i,j)$ is from its closest corner.
	\item When $G$ is a lollipop graph, the conjunction of a linear chain with size $p$ and a complete graph with size $k$, then $r_{ij} = 1$ or $2/k$ depending on whether the edge $(i,j)$ belongs to the chain or the complete graph.
\end{itemize}

By (\ref{eq:lower}), we obtain the following lower bound for the objective function (\ref{eq:EM-equivalent}),
\begin{equation} \label{eq:elbo}
\max_q\max_{\alpha,\theta\in\Theta_w,\eta,\sigma^2}\sum_{\gamma}q(\gamma)\log\frac{p(y\,|\,\alpha,\theta,\sigma^2)p(\alpha\,|\,\sigma^2)\wt{p}(\theta\,|\,\gamma,\sigma^2)p(\gamma\,|\,\eta)p(\eta)p(\sigma^2)}{q(\gamma)},
\end{equation}
where the formula of $\wt{p}(\theta\,|\,\gamma,\sigma^2)$ is obtained by applying the lower bound (\ref{eq:lower}) in the formula of $p(\theta\,|\,\gamma,\sigma^2)$ in Proposition \ref{prop:Laplacian}. Since $\wt{p}(\theta\,|\,\gamma,\sigma^2)$ can be factorized over $(i,j)\in E$, the E-step is given by $q^{\rm new}(\gamma)=\prod_{(i,j)\in E}(q_{ij}^{\rm new})^{\gamma_{ij}}(1-q_{ij}^{\rm new})^{1-\gamma_{ij}}$, where
\begin{equation}
q_{ij}^{\rm new} = \frac{\eta v_0^{-{r_{ij}}/{2}}e^{-(\theta_i-\theta_j)^2/2\sigma^2v_0}}{\eta v_0^{-{r_{ij}}/{2}}e^{-(\theta_i-\theta_j)^2/ 2\sigma^2v_0}+(1-\eta) v_1^{-{r_{ij}}/{2}}e^{-(\theta_i-\theta_j)^2/ 2\sigma^2v_1}}.\label{eq:E-general}
\end{equation}
Observe that the lower bound (\ref{eq:lower}) is independent of $\alpha,\theta,\eta,\sigma^2$, and thus the M-step remains the same as in the case of a tree base graph. The formulas are given by (\ref{eq:M1})-(\ref{eq:M3}), except that (\ref{eq:M3}) needs to be replaced by
$$\eta^{\rm new}=\frac{A-1+q^{\rm new}_{\rm sum}}{A+B+m-2}.$$

The EM algorithm for a general base graph can be viewed as a natural extension of that of a tree base graph. When $G=(V,E)$ is a tree, it is easy to see from the formula (\ref{eq:effective-res}) that $r_{ij}=1$ for all $(i,j)\in E$. In this case, the E-step (\ref{eq:E-general}) is reduced to (\ref{eq:E-tree}), and the inequality (\ref{eq:lower}) becomes an equality.

\subsection{Bayesian Model Selection}\label{sec:model-selection}

The output of the EM algorithm $\wh{q}(\gamma)$ can be understood as an estimator of the posterior distribution $p(\gamma|\wh{\alpha},\wh{\theta},\wh{\sigma}^2,\wh{\eta})$, where $\wh{\alpha},\wh{\theta},\wh{\sigma}^2,\wh{\eta}$ are obtained from the M-step. Then, we get a subgraph according to the thresholding rule $\wh{\gamma}_{ij}=\mathbb{I}\{\wh{q}_{ij}\geq 1/2\}$. It can be understood as a model learned from the data. The sparsity of the model critically depends on the values of $v_0$ and $v_1$ in the spike-and-slab Laplacian prior. With a fixed large value of $v_1$, we can obtain the solution path of $\wh{\gamma}=\wh{\gamma}(v_0)$ by varying $v_0$ from $0$ to $v_1$. The question then is how to select the best model along the solution path of the EM algorithm.

The strategy suggested by \cite{rovckova2014emvs} is to calculate the posterior score $p(\gamma|y)$ with respect to the Bayesian model of $v_0=0$. While the meaning of $p(\gamma|y)$ corresponding to $v_0=0$ is easily understood for the sparse linear regression setting in \cite{rovckova2014emvs}, it is less clear for a general base graph $G=(V,E)$.

In order to define a version of (\ref{eq:joint}) for $v_0=0$, we need to introduce the concept of \textit{edge contraction}. Given a $\gamma\in\{0,1\}^m$, the graph corresponding to the adjacency matrix $\gamma$ induces a partition of disconnected components $\{\mathcal{C}_1,...,\mathcal{C}_s\}$ of $[p]$. In other words, $\{i,j\}\subset \mathcal{C}_l$ for some $l\in[s]$ if and only if there is some path between $i$ and $j$ in the graph $\gamma$. For notational convenience, we define a vector $z\in[s]^n$ so that $z_i=l$ if and only if $i\in\mathcal{C}_l$. A membership matrix $Z_{\gamma}\in\{0,1\}^{p\times s}$ is defined with its $(i,l)$th entry being the indicator $\mathbb{I}\{z_i=l\}$.

We let $\wt{G}=(\wt{V},\wt{E})$ be a graph obtained from the base graph $G=(V,E)$ after the operation of edge contraction. In other words, every node in $\wt{G}$ is obtained by combining nodes in $G$ according to the partition of $\{\mathcal{C}_1,...,\mathcal{C}_s\}$. To be specific, $\wt{V}=[s]$, and $(k,l)\in \wt{E}$ if and only if there exists some $i\in\mathcal{C}_k$ and some $j\in\mathcal{C}_l$ such that $(i,j)\in E$.

Now we are ready to define a limiting version of (\ref{eq:Laplacian-prior}) as $v_0\rightarrow 0$. Let $\wt{L}_{\gamma}=D^T\diag(v_1^{-1}(1-\gamma))D$, which is the graph Laplacian of the weighted graph with adjacency matrix $\{v_1^{-1}(1-\gamma_{ij})\}$. Then, define
\begin{equation} 
p(\wt{\theta}\,|\,\gamma,\sigma^2) = \frac{1}{(2\pi\sigma^2)^{(s-1)/2}}\sqrt{\text{det}_{Z_{\gamma}^Tw}(Z_{\gamma}^T\wt{L}_{\gamma}Z_{\gamma})} \exp\left(-\frac{\wt{\theta}^TZ_{\gamma}^T\wt{L}_{\gamma}Z_{\gamma}\wt{\theta}}{2\sigma^2}\right)\mathbb{I}\{\wt{\theta}\in\Theta_{Z_{\gamma}^T w}\}.\label{eq:Laplacian-prior-low-d}
\end{equation}

With $\wt{G}=(\wt{V},\wt{E})$ standing for the contracted base graph, the prior distribution (\ref{eq:Laplacian-prior-low-d}) can also be written as
\begin{equation}
p(\wt{\theta}\,|\,\gamma,\sigma^2)\propto \exp\left(-\sum_{(k,l)\in\wt{E}}\frac{\omega_{kl}(\wt{\theta}_k-\wt{\theta}_l)^2}{2\sigma^2v_1}\right)\mathbb{I}\{\wt{\theta}\in\Theta_{Z_{\gamma}^Tw}\},\label{eq:Laplacian-prior-low-d-explicit}
\end{equation}
where $\omega_{kl}=\sum_{(i,j)\in E}\mathbb{I}\{z(i)=k,z(j)=l\}$, which means that the edges $\{(i,j)\}_{z(i)=k,z(j)=l}$ in the base graph $G=(V,E)$ are contracted as a new edge $(k,l)$ in $\wt{G}=(\wt{V},\wt{E})$ with $\omega_{kl}$ as the weight.

\begin{proposition}\label{prop:reduced-model}
	Suppose $G=(V,E)$ is connected and $\mathds{1}_p^Tw\neq 0$.  Let $Z_{\gamma}$ be the membership matrix defined as above. Then for any $\gamma\in\{0,1\}^m$, (\ref{eq:Laplacian-prior-low-d}) is a well-defined density function on the $(s-1)$-dimensional subspace $\{\wt{\theta}\in\mathbb{R}^s: w^TZ_{\gamma}\wt{\theta}=0\}$. Moreover, for an arbitrary design matrix $X\in\mathbb{R}^{n\times p}$, the distribution of $\theta$ that follows (\ref{eq:Laplacian-prior}) weakly converges to that of $Z_{\gamma}\wt{\theta}$ as $v_0\rightarrow 0$.
\end{proposition}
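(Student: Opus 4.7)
For the first claim (well-definedness), the plan is to recognize $Z_\gamma^T\wt L_\gamma Z_\gamma$ as the graph Laplacian of the contracted graph $\wt G$ with edge weights $\omega_{kl}/v_1$, and then invoke Proposition \ref{prop:Laplacian}. A direct expansion yields
\[
\wt\theta^T Z_\gamma^T\wt L_\gamma Z_\gamma\wt\theta \;=\; \frac{1}{v_1}\sum_{(i,j)\in E,\,\gamma_{ij}=0}\bigl(\wt\theta_{z_i}-\wt\theta_{z_j}\bigr)^2,
\]
and any edge $(i,j)\in E$ with $z_i\neq z_j$ automatically satisfies $\gamma_{ij}=0$ (otherwise $i$ and $j$ would share a component), so the right hand side reorganizes as $v_1^{-1}\sum_{(k,l)\in\wt E}\omega_{kl}(\wt\theta_k-\wt\theta_l)^2$. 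Since edge contraction never disconnects a graph, $\wt G$ is connected; and $\mathds{1}_s^T(Z_\gamma^T w)=(Z_\gamma\mathds{1}_s)^T w=\mathds{1}_p^T w\neq 0$ because $Z_\gamma\mathds{1}_s=\mathds{1}_p$. Applying Proposition \ref{prop:Laplacian} to $\wt G$ with grounding vector $Z_\gamma^T w$ gives the stated well-definedness on $\{\wt\theta\in\mathbb{R}^s:w^T Z_\gamma\wt\theta=0\}$.

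For the weak convergence claim, the distribution of $\theta$ under (\ref{eq:Laplacian-prior}) is a centered (possibly degenerate) Gaussian supported on $\Theta_w$, and the distribution of $Z_\gamma\wt\theta$ is a centered Gaussian on $\mathbb{R}^p$ supported on $\mathrm{col}(Z_\gamma)\cap\Theta_w$, so weak convergence reduces to convergence of their covariance matrices. I decompose the precision as $L_\gamma=v_0^{-1}L^{(0)}+\wt L_\gamma$, where $L^{(0)}=D^T\diag(\gamma)D$ is the Laplacian of the graph on $[p]$ with edge set $\{(i,j)\in E:\gamma_{ij}=1\}$. By definition the connected components of this graph are $\mathcal C_1,\dots,\mathcal C_s$, so $\ker L^{(0)}=\mathrm{col}(Z_\gamma)$ is spanned by the component indicators. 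Hence, as $v_0\to 0$, the $v_0^{-1}L^{(0)}$ term penalizes the components of $\theta$ outside $\mathrm{col}(Z_\gamma)$ to zero, and on the surviving subspace only $\wt L_\gamma$ remains.

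To compute the limit cleanly I will choose an orthonormal basis $U=[U_0,U_1]$ of $\Theta_w$ with $\mathrm{col}(U_0)=\mathrm{col}(Z_\gamma)\cap\Theta_w$ (dimension $s-1$, since $\mathds{1}_p\notin\Theta_w$) and $\mathrm{col}(U_1)$ its orthogonal complement inside $\Theta_w$ (dimension $p-s$). Writing $\theta=U\xi$, the covariance is $\Sigma(v_0)=\sigma^2 U(U^T L_\gamma U)^{-1}U^T$, and in this block basis $U^T L^{(0)}U$ has zero top-left block (because $\mathrm{col}(U_0)\subset\ker L^{(0)}$) and positive-definite bottom-right block. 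A Schur-complement expansion of $(v_0^{-1}U^T L^{(0)}U+U^T\wt L_\gamma U)^{-1}$ then shows that as $v_0\to 0$ the bottom-right and off-diagonal blocks vanish while the top-left block tends to $(U_0^T\wt L_\gamma U_0)^{-1}$, giving $\Sigma(v_0)\to\sigma^2 U_0(U_0^T\wt L_\gamma U_0)^{-1}U_0^T$. A final change of basis between the parametrizations $\phi=U_0\xi_0$ and $\phi=Z_\gamma\wt\theta$ of the common subspace $\mathrm{col}(Z_\gamma)\cap\Theta_w$ identifies this limit with $\mathrm{Cov}(Z_\gamma\wt\theta)$ under (\ref{eq:Laplacian-prior-low-d}). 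The main obstacle is this double pseudo-inverse bookkeeping, arising simultaneously from the linear constraint $\theta\in\Theta_w$ and from the degeneracy of $L^{(0)}$; working in the adapted block basis reduces it to a standard Schur-complement limit and avoids manipulating pseudo-inverses of $L_\gamma$ directly.
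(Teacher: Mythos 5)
Your proof is correct. On the first claim it takes a genuinely different route from the paper, and on the second it is essentially the paper's argument in different clothing. For well-definedness, the paper checks directly that $\wt{V}^TZ_\gamma^T\Psi Z_\gamma\wt{V}$ is invertible (with $\Psi=v_1\wt{L}_\gamma$) via a column-space argument plus Lemma \ref{lem:density-to-distribution}; you instead recognize $Z_\gamma^T\wt{L}_\gamma Z_\gamma$ as the Laplacian of the connected contracted graph $\wt{G}$ with positive weights $\omega_{kl}/v_1$, note $\mathds{1}_s^TZ_\gamma^Tw=\mathds{1}_p^Tw\neq 0$, and invoke Proposition \ref{prop:Laplacian} at the reduced level. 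This is cleaner and explains why the normalizing constant in (\ref{eq:Laplacian-prior-low-d}) has exactly the form of Proposition \ref{prop:Laplacian}; the only caveat is that Proposition \ref{prop:Laplacian} is stated for edge weights in $\{v_0^{-1},v_1^{-1}\}$, so you are tacitly using its (routine) extension to arbitrary positive weights. For the weak-convergence claim your plan coincides with the paper's in substance: the same decomposition $L_\gamma=v_0^{-1}\Phi+\wt{L}_\gamma$ with $\Phi=D^T\diag(\gamma)D$, the same identification $\ker\Phi=\mathcal{M}(Z_\gamma)$, and the same reduction of weak convergence of centered Gaussians to convergence of covariances. Where the paper applies its Lemma \ref{lem:Y-Kim} (the limit $(tS+H)^{-1}\rightarrow R(R^THR)^{-1}R^T$), you compute the identical limit by a Schur-complement expansion in an orthonormal basis of $\Theta_w$ adapted to $\mathcal{M}(Z_\gamma)\cap\Theta_w$; the two are interchangeable, as Lemma \ref{lem:Y-Kim} is itself proved by a block argument of this type. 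Two points left implicit in your sketch but easily supplied: the invertibility of $U_0^T\wt{L}_\gamma U_0$ (which follows from connectedness of $\wt{G}$ together with $\mathds{1}_p^Tw\neq 0$), and the final identification $U_0(U_0^T\wt{L}_\gamma U_0)^{-1}U_0^T=Z_\gamma\wt{V}(\wt{V}^TZ_\gamma^T\wt{L}_\gamma Z_\gamma\wt{V})^{-1}\wt{V}^TZ_\gamma^T$, which holds because both column spaces equal $\mathcal{M}(Z_\gamma)\cap\Theta_w$ (the remark following Lemma \ref{lem:density-to-distribution}).
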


Motivated by Proposition \ref{prop:reduced-model}, a limiting version of (\ref{eq:joint}) for $v_0=0$ is defined as follows,
\begin{equation} \label{eqn:model_selection_y}
y\,|\,\alpha,\wt{\theta}, \gamma, \sigma^2 \sim N(X(\alpha w+Z_{\gamma}\wt{\theta}),\sigma^2 I_n).
\end{equation}
Then, $p(\wt{\theta}\,|\,\gamma,\sigma^2)$ is given by (\ref{eq:Laplacian-prior-low-d}), and $p(\alpha|\sigma^2)$, $p(\gamma|\eta)$, $p(\eta)$, $p(\sigma^2)$ are specified in (\ref{eq:alpha-prior}) and (\ref{eq:prior-gamma})-(\ref{eq:prior-sigma}). The posterior distribution of $\gamma$ has the formula
\begin{eqnarray*}
	p(\gamma\,|\,y) &\propto& \int\int\int\int p(y,\alpha,\wt{\theta},\gamma,\eta,\sigma^2)\,d\alpha\, d\wt{\theta}\,d\eta\, d\sigma^2 \\
	&=& \int p(\sigma^2)\int p(\alpha\,|\,\sigma^2)\int p(y\,|\,\alpha,\wt{\theta},\gamma,\sigma^2)p(\wt{\theta}\,|\,\gamma,\sigma^2)\,d\wt{\theta}\,d\alpha\, d\sigma^2 \int p(\gamma\,|\,\eta)p(\eta)d\eta.
\end{eqnarray*}
A standard calculation using conjugacy gives
\baa \label{model_selection}
p(\gamma\,|\,y) \propto& \left(\frac{\text{det}_{Z_\gamma^Tw}(Z_{\gamma}^T\wt{L}_{\gamma}Z_{\gamma})}{\text{det}_{Z_\gamma^T w}(Z_{\gamma}^T(X^TX+\wt{L}_{\gamma})Z_{\gamma})}\right)^{1/2} \left(\frac{\nu}{\nu+w^TX^T(I_n-R_{\gamma})Xw}\right)^{1/2} \\
& \times \left(y^T(I_n-R_{\gamma})y-\frac{|w^TX^T(I_n-R_{\gamma})y|^2}{\nu+w^TX^T(I_n-R_{\gamma})Xw}+b\right)^{-\frac{n+a}{2}} \\
& \times \frac{\text{Beta}\left(\sum_{(i,j)\in E}\gamma_{ij}+A-1,\,\sum_{(i,j)\in E}(1-\gamma_{ij})+B-1)\right)}{\text{Beta}(A,B)},
\eaa
where
$$R_{\gamma}=XZ_{\gamma}(Z_{\gamma}^T(X^TX+\wt{L}_{\gamma})Z_{\gamma})^{-1}Z_{\gamma}^TX^T.$$
This defines the model selection score $g(\gamma)=\log p(\gamma\,|\,y)$ up to a universal additive constant. The Bayesian model selection procedure evaluates $g(\gamma)$ on the solution path $\{\wh{\gamma}(v_0)\}_{0<v_0\leq v_1}$ and selects the best model with the highest value of $g(\gamma)$.

\subsection{Summary of Our Approach}
\label{sec:summary-of-the-approach}

The parameter $v_0$ plays a critical role in our Bayesian model selection procedure.
Recall that the joint distribution of our focus has the expression
\begin{equation}
p(y\,|\,\theta)p_{v_0}(\theta\,|\,\gamma)p(\gamma),\label{eq:simplified-model}
\end{equation}
where $p(y\,|\,\theta)$ is the linear model parametrized by $\theta$, $p_{v_0}(\theta\,|\,\gamma)$ is the spike-and-slab prior with tuning parameter $v_0$, and $p(\gamma)$ is the prior on the graph\footnote{We have ignored other parameters such as $\alpha,\eta,\sigma^2$ in order to make the discussion below clear and concise.}. The tuning parameter $v_0$ is the variance of the spike component of the prior. Different choices of $v_0$ are used as different components in our entire model selection procedure.
\begin{enumerate}
	\item The ideal choice of $v_0=0$ models exact sparsity in the sense that $\gamma_{ij}=1$ implies $\theta_i=\theta_j$. In this case, the exact posterior
	$$p_{v_0=0}(\gamma\,|\,y)\propto \int p(y\,|\,\theta)p_{v_0=0}(\theta\,|\,\gamma)p(\gamma)d\theta$$
	can be calculated according to the formulas that we derive in Section~\ref{sec:model-selection}. Then the ideal Bayes model selection procedure would be the one that maximizes the posterior $p_{v_0=0}(\gamma\,|\,y)$ over all $\gamma$. However, since this would require evaluating $p_{v_0=0}(\gamma\,|\,y)$ for exponentially many $\gamma$'s, it is sensible to maximize $p_{v_0=0}(\gamma\,|\,y)$ only over a carefully chosen subset of $\gamma$ that has a reasonable size for computational efficiency.
	\item The choice of $v_0>0$ models approximate sparsity in the sense that $\gamma_{ij}$ implies $\theta_i\approx\theta_j$. Though $v_0>0$ does not offer interpretation of exact sparsity, a nonzero $v_0$ leads to efficient computation via the EM algorithm. That is, for a $v_0>0$, we can maximize
	$$\max_q\max_{\theta}\sum_{\gamma}q(\gamma)\log\frac{p(y\,|\,\theta)p_{v_0}(\theta\,|\,\gamma)p(\gamma)}{q(\gamma)},$$
	which is the objective function of EM. Denote the output of the algorithm by $q_{v_0}(\gamma)=\prod_{ij}q_{ij,v_0}$, we then obtain our model by $\widehat{\gamma}_{ij}(v_0)=\mathbb{I}\{q_{ij,v_0}>0.5\}$. As we vary $v_0$ on a grid from $0$ to $v_1$, we obtain a path of models $\{\widehat{\gamma}(v_0)\}_{0<v_0\leq v_1}$. It covers models that ranges from very parsimonious ones to the fully saturated one.
\end{enumerate}

The proposed model selection procedure in Section~\ref{sec:model-selection} is
\begin{equation}
\max_{\gamma\in \{\widehat{\gamma}(v_0)\}_{0<v_0\leq v_1}}p_{v_0=0}(\gamma\,|\,y),\label{eq:proposed-msg}
\end{equation}
where $p_{v_0=0}(\gamma\,|\,y)\propto \int p(y\,|\,\theta)p_{v_0=0}(\theta\,|\,\gamma)p(\gamma)d\theta$.
That is, we optimize the posterior distribution $p_{v_0=0}(\gamma\,|\,y)$ \textit{only} over the EM solution path. The best one among all the candidate models will be selected according to $p_{v_0=0}(\gamma\,|\,y)$, which is the exact/full posterior of $\gamma$. There are two ways to interpret our model selection procedure \eqref{eq:proposed-msg}:
\begin{enumerate}
	\item The procedure (\ref{eq:proposed-msg}) can be understood as a computationally efficient approximation strategy to the ideal Bayes model selection procedure $\max_{\gamma}p_{v_0=0}(\gamma\,|\,y)$ that is infeasible to compute. The EM algorithm with various choices of $v_0$ simply provides a short list of candidate models. From this perspective, the proposed procedure (\ref{eq:proposed-msg}) is fully Bayes.
	\item The procedure (\ref{eq:proposed-msg}) can be also understood as a method for selecting the tuning parameter $v_0$, because the maximizer of (\ref{eq:proposed-msg}) must be in the form of $\widehat{\gamma}(\widehat{v}_0)$ for some data-driven $\widehat{v}_0$. In this way, the solution $\widehat{\gamma} = \widehat{\gamma}(\widehat{v}_0)$ also has a non-Bayesian interpretation since it is obtained by post-processing the EM solution with the tuning parameter $\widehat{v}_0$ selected by (\ref{eq:proposed-msg}). In this regard, $\widehat\gamma$ also can be thought of as an empirical Bayes estimator.
\end{enumerate}
In summary, our proposed procedure (\ref{eq:proposed-msg}) is motivated by both statistical and computational considerations.

\section{Clustering: A New Deal}\label{sec:clustering}

\subsection{A Multivariate Extension} \label{sec:multivariate}

Before introducing our new Bayesian clustering model, we need a multivariate extension of the general framework (\ref{eq:joint}) to model a matrix observation $y\in\mathbb{R}^{n\times d}$. With a design matrix $X\in\mathbb{R}^{n\times p}$, the dimension of $\theta$ is now $p\times d$. We denote the $i$th row of $\theta$ by $\theta_i$. With the grounding vector $w\in\mathbb{R}^p$, the distribution $p(y|\alpha,\theta,\sigma^2)p(\alpha|\sigma^2)p(\theta|\gamma,\sigma^2)$ is given by
\begin{eqnarray}
\label{eq:likelihood-multi} y\,|\,\alpha,\theta,\sigma^2 &\sim& N(X(w\alpha^T+\theta),\sigma^2 I_n\otimes I_d), \\
\label{eq:prior-alpha-multi} \alpha\,|\,\sigma^2 &\sim& N\left(0,\frac{\sigma^2}{\nu}I_d\right), \\
\label{eq:prior-multi-theta} \theta\,|\,\gamma,\sigma^2 &\sim& p(\theta|\gamma,\sigma^2) \propto \prod_{(i,j)\in E}\exp\left(-\frac{\|\theta_i-\theta_j\|^2}{2\sigma^2[v_0\gamma_{ij}+v_1(1-\gamma_{ij})]}\right)\mathbb{I}\{\theta\in\Theta_w\},
\end{eqnarray}
where $\Theta_w=\{\theta\in\mathbb{R}^{p\times d}:w^T\theta=0\}$. The prior distributions on $\gamma,\eta,\sigma^2$ are the same as (\ref{eq:prior-gamma})-(\ref{eq:prior-sigma}). Moreover, the multivariate spike-and-slab Laplacian prior (\ref{eq:prior-multi-theta}) is supported on a $d(p-1)$-dimensional subspace $\Theta_w$, and is well-defined as long as $\mathds{1}_p^Tw\neq 0$ for the same reason stated in Proposition \ref{prop:Laplacian}.

The multivariate extension can be understood as the task of learning $d$ individual graphs for each column of $\theta$. Instead of modeling the $d$ graph separately by $\gamma^{(1)},...,\gamma^{(d)}$ using (\ref{eq:joint}), we assume the $d$ columns of $\theta$ share the same structure by imposing the condition $\gamma^{(1)}=...=\gamma^{(d)}$.

An immediate example is a Bayesian multitask learning problem with group sparsity. It can be viewed as a multivariate extension of Example \ref{ex:slr}. With the same argument in Example \ref{ex:slr}, (\ref{eq:likelihood-multi})-(\ref{eq:prior-multi-theta}) is specialized to
\begin{eqnarray*}
	y\,|\,\theta,\sigma^2 &\sim& N(X\theta,\sigma^2 I_n\otimes I_d), \\
	\theta\,|\,\gamma,\sigma^2 &\sim& p(\theta\,|\,\gamma,\sigma^2) \propto \prod_{i=1}^p\exp\left(-\frac{\|\theta_i\|^2}{2\sigma^2[v_0\gamma_i+v_1(1-\gamma_i)]}\right).
\end{eqnarray*}

To close this subsection, let us mentions that the model (\ref{eq:likelihood-multi})-(\ref{eq:prior-multi-theta}) can be easily modified to accommodate a heteroscedastic setting. For example, one can replace the $\sigma^2 I_n\otimes I_d$ in (\ref{eq:likelihood-multi}) by a more general $I_n\otimes \diag(\sigma_1^2,...,\sigma_d^2)$, and then make corresponding changes to (\ref{eq:prior-alpha-multi}) and (\ref{eq:prior-multi-theta}) as well.

\subsection{Model Description}\label{sec:clustering-model}

Consider the likelihood
\begin{equation}
y\,|\,\alpha,\theta,\sigma^2\sim N(\mathds{1}_n\alpha^T+\theta,\sigma^2I_n\otimes I_d),\label{eq:likelihood-clustering}
\end{equation}
with the prior distribution of $\alpha|\sigma^2$ specified by  (\ref{eq:prior-alpha-multi}).
The clustering model uses the following form of (\ref{eq:prior-multi-theta}),
\begin{equation}
p(\theta,\mu\,|\,\gamma,\sigma^2) \propto \prod_{i=1}^n\prod_{j=1}^k\exp\left(-\frac{\|\theta_i-\mu_j\|^2}{2\sigma^2[v_0\gamma_{ij}+v_1(1-\gamma_{ij})]}\right)\mathbb{I}\{\mathds{1}_n^T\theta=0\}. \label{eq:spike-and-slab-clustering}
\end{equation}
Here, both vectors $\theta_i$ and $\mu_j$ are in $\mathbb{R}^d$. The prior distribution (\ref{eq:spike-and-slab-clustering}) can be derived from (\ref{eq:prior-multi-theta}) by replacing $\theta$ in (\ref{eq:prior-multi-theta}) with $(\theta,\mu)$ and specifying the base graph as a complete bipartite graph between $\theta$ and $\mu$. We impose the restriction that
\begin{equation}
\sum_{j=1}^k\gamma_{ij}=1,\label{eq:gamma-clustering-condition}
\end{equation}
for all $i\in[n]$. Then, $\mu_1,...,\mu_k$ are latent variables that can be interpreted as the clustering centers, and each $\theta_i$ is connected to one of the clustering centers.

To fully specify the clustering model, the prior distribution of $\gamma$ is given by (\ref{eq:prior-gamma}) with $\Gamma$ being the set of all $\{\gamma_{ij}\}$ that satisfies (\ref{eq:gamma-clustering-condition}). Equivalently,
\begin{equation}
(\gamma_{i1},...,\gamma_{ik})\sim \text{Uniform}(\{e_j\}_{j=1}^k),\label{eq:clustering-prior-gamma-uniform}
\end{equation}
independently for all $i\in[n]$, where $e_j$ is a vector with $1$ on the $j$th entry and $0$ elsewhere. Finally, the prior of $\sigma^2$ is given by (\ref{eq:prior-sigma}).

\subsection{EM Algorithm}\label{sec:clustering-EM}

The EM algorithm can be derived by following the idea developed in Section \ref{sec:alg-non}. In the current setting, the lower bound (\ref{eq:lower}) becomes
\begin{eqnarray}
\nonumber && \log\sum_{T\in\text{spt}(K_{n,k})}\prod_{i=1}^n\prod_{j=1}^k[v_0^{-1}\gamma_{ij}+v_1^{-1}(1-\gamma_{ij})] \\
\label{eq:lower-clustering} &\geq& \sum_{i=1}^n\sum_{j=1}^k r_{ij}\log\left[v_0^{-1}\gamma_{ij}+v_1^{-1}(1-\gamma_{ij})\right] + \log|\text{spt}(K_{n,k})|,
\end{eqnarray}
where $K_{n,k}$ is the complete bipartite graph. By symmetry, the effective resistance $r_{ij}=r$ is a constant independent of $(i,j)$. Thus, (\ref{eq:lower-clustering}) can be written as
\begin{eqnarray}
\label{eq:audi-r8} && r\sum_{i=1}^n\sum_{j=1}^k\log\left[v_0^{-1}\gamma_{ij}+v_1^{-1}(1-\gamma_{ij})\right] + \log|\text{spt}(K_{n,k})| \\
\nonumber &=& r\log(v_0^{-1})\sum_{i=1}^n\sum_{j=1}^k\gamma_{ij} + r\log(v_1^{-1})\sum_{i=1}^n\sum_{j=1}^k(1-\gamma_{ij}) + \log|\text{spt}(K_{n,k})| \\
\label{eq:amg-gt} &=& rn\log(v_0^{-1}) + rn(k-1)\log(v_1^{-1}) + \log|\text{spt}(K_{n,k})|,
\end{eqnarray}
where the last equality is derived from (\ref{eq:gamma-clustering-condition}). Therefore, for the clustering model, the lower bound (\ref{eq:lower}) is a constant independent of $\{\gamma_{ij}\}$. As a result, the lower bound of the objective function of the EM algorithm becomes
\begin{equation}
\sum_{\gamma}q(\gamma)\log\frac{p(y\,|\,\alpha,\theta,\sigma^2)p(\alpha\,|\,\sigma^2)\wt{p}(\theta,\mu\,|\,\gamma,\sigma^2)p(\gamma)p(\sigma^2)}{q(\gamma)},\label{eq:obj-EM-clustering}
\end{equation}
with
$$\wt{p}(\theta,\mu\,|\,\gamma,\sigma^2)=\text{const}\times \frac{1}{(2\pi\sigma^2)^{(n+k-1)d/2}}\prod_{i=1}^n\prod_{j=1}^k\exp\left(-\frac{\|\theta_i-\mu_j\|^2}{2\sigma^2[v_0\gamma_{ij}+v_1(1-\gamma_{ij})]}\right),$$
and the algorithm is to maximize (\ref{eq:obj-EM-clustering}) over $q,\alpha,\theta\in\{\theta:\mathds{1}_n^T\theta=0\}$ and $\sigma^2$.

Maximizing (\ref{eq:obj-EM-clustering}) over $q$, we obtain the E-step as
\begin{equation}
q_{ij}^{\rm new}=\frac{\exp\left(-\frac{\|\theta_i-\mu_j\|^2}{2\sigma^2\bar{v}}\right)}{\sum_{l=1}^k\exp\left(-\frac{\|\theta_i-\mu_l\|^2}{2\sigma^2\bar{v}}\right)},\label{eq:E-clustering}
\end{equation}
independently for all $i\in[n]$,
where $\bar{v}^{-1}=v_0^{-1}-v_1^{-1}$, and we have used the notation
$$q_{ij}=q\left((\gamma_{i1},...,\gamma_{ik})=e_j\right).$$
It is interesting to note that the E-step only depends on $v_0$ and $v_1$ through $\bar{v}$. Maximizing (\ref{eq:obj-EM-clustering}) over $\alpha,\theta\in\{\theta:\mathds{1}_n^T\theta=0\},\mu,\sigma^2$, we obtain the M-step as
\begin{eqnarray}
\label{eq:M-clustering} (\alpha^{\rm new}, \theta^{\rm new},\mu^{\rm new}) &=& \argmin_{\alpha, \mathds{1}_n^T\theta=0,\mu}F(\alpha,\theta,\mu;q^{\rm new}), \\
\nonumber (\sigma^2)^{\rm new} &=& \frac{F(\alpha^{\rm new},\theta^{\rm new},\mu^{\rm new};q^{\rm new})+b}{(2n+k)d+a+2},
\end{eqnarray}
where
$$F(\alpha,\theta,\mu;q)=\fnorm{y-\mathds{1}_n\alpha^T-\theta}^2+\nu\|\alpha\|^2+\sum_{i=1}^n\sum_{j=1}^k\left(\frac{q_{ij}}{v_0}+\frac{1-q_{ij}}{v_1}\right)\|\theta_i-\mu_j\|^2.$$
Note that for all $\theta$ such that $\mathds{1}_n^T\theta=0$, we have
$$\fnorm{y-\mathds{1}_n\alpha^T-\theta}^2=\fnorm{n^{-1}\mathds{1}_n\mathds{1}_n^Ty-\mathds{1}_n\alpha^T}^2+\fnorm{y-n^{-1}\mathds{1}_n\mathds{1}_n^Ty-\theta}^2,$$
and thus the M-step (\ref{eq:M-clustering}) can be solved separately for $\alpha$ and $(\theta,\mu)$.

\subsection{A Connection to Bipartite Graph Projection}

The clustering model (\ref{eq:spike-and-slab-clustering}) involves latent variables $\mu$ that do not appear in the likelihood (\ref{eq:likelihood-clustering}). This allows us to derive an efficient EM algorithm in Section~\ref{sec:clustering-EM}. To better understand (\ref{eq:spike-and-slab-clustering}), we connect the bipartite graphical structure between $\theta$ and $\mu$ to a graphical structure on $\theta$ alone. Given a $\gamma=\{\gamma_{ij}\}$ that satisfies (\ref{eq:gamma-clustering-condition}), we call it non-degenerate if $\sum_{i=1}^n\gamma_{ij}>0$ for all $j\in[k]$. In other words, none of the $k$ clusters is empty.

\begin{proposition}\label{prop:projection-clustering}
	Let the conditional distribution of $\theta,\mu\,|\,\gamma,\sigma^2$ be specified by (\ref{eq:spike-and-slab-clustering}) with some non-degenerate $\gamma$. Then, the distribution of $\theta\,|\,\gamma,\sigma^2$ weakly converges to
	\begin{equation}
	p(\theta\,|\,\gamma,\sigma^2)\propto \prod_{1\leq i<l\leq n}\exp\left(-\frac{\lambda_{il}\|\theta_i-\theta_l\|^2}{2\sigma^2v_0}\right)\mathbb{I}\{\mathds{1}_n^T\theta=0\},\label{eq:projection-clustering}
	\end{equation}
	as $v_1\rightarrow\infty$, where $\lambda_{il}=\sum_{j=1}^k\gamma_{ij}\gamma_{lj}/n_j$ with $n_j=\sum_{i=1}^n\gamma_{ij}$ being the size of the $j$th cluster.
\end{proposition}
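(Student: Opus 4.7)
Since $\gamma$ is non-degenerate with $\sum_j \gamma_{ij}=1$ for all $i$, each index $i\in[n]$ belongs to a unique cluster, which we denote $c(i)\in[k]$, and we write $I_j=\{i:c(i)=j\}$ with $n_j=|I_j|>0$. The conditional law $p(\theta,\mu\mid\gamma,\sigma^2)$ is Gaussian on $\{\mathds{1}_n^T\theta=0\}\times\mathbb{R}^{k\times d}$, so the strategy is to integrate $\mu$ out in closed form and then send $v_1\to\infty$. This reduces the proposition to an exercise in Gaussian marginalisation together with the standard variance-decomposition identity.

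Set $a_{ij}=[v_0\gamma_{ij}+v_1(1-\gamma_{ij})]^{-1}$, $A_j=\sum_i a_{ij}$, and $\bar\theta_j^{*}=A_j^{-1}\sum_i a_{ij}\theta_i$. The exponent $-\frac{1}{2\sigma^2}\sum_{i,j} a_{ij}\|\theta_i-\mu_j\|^2$ is quadratic in $\mu$, and the sum over $i$ inside each $\mu_j$-block decouples across $j$. Completing the square gives
\[
\sum_i a_{ij}\|\theta_i-\mu_j\|^2 \;=\; A_j\|\mu_j-\bar\theta_j^{*}\|^2 \;+\; \sum_i a_{ij}\|\theta_i-\bar\theta_j^{*}\|^2,
\]
so integrating each $\mu_j$ over $\mathbb{R}^d$ contributes the $\theta$-independent constant $(2\pi\sigma^2/A_j)^{d/2}$ and leaves the marginal
\[
p(\theta\mid\gamma,\sigma^2)\;\propto\;\exp\!\left(-\frac{1}{2\sigma^2}\sum_{j=1}^k\sum_{i=1}^n a_{ij}\|\theta_i-\bar\theta_j^{*}\|^2\right)\mathbb{I}\{\mathds{1}_n^T\theta=0\}.
\]
Equivalently, one may compute the Schur complement of the joint precision to obtain the precision matrix of $\theta$ directly; the two derivations are equivalent and both work componentwise over the $d$ columns of $\theta$.

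Now send $v_1\to\infty$: $a_{ij}\to v_0^{-1}\mathbb{I}\{c(i)=j\}$, $A_j\to n_j/v_0$, and $\bar\theta_j^{*}\to \bar\theta_j:=n_j^{-1}\sum_{i\in I_j}\theta_i$. The exponent converges pointwise to $-\frac{1}{2\sigma^2 v_0}\sum_j\sum_{i\in I_j}\|\theta_i-\bar\theta_j\|^2$. The classical identity $\sum_{i\in I_j}\|\theta_i-\bar\theta_j\|^2 = n_j^{-1}\sum_{i<l,\,i,l\in I_j}\|\theta_i-\theta_l\|^2$ (applied componentwise in $d$) then rewrites the exponent as
\[
-\frac{1}{2\sigma^2 v_0}\sum_{j=1}^k\frac{1}{n_j}\sum_{\substack{i<l\\ i,l\in I_j}}\|\theta_i-\theta_l\|^2 \;=\; -\frac{1}{2\sigma^2 v_0}\sum_{1\le i<l\le n}\lambda_{il}\|\theta_i-\theta_l\|^2,
\]
since $\gamma_{ij}\gamma_{lj}=1$ iff $c(i)=c(l)=j$, so $\lambda_{il}=n_{c(i)}^{-1}$ when $c(i)=c(l)$ and $\lambda_{il}=0$ otherwise. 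This matches the stated limit.

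The main technical point is the passage from pointwise convergence of densities to weak convergence of the corresponding measures on $\Theta_{\mathds{1}_n}$. For each finite $v_1$, the marginal is a non-degenerate Gaussian on $\Theta_{\mathds{1}_n}$ with precision matrix $P_\theta(v_1)$ depending continuously on $v_1^{-1}$; I would conclude by verifying that $P_\theta(v_1)$ converges (to the matrix whose quadratic form is the one displayed above), so that the associated Gaussian measures converge weakly to the Gaussian with that limiting precision, which is precisely the measure whose density is (\ref{eq:projection-clustering}). The only subtle aspect is that the limiting precision may have a kernel of cluster-constant vectors (it becomes degenerate on the $(k-1)$-dimensional subspace of $\Theta_{\mathds{1}_n}$ spanned by cluster indicator shifts); one handles this by interpreting (\ref{eq:projection-clustering}) as a Gaussian measure with possibly degenerate covariance, consistent with the $\propto$ notation used in the statement.
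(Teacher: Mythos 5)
Your derivation is essentially the paper's proof carried out at the level of densities rather than covariance matrices. Completing the square in $\mu$ and integrating it out is exactly the Schur complement $A-BC^{-1}B^T$ of the joint precision that the paper computes after invoking Lemma \ref{lem:density-to-distribution}; your limits $a_{ij}\to v_0^{-1}\gamma_{ij}$, $A_j\to n_j/v_0$, $\bar\theta_j^{*}\to\bar\theta_j$ reproduce the paper's limit $v_0^{-1}\bigl(I_n-\gamma(\gamma^T\gamma)^{-1}\gamma^T\bigr)$ of that Schur complement, and non-degeneracy of $\gamma$ enters at the same point in both arguments (invertibility of $\gamma^T\gamma$, i.e.\ of your limiting $A_j$). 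Your route is somewhat more self-contained: where the paper re-identifies the limiting quadratic form by citing Lemma \ref{lem:density-to-distribution} a second time, you use the elementary ANOVA identity $\sum_{i\in I_j}\|\theta_i-\bar\theta_j\|^2=n_j^{-1}\sum_{i<l,\,i,l\in I_j}\|\theta_i-\theta_l\|^2$ to land directly on the pairwise form with $\lambda_{il}=\sum_j\gamma_{ij}\gamma_{lj}/n_j$. That part is correct, and it also handles general $d$ without the extra Kronecker bookkeeping the paper defers.

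The step I would push back on is your closing paragraph. You correctly observe that the limiting precision $v_0^{-1}(I_n-P_\gamma)$, with $P_\gamma=\gamma(\gamma^T\gamma)^{-1}\gamma^T$, annihilates cluster-constant vectors, leaving a $(k-1)d$-dimensional kernel inside $\Theta_{\mathds{1}_n}$ when $k\ge 2$. But a kernel in the \emph{precision} is not a ``Gaussian measure with degenerate covariance''---it is the opposite situation: the variance of $\theta$ in the between-cluster directions diverges as $v_1\to\infty$, the right-hand side of (\ref{eq:projection-clustering}) is then a non-normalizable density on $\Theta_{\mathds{1}_n}$, and the laws of $\theta$ do not converge weakly to a probability measure in those directions; only the within-cluster residual structure (equivalently, the conditional law given the cluster means) converges. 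So the convergence of precision matrices you invoke does not, by itself, deliver weak convergence here. To be fair, the paper's own proof has precisely the same blemish---it ends by inverting $V^T(I_n-P_\gamma)V$, which is singular for $k\ge 2$, and citing Lemma \ref{lem:density-to-distribution}, whose hypothesis then fails---so your argument is no weaker than the published one. But since you single this out as the one subtle point and then resolve it with the wrong dichotomy, it is worth stating the honest version: what converges is the unnormalized conditional density (equivalently, the law of $\{\theta_i-\bar\theta_{c(i)}\}$), not the law of $\theta$ itself.
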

The formula (\ref{eq:projection-clustering}) resembles (\ref{eq:Laplacian-prior}), except that $\lambda$ encodes a clustering structure.
By the definition of $\lambda_{il}$, if $\theta_i$ and $\theta_l$ are in different clusters, $\lambda_{il}=0$, and otherwise, $\lambda_{il}$ takes the inverse of the size of the cluster that both $\theta_i$ and $\theta_l$ belong to. The relation between (\ref{eq:spike-and-slab-clustering}) and (\ref{eq:projection-clustering}) can be understood from the operations of graph projection and graph lift, in the sense that the weighted graph $\lambda$, with nodes $\theta$, is a projection of $\gamma$, a bipartite graph between nodes $\theta$ and nodes $\mu$. Conversely, $\gamma$ is said to be a graph lift of $\lambda$. Observe that the clustering structure of (\ref{eq:projection-clustering}) is combinatorial, and therefore it is much easier to work with the bipartite structure in (\ref{eq:spike-and-slab-clustering}) with latent variables.

\subsection{A Connection to Gaussian Mixture Models}\label{sec:gmm}

We establish a connection to Gaussian mixture models. We first give the following result.
\begin{proposition}\label{prop:gmm}
	Let the conditional distribution of $\theta,\mu\,|\,\gamma,\sigma^2$ be specified by (\ref{eq:spike-and-slab-clustering}). Then, as $v_0\rightarrow 0$, this conditional distribution weakly converges to the distribution specified by the following sampling process: $\theta=\gamma\mu$ and $\mu\,|\,\gamma,\sigma^2$ is sampled from
	\begin{equation}
	p(\mu\,|\,\gamma,\sigma^2)\propto \prod_{1\leq j<l\leq k}\exp\left(-\frac{(n_j+n_l)\|\mu_j-\mu_l\|^2}{2\sigma^2 v_1}\right)\mathbb{I}\{\mathds{1}_n^T\gamma\mu=0\},\label{eq:clustering-marginal-mu}
	\end{equation}
	where $n_j=\sum_{i=1}^n\gamma_{ij}$ being the size of the $j$th cluster.
\end{proposition}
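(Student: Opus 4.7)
The plan is to reparametrize the joint density using the cluster assignments encoded by $\gamma$, isolate the $v_0$-dependent Gaussian block in $\theta-\gamma\mu$, and identify the surviving $\mu$-marginal as (\ref{eq:clustering-marginal-mu}). Under (\ref{eq:gamma-clustering-condition}) each row $i$ has a unique $z_i\in[k]$ with $\gamma_{iz_i}=1$, so setting $\delta = \theta - \gamma\mu$ (equivalently $\delta_i = \theta_i - \mu_{z_i}$) the exponent in (\ref{eq:spike-and-slab-clustering}) splits exactly as
\[
-\frac{1}{2\sigma^2 v_0}\sum_{i=1}^n\|\delta_i\|^2 \;-\; \frac{1}{2\sigma^2 v_1}\sum_{i=1}^n\sum_{j\neq z_i}\|\delta_i + \mu_{z_i}-\mu_j\|^2,
\]
while the constraint $\mathds{1}_n^T\theta=0$ rewrites as $\mathds{1}_n^T\gamma\mu + \mathds{1}_n^T\delta = 0$.

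Expanding the second sum yields a $\mu$-only quadratic, cross terms linear in $\delta$ with $O(v_1^{-1})$ coefficients, and a residual $(k-1)\|\delta\|^2/(2\sigma^2 v_1)$. The $\mu$-only piece is computed via the counting identity
\[
\sum_{i=1}^n\sum_{j\neq z_i}\|\mu_{z_i}-\mu_j\|^2 \;=\; \sum_{a=1}^k n_a\sum_{j\neq a}\|\mu_a-\mu_j\|^2 \;=\; \sum_{1\le j<l\le k}(n_j+n_l)\|\mu_j-\mu_l\|^2,
\]
which matches the exponent in (\ref{eq:clustering-marginal-mu}). To establish weak convergence, take any bounded continuous $g$ and write $\mathbb{E}_{v_0}[g(\theta,\mu)] = \mathbb{E}_{v_0}[g(\gamma\mu + \delta, \mu)]$. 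The leading $\|\delta\|^2/v_0$ term forces $\|\delta\| = O_p(v_0^{1/2})$, so the cross terms and $\|\delta\|^2/v_1$ residual are $o_p(1)$ in the exponent, while the constraint implies $\mathds{1}_n^T\gamma\mu = -\mathds{1}_n^T\delta = O_p(v_0^{1/2})$, concentrating $\mu$ onto $\{\mathds{1}_n^T\gamma\mu=0\}$. Dominated convergence applied to $g(\gamma\mu+\delta,\mu) \to g(\gamma\mu,\mu)$ then delivers the limit.

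The main technical obstacle is that the coupling constraint $\mathds{1}_n^T\delta + \mathds{1}_n^T\gamma\mu = 0$ prevents a literal factorization $p(\delta,\mu) = p(\delta)\,p(\mu)$ at finite $v_0$. I would handle this by parametrizing the constraint away (for instance expressing $\delta_n$ as a function of the remaining $\delta_i$ and $\mathds{1}_n^T\gamma\mu$), which makes the Gaussian integral in $\delta$ tractable and leaves an extra factor $\exp(-\|\mathds{1}_n^T\gamma\mu\|^2/(2\sigma^2 v_0 n))$ in the $\mu$-marginal. This Gaussian factor collapses to the indicator $\mathbb{I}\{\mathds{1}_n^T\gamma\mu=0\}$ as $v_0\to 0$, which is precisely the singular support appearing in (\ref{eq:clustering-marginal-mu}) and completes the identification of the limiting law.
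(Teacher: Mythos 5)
Your proposal is correct, but it takes a genuinely different route from the paper. The paper proves this as a one-paragraph corollary of Proposition \ref{prop:reduced-model}: it identifies the bipartite spike-and-slab prior (\ref{eq:spike-and-slab-clustering}) as an instance of (\ref{eq:Laplacian-prior}) whose $\gamma$-components are $\mathcal{C}_j=\{\mu_j\}\cup\{\theta_i:z_i=j\}$, so that $Z_\gamma=[\gamma;I_k]$, and then reads off the contracted graph as the complete graph on $[k]$ with edge weights $\omega_{jl}=n_j+n_l$ (each pair $(j,l)$ absorbs the $n_j$ slab edges $(\theta_i,\mu_l)$ with $z_i=j$ and the $n_l$ edges $(\theta_i,\mu_j)$ with $z_i=l$), which is exactly (\ref{eq:clustering-marginal-mu}); the analytic content (the $v_0\to 0$ limit of the covariance) is hidden inside Lemma \ref{lem:Y-Kim}. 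You instead argue from scratch: the change of variables $\delta=\theta-\gamma\mu$, the explicit splitting of the exponent into a $\|\delta\|^2/v_0$ block plus a $\mu$-only quadratic, the counting identity producing the $(n_j+n_l)$ weights, and the collapse of the Gaussian factor $\exp(-\|\mathds{1}_n^T\gamma\mu\|^2/(2\sigma^2 v_0 n))$ onto the indicator $\mathbb{I}\{\mathds{1}_n^T\gamma\mu=0\}$. Your derivation is more self-contained and makes visible precisely where the weights and the support constraint come from, at the cost of having to manage the constrained Gaussian integration over $\delta$ by hand (which you correctly flag and resolve); since everything is Gaussian, a fully rigorous version of your concentration step is cleanest if phrased as convergence of the mean and covariance of the $\mu$-marginal, which is in effect what the paper's covariance-limit lemma packages for free. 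Both arguments yield the same limit.
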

With this proposition, we can see that as $v_0\rightarrow 0$, the clustering model specified by (\ref{eq:likelihood-clustering}), (\ref{eq:prior-alpha-multi}) and (\ref{eq:spike-and-slab-clustering}) becomes
\begin{equation}
y\,|\,\alpha,\mu,\gamma,\sigma^2 \sim N(\mathds{1}_n\alpha^T+\gamma\mu,\sigma^2I_n\otimes I_d), \label{eq:gmm-likelihood}
\end{equation}
with $\alpha\,|\,\sigma^2$ distributed by (\ref{eq:prior-alpha-multi}) and $\mu\,|\,\gamma,\sigma^2$ distributed by (\ref{eq:clustering-marginal-mu}). The likelihood function (\ref{eq:gmm-likelihood}) is commonly used in Gaussian mixture models, which encodes an exact clustering structure. Therefore, with a finite $v_0$, the model specified by (\ref{eq:likelihood-clustering}), (\ref{eq:prior-alpha-multi}) and (\ref{eq:spike-and-slab-clustering}) can be interpreted as a relaxed version of the Gaussian mixture models that leads to an approximate clustering structure.

\subsection{Adaptation to the Number of Clusters}\label{sec:no-cluster}

The number $k$ in (\ref{eq:spike-and-slab-clustering}) should be understood as an upper bound of the number of clusters. Even though the EM algorithm outputs $k$ cluster centers $\{\wh{\mu}_1,...,\wh{\mu}_k\}$, these $k$ cluster centers will be automatically grouped according to their own closedness as we vary the value of $v_0$. Generally speaking, for a very small $v_0$ (the Gaussian mixture model in Section \ref{sec:gmm}, for example), $\{\wh{\mu}_1,...,\wh{\mu}_k\}$ will take $k$ vectors that are not close to each other. As we increase $v_0$, the clustering centers $\{\wh{\mu}_1,...,\wh{\mu}_k\}$ start to merge, and eventually for a sufficiently large $v_0$, they will all converge to a single vector. In short, $v_0$ parametrizes the solution path of our clustering algorithm, and on this solution path, the \textit{effective} number of clusters increases as the value of $v_0$ increases.

We illustrate this point by a simple numerical example. Consider the observation $y=(4,2,-2,4)^T\in\mathbb{R}^{4\times 1}$. We fit our clustering model with $k\in\{2,3,4\}$. Figure \ref{fig:clustering-example} visualizes the output of the EM algorithm $(\wh{\mu},\wh{\theta})$ as $v_0$ varies. It is clear that the solution path always starts at $\{\wh{\mu}_1,...,\wh{\mu}_k\}$ of different values. Then, as $v_0$ increases, the solution path has various phase transitions where the closest two $\wh{\mu}_j$'s merge. In the end, for a sufficiently large $v_0$, the clustering centers $\{\wh{\mu}_1,...,\wh{\mu}_k\}$ all merge to a common value.
\begin{figure}[!ht]
	\centering
	\includegraphics[width=1.0\textwidth]{./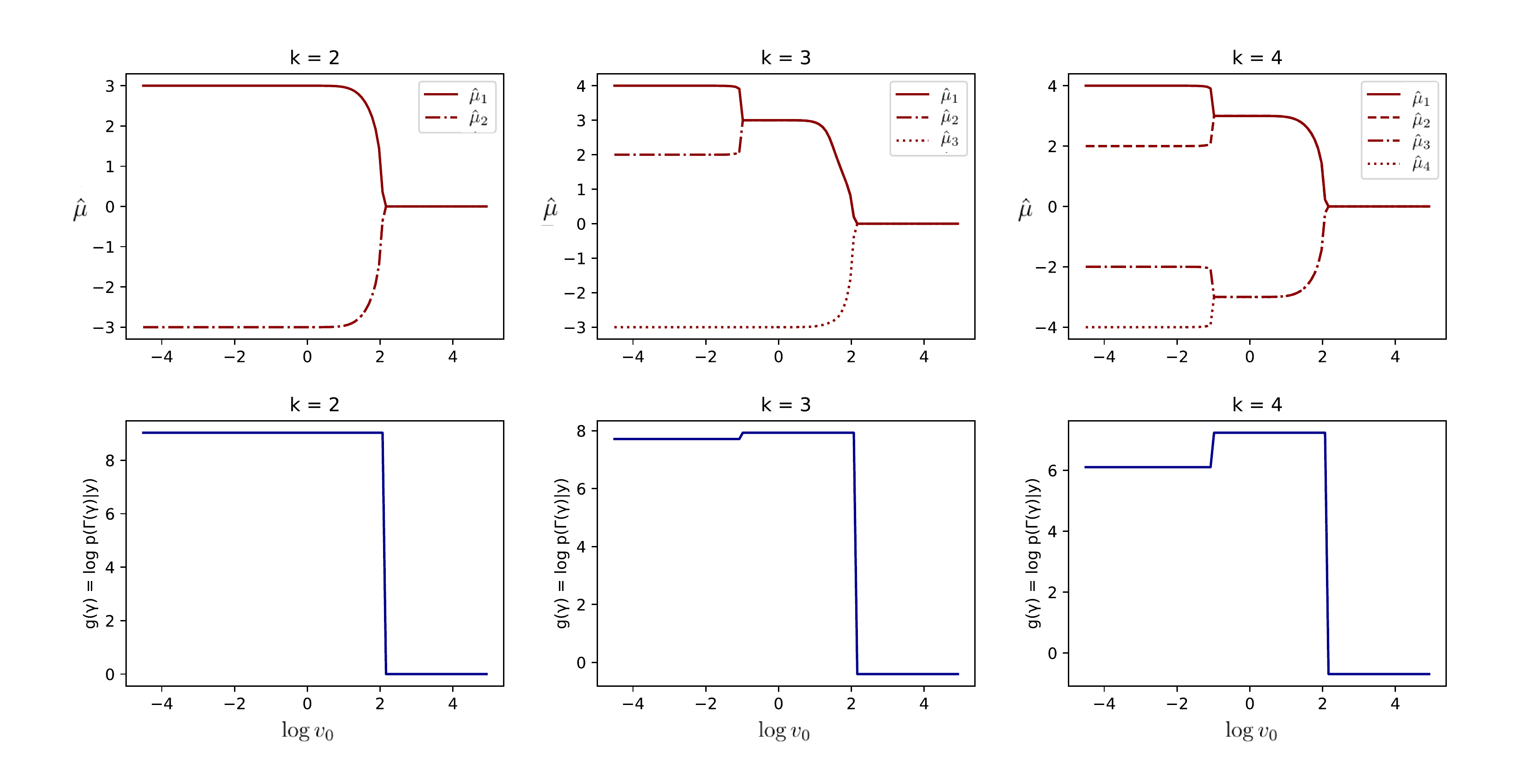}
	\caption{(\textrm{Top}) Solution paths of $\wh{\mu}$ with different choices of $k$; (\textrm{Bottom}) Model selection scores on the solution paths.}\label{fig:clustering-example} 
\end{figure}

To explain this phenomenon, it is most clear to investigate the case $k=2$. Then, the M-step (\ref{eq:M-clustering}) updates $\mu$ according to
\begin{eqnarray*}
	\mu_1^{\rm new} &=& \argmin_{\mu_1}\sum_{i=1}^n\left(\frac{q_{i1}}{v_0}+\frac{1-q_{i1}}{v_1}\right)\|\theta_i-\mu_1\|^2 \\
	\mu_1^{\rm new} &=& \argmin_{\mu_2}\sum_{i=1}^n\left(\frac{q_{i2}}{v_0}+\frac{1-q_{i2}}{v_1}\right)\|\theta_i-\mu_2\|^2.
\end{eqnarray*}
Observe that both $\mu_1^{\rm new}$ and $\mu_1^{\rm new}$ are weighted averages of $\{\theta_1,...,\theta_n\}$, and the only difference between $\mu_1^{\rm new}$ and $\mu_1^{\rm new}$ lies in the weights. According to the E-step (\ref{eq:E-clustering}),
\begin{eqnarray*}
	q_{i1} &=& \frac{\exp\left(-\frac{\|\theta_i-\mu_1\|^2}{2\sigma^2\bar{v}}\right)}{\exp\left(-\frac{\|\theta_i-\mu_1\|^2}{2\sigma^2\bar{v}}\right)+\exp\left(-\frac{\|\theta_i-\mu_2\|^2}{2\sigma^2\bar{v}}\right)} \\
	q_{i2} &=& \frac{\exp\left(-\frac{\|\theta_i-\mu_2\|^2}{2\sigma^2\bar{v}}\right)}{\exp\left(-\frac{\|\theta_i-\mu_1\|^2}{2\sigma^2\bar{v}}\right)+\exp\left(-\frac{\|\theta_i-\mu_2\|^2}{2\sigma^2\bar{v}}\right)},
\end{eqnarray*}
and we recall that $\bar{v}^{-1}=v_0^{-1}-v_1^{-1}$. Therefore, as $v_0\rightarrow\infty$, $q_{i1}\rightarrow 1/2$ and $q_{i2}\rightarrow 1/2$, which results in the phenomenon that $\mu_1^{\rm new}$ and $\mu_2^{\rm new}$ merge to the same value. The same reasoning also applies to $k\geq 2$.

\subsection{Model Selection} \label{sec:modelselectioncluster}

In this section, we discuss how to select a clustering structure from the solution path of the EM algorithm. According to the discussion in Section \ref{sec:no-cluster}, we should understand $k$ as an upper bound of the number of clusters, and the estimator of the number of clusters will be part of the output of the model selection procedure. The general recipe of our method follows the framework discussed in Section \ref{sec:model-selection}, but some nontrivial twist is required in the clustering problem. To make the presentation clear, the model selection procedure will be introduced in two parts. We will first propose our model selection score, and then we will describe a method that extracts a clustering structure from the output of the EM algorithm.

\subsubsection{The model selection score}
For any $\gamma\in\{0,1\}^{n\times k}$ that satisfies (\ref{eq:gamma-clustering-condition}), we can calculate the posterior probability $p(\gamma\,|\,y)$ with $v_0=0$. This can be done by the connection to Gaussian mixture models discussed in Section \ref{sec:gmm}. To be specific, the calculation follows the formula
$$p(\gamma\,|\,y) = \int\int\int p(y\,|\,\alpha,\mu,\gamma,\sigma^2)p(\alpha\,|\,\sigma^2) p(\mu\,|\,\gamma,\sigma^2) p(\sigma^2) p(\gamma)\,d\alpha\, d\mu\, d\sigma^2,$$
where $p(y\,|\,\alpha,\mu,\gamma,\sigma^2)$, $p(\alpha|\sigma^2)$, $p(\mu|\gamma,\sigma^2)$, $p(\sigma^2)$, and $p(\gamma)$ are specified by (\ref{eq:likelihood-clustering}), (\ref{eq:prior-alpha-multi}), (\ref{eq:clustering-marginal-mu}), (\ref{eq:prior-sigma}) and (\ref{eq:clustering-prior-gamma-uniform}). A standard calculation gives the formula
\baa \label{ms2}
p(\gamma|y)  \propto {}& \left(\frac{\nu}{\nu+n} \times \frac{\text{det}_{\gamma^T\mathds{1}_n}(\bar{L}_{\gamma})}{\text{det}_{\gamma^T\mathds{1}_n}(\bar{L}_{\gamma}+\gamma^T\gamma)}\right)^{d/2} \times \Bigg[\frac{\nu n}{\nu+n}\|n^{-1}\mathds{1}_ny\|^2 \\
{}& +\Tr\left((y-n^{-1}\mathds{1}_n\mathds{1}_n^Ty)^T(I_n-\gamma(\bar{L}_{\gamma}+\gamma^T\gamma)\gamma^T) (y-n^{-1}\mathds{1}_n\mathds{1}_n^Ty)\right)\Bigg]^{-\frac{nd+a}{2}},
\eaa
where $\bar{L}_{\gamma} = (\mathds{1}_{k\times n}\gamma+\gamma^T\mathds{1}_{n\times k}-2\gamma^T\gamma)/v_1$ is the graph Laplacian of the weighted adjacency matrix, which satisfies
$$\Tr(\mu^T\bar{L}_{\gamma}\mu)=\sum_{1\leq j<l\leq k}\frac{n_j+n_l}{v_1}\|\mu_j-\mu_l\|^2.$$
Recall that $n_j=\sum_{i=1}^n\gamma_{ij}$ is the size of the $j$th cluster.

However, the goal of the model selection is to select a clustering structure, and it is possible that different $\gamma$'s may correspond to the same clustering structure due to label permutation. To overcome this issue, we need to sum over all equivalent $\gamma$'s. Given a $\gamma\in\{0,1\}^{n\times k}$ that satisfies (\ref{eq:gamma-clustering-condition}), define a symmetric membership matrix $\Gamma(\gamma)\in\{0,1\}^{n\times n}$ by
$$\Gamma_{il}(\gamma) = \mathbb{I}\left\{\sum_{j=1}^k\gamma_{ij}\gamma_{lj}=1\right\}.$$
In other words, $\Gamma_{il}(\gamma)=1$ if and only if $i$ and $l$ are in the same cluster. It is easy to see that every clustering structure can be uniquely represented by a symmetric membership matrix. We define the posterior probability of a clustering structure $\Gamma$ by
$$p(\Gamma\,|\,y) = \sum_{\gamma\in\{\gamma:\Gamma(\gamma)=\Gamma\}}p(\gamma\,|\,y).$$
The explicit calculation of the above summation is not necessary. A shortcut can be derived from the fact that $p(\gamma\,|\,y)=p(\gamma'\,|\,y)$ if $\Gamma(\gamma)=\Gamma(\gamma')$. This immediately implies that $p(\Gamma\,|\,y)=|\Gamma|\cdot p(\gamma\,|\,y)$ for any $\gamma$ that satisfies $\Gamma(\gamma)=\Gamma$. Suppose $\Gamma$ encodes a clustering structure with $\wt{k}$ nonempty clusters, and then we have $|\Gamma|={k \choose \widetilde{k}} \wt{k}!$. This leads to the model selection score
\begin{equation}
g(\gamma) = \log p(\Gamma(\gamma)\,|\,y) = \log p(\gamma\,|\,y) + \log\left[ {k\choose \widetilde{k}} \wt{k}!\right], \label{eq:ms-score-clustering}
\end{equation}
for any $\gamma\in\{0,1\}^{n\times k}$ that satisfies (\ref{eq:gamma-clustering-condition}), and $\wt{k}$ above is calculated by
$\wt{k}=\sum_{j=1}^k\max_{1\leq i\leq n}\gamma_{ij}$,
the effective number of clusters.

\subsubsection{Extraction of clustering structures from the EM algorithm}
Let $\wh{\mu}$ and $\wh{q}$ be outputs of the EM algorithm, and we discuss how to obtain $\wh{\gamma}$ that encodes a meaningful clustering structure to be evaluated by the model selection score (\ref{eq:ms-score-clustering}). It is very tempting to directly threshold $\wh{q}$ as is done in Section \ref{sec:model-selection}. However, as has been discussed in Section \ref{sec:no-cluster}, the solution paths of $\{\mu_1,...,\mu_k\}$ merge at some values of $v_0$. Therefore, we should treat the clusters whose clustering centers merge together as a single cluster.

Given $\wh{\mu}_1,...,\wh{\mu}_k$ output by the M-step, we first merge $\wh{\mu}_j$ and $\wh{\mu}_l$ whenever $\|\wh{\mu}_j-\wh{\mu}_l\|\leq \epsilon$. The number $\epsilon$ is taken as $10^{-8}$, the square root of the machine precision, in our code. This forms a partition $[k]=\cup_{l=1}^{\wh{k}}\mathcal{G}_l$ for some $\wh{k}\leq k$. Then, by taking average within each group, we obtain a reduced collection of clustering centers $\wt{\mu}_1,...,\wt{\mu}_{\wh{k}}$. In other words, $\wt{\mu}_l=|\mathcal{G}_l|^{-1}\sum_{j\in \mathcal{G}_l}\wh{\mu}_j$.

The $\wh{q}\in[0,1]^{n\times k}$ output by the E-step should also be reduced to $\wt{q}\in[0,1]^{n\times \wh{k}}$ as well. Note that $\wh{q}_{ij}$ is the estimated posterior probability that the $i$th node belongs to the $j$th cluster. This means that $\wt{q}_{il}$ is the estimated posterior probability that the $i$th node belongs to the $l$th reduced cluster. An explicit formua is given by $\wt{q}_{il}=\sum_{j\in \mathcal{G}_l}\wh{q}_{ij}$.

With the reduced posterior probability $\wt{q}$, we simply apply thresholding to obtain $\wh{\gamma}$. We have $(\wh{\gamma}_{i1},...,\wh{\gamma}_{ik})=e_j$ if $j=\argmax_{1\leq l\leq\wh{k}}\wt{q}_{il}$. Recall that $e_j$ is a vector with $1$ on the $j$th entry and $0$ elsewhere. Note that according to this construction, we always have $\wh{\gamma}_{ij}=0$ whenever $j> \wh{k}$. This does not matter, because the model selection score (\ref{eq:ms-score-clustering}) does not depend on the clustering labels. Finally, the $\wh{\gamma}$ constructed according to the above procedure will be evaluated by $g(\wh{\gamma})$ defined by (\ref{eq:ms-score-clustering}).

In the toy example with four data points $y=(4,2,-2,4)^T$, the model selection score is computed along the solution path. According to Figure \ref{fig:clustering-example}, the model selection procedure suggests that a clustering structure with two clusters $\{4,2\}$ and $\{-2,-4\}$ is the most plausible one. We also note that the curve of $g(\gamma)$ has sharp phase transitions whenever the solution paths $\mu$ merge.

\section{Extensions with Graph Algebra}\label{sec:algebra}

In many applications, it is useful to have a model that imposes both row and column structures on a high-dimensional matrix $\theta\in\mathbb{R}^{p_1\times p_2}$. We list some important examples below.
\begin{enumerate}
	\item \textit{Biclustering.} In applications such as gene expression data analysis, one needs to cluster both samples and features. This task imposes a clustering structure for both rows and columns of the data matrix \citep{hartigan1972direct,cheng2000biclustering}.
	\item \textit{Block sparsity.} In problems such as planted clique detection \citep{feige2000finding} and submatrix localization \citep{hajek2017submatrix}, the matrix can be viewed as the sum of a noise background plus a submatrix of signals with unknown locations. Equivalently, it can be modeled by simultaneous row and column sparsity \citep{ma2015volume}.
	\item \textit{Sparse clustering.} Suppose the data matrix exhibits a clustering structure for its rows and a sparsity structure for its columns, then we have a sparse clustering problem \citep{witten2010framework}. For this task, we need to select nonzero column features in order to accurately cluster the rows.
\end{enumerate}
For the problems listed above, the row and column structures can be modeled by graphs $\gamma_1$ and $\gamma_2$. Then, the structure of the matrix $\theta$ is induced by a notion of graph product of $\gamma_1$ and $\gamma_2$. In this section, we introduce tools from graph algebra including Cartesian product and Kronecker product to build complex structure from simple components.

We first introduce the likelihood of the problem. To cope with many useful models, we assume that the observation can be organized as a matrix $y\in\mathbb{R}^{n_1\times n_2}$. Then, the specific setting of a certain problem can be encoded by design matrices $X_1\in\mathbb{R}^{n_1\times p_1}$ and $X_2\in\mathbb{R}^{n_2\times p_2}$. The likelihood is defined by
\begin{equation}
y\,|\,\alpha,\theta,\sigma^2\sim N(X_1(\alpha w+\theta)X_2^T,\sigma^2 I_{n_1}\otimes I_{n_2}).\label{eq:likelihood-product2}
\end{equation}
The matrix $w\in\mathbb{R}^{p_1\times p_2}$ is assumed to have rank one, and can be decomposed as $w=w_1w_2^T$ for some $w_1\in\mathbb{R}^{p_1}$ and $w_2\in\mathbb{R}^{p_2}$. The prior distribution of the scalar is simply given by
\begin{equation}
\alpha\,|\,\sigma^2\sim N(0,\sigma^2/\nu).\label{eq:alpha-product}
\end{equation}
We then need to build prior distributions of $\theta$ that is supported on $\Theta_w=\{\theta\in\mathbb{R}^{p_1\times p_2}:\Tr(w\theta^T)=0\}$ using Cartesian and Kronecker products.

\subsection{Cartesian Product}\label{sec:cartesian}

We start with the definition of the Cartesian product of two graphs.
\begin{definition}\label{def:cartesian}
	Given two graphs $G_1=(V_1,E_1)$ and $G_2=(V_2,E_2)$, their Cartesian product $G=G_1\,\square\, G_2$ is defined with the vertex set $V_1\times V_2$. Its edge set contains $((x_1,x_2),(y_1,y_2))$ if and only if $x_1=y_1$ and $(x_2,y_2)\in E_2$ or $(x_1,y_1suno)\in E_1$ and $x_2=y_2$.
\end{definition}

According to the definition, it can be checked that for two graphs of sizes $p_1$ and $p_2$, the adjacency matrix, the Laplacian and the incidence matrix of the Cartesian product enjoy the relations
\begin{eqnarray*}
	A_{1\,\square\, 2} &=& A_2\otimes I_{p_1} + I_{p_2}\otimes A_1,\\
	L_{1\,\square\, 2} &=& L_2\otimes I_{p_1} + I_{p_2}\otimes L_1,\\
	D_{1\,\square\, 2} &=& [D_2\otimes I_{p_1};I_{p_2}\otimes D_1].
\end{eqnarray*}

Given graphs $\gamma_1$ and $\gamma_2$ that encode row and column structures of $\theta$, we introduce the following prior distribution
\begin{eqnarray}
\label{eq:cartesian-laplacian} p(\theta\,|\,\gamma_1,\gamma_2,\sigma^2) &\propto& \prod_{(i,j)\in E_1}\exp\left(-\frac{\|\theta_{i*}-\theta_{j*}\|^2}{2\sigma^2[v_0\gamma_{1,ij}+v_1(1-\gamma_{1,ij})]}\right) \\
\nonumber && \times \prod_{(k,l)\in E_2}\exp\left(-\frac{\|\theta_{*k}-\theta_{*l}\|^2}{2\sigma^2[v_0\gamma_{2,kl} +v_1(1-\gamma_{2,kl})]}\right) \mathbb{I}\{\theta\in\Theta_w\}.
\end{eqnarray}
Here, $E_1$ and $E_2$ are the base graphs of the row and column structures.
According to its form, the prior distribution (\ref{eq:cartesian-laplacian}) models both pairwise relations of rows and those of columns based on $\gamma_1$ and $\gamma_2$, respectively. To better understand (\ref{eq:cartesian-laplacian}), we can write it in the following equivalent form,
\begin{equation}
p(\theta\,|\,\gamma_1,\gamma_2,\sigma^2)\propto \exp\left(-\frac{1}{2\sigma^2}{\sf vec}(\theta)^T\left(L_{\gamma_2}\otimes I_{p_1} + I_{p_2}\otimes L_{\gamma_1}\right){\sf vec}(\theta)\right)\mathbb{I}\{\theta\in\Theta_w\},\label{eq:cartesian-laplacian-vec}
\end{equation}
where $L_{\gamma_1}\in\mathbb{R}^{p_1\times p_1}$ and $L_{\gamma_2}\in\mathbb{R}^{p_2\times p_2}$ are Laplacian matrices of the weighted graphs $\{v_0\gamma_{1,ij}+v_1(1-\gamma_{1,ij})\}$ and $\{v_0\gamma_{2,kl} +v_1(1-\gamma_{2,kl})\}$, respectively. Therefore, by Definition \ref{def:cartesian}, $p(\theta\,|\,\gamma_1,\gamma_2,\sigma^2)$ is a spike-and-slab Laplacian prior $p(\theta\,|\,\gamma,\sigma^2)$ defined in (\ref{eq:Laplacian-prior}) with $\gamma=\gamma_1\,\square\,\gamma_2$, and the well-definedness is guaranteed by Proposition \ref{prop:Laplacian}.

To complete the Bayesian model, the distribution of $(\gamma_1,\gamma_2,\sigma^2)$ are specified by
\begin{eqnarray}
\label{eq:cart-gamma}\gamma_1,\gamma_2\,|\,\eta_1,\eta_2 &\sim& \prod_{(i,j)\in E_1}\eta_1^{\gamma_{1,ij}}(1-\eta_1)^{1-\gamma_{1,ij}}\prod_{(i,j)\in E_2}\eta_2^{\gamma_{2,kl}}(1-\eta_2)^{1-\gamma_{2,kl}}, \\
\label{eq:cart-eta}\eta_1,\eta_2 &\sim& \text{Beta}(A_1,B_1)\bigotimes \text{Beta}(A_2,B_2), \\
\label{eq:cart-sigma}\sigma^2 &\sim& \text{InvGamma}(a/2,b/2).
\end{eqnarray}
We remark that it is possible to constrain $\gamma_1$ and $\gamma_2$ in some subsets $\Gamma_1$ and $\Gamma_2$ like (\ref{eq:prior-gamma}). This extra twist is useful for a biclustering model that will be discussed in Section \ref{sec:biclust}.

Note that in general the base graph $G=G_1\,\square\, G_2$ is not a tree, and the derivation of the EM algorithm follows a similar argument in Section \ref{sec:alg-non}. Using the same argument in (\ref{eq:lower}), we lower bound $\log\sum_{T\in\text{spt}(G)}\sum_{e\in T}[v_0^{-1}\gamma_e+v_1^{-1}(1-\gamma_e)]$ by
\begin{equation}
\sum_{e\in E_1\,\square\, E_2}r_e\log[v_0^{-1}\gamma_e+v_1^{-1}(1-\gamma_e)]+\log|\text{spt}(G)|. \label{eq:lower-cartesion}
\end{equation}
Since
$$E_1\,\square\, E_2=\left\{((i,k),(j,k)):(i,j)\in E_1, k\in V_2\right\}\cup\left\{((i,k),(i,l)): i\in V_1,(k,l)\in E_2\right\},$$
and $\gamma=\gamma_1\,\square\, \gamma_2$,
we can write (\ref{eq:lower-cartesion}) as
\begin{eqnarray*}
	&& \sum_{(i,j)\in E_1}\sum_{k=1}^{p_2}r_{(i,k),(j,k)}\log[v_0^{-1}\gamma_{1,ij}+v_1^{-1}(1-\gamma_{1,ij})] \\
	&& + \sum_{(k,l)\in E_2}\sum_{i=1}^{p_1}r_{(i,k),(i,l)}\log[v_0^{-1}\gamma_{2,kl}+v_1^{-1}(1-\gamma_{2,kl})] \\
	&=& \sum_{(i,j)\in E_1}r_{1,ij}\log[v_0^{-1}\gamma_{1,ij}+v_1^{-1}(1-\gamma_{1,ij})]  + \sum_{(k,l)\in E_2}r_{2,kl}\log[v_0^{-1}\gamma_{2,kl}+v_1^{-1}(1-\gamma_{2,kl})],
\end{eqnarray*}
where
$$r_{1,ij}=\sum_{k=1}^{p_2}r_{(i,k),(j,k)}=\frac{1}{|\text{spt}(G)|}\sum_{k=1}^{p_2}\sum_{T\in\text{spt}(G)}\mathbb{I}\{((i,k),(j,k))\in T\},$$
and $r_{2,kl}$ is similarly defined.

Using the lower bound derived above, it is direct to derive the an EM algorithm, which consists of the following iterations,
\begin{eqnarray}
\label{eq:cartesian-E1} q_{1,ij}^{\rm new} &=& \frac{\eta_1 v_0^{-{r_{1,ij}}/{2}}e^{-\|\theta_{i*}-\theta_{j*}\|^2/2\sigma^2v_0 } }{\eta_1 v_0^{-{r_{1,ij}}/{2}}e^{-\|\theta_{i*}-\theta_{j*}\|^2 / 2\sigma^2v_0}+(1-\eta_1) v_1^{-{r_{1,ij}}/{2}}e^{-\|\theta_{i*}-\theta_{j*}\|^2 / 2\sigma^2v_1}}, \\
\label{eq:cartesian-E2} q_{2,kl}^{\rm new} &=& \frac{\eta_2 v_0^{-{r_{2,kl}}/{2}}e^{-\|\theta_{*k}-\theta_{*l}\|^2 / 2\sigma^2v_0}}{\eta_2 v_0^{-{r_{2,kl}}/{2}}e^{-\|\theta_{*k}-\theta_{*l}\|^2 / 2\sigma^2v_0}+(1-\eta_2) v_1^{-{r_{2,kl}}/{2}}e^{-\|\theta_{*k}-\theta_{*l}\|^2 / 2\sigma^2v_1}}, \\
\label{eq:cartesian-M} (\alpha^{\rm new}, \theta^{\rm new}) &=& \argmin_{\alpha,\theta\in\Theta_w}F(\alpha,\theta;q_1^{\rm new},q_2^{\rm new}), \\
\nonumber (\sigma^2)^{\rm new} &=& \frac{F(\alpha^{\rm new},\theta^{\rm new};q_1^{\rm new},q_2^{\rm new})+b}{n_1n_2+p_1p_2+a+2}, \\
\nonumber \eta_1^{\rm new} &=& \frac{A_1-1+\sum_{(i,j)\in E_1}q_{1,ij}^{\rm new}}{A_1+B_1-2+m_1}, \\
\label{eq:cartesian-M-eta2} \eta_2^{\rm new} &=& \frac{A_2-1+\sum_{(k,l)\in E_2}q_{2,ij}^{\rm new}}{A_2+B_2-2+m_2}.
\end{eqnarray}
The definition of the function $F(\alpha,\theta;q_1,q_2)$ is given by
\ba
F(\alpha,\theta;q_1,q_2)=\fnorm{y-X_1(\alpha w+\theta)X_2^T}^2+\nu\alpha^2+{\sf vec}(\theta)^T\left(L_{q_2}\otimes I_{p_1} + I_{p_2}\otimes L_{q_1}\right){\sf vec}(\theta)
\ea

Though the E-steps (\ref{eq:cartesian-E1}) and (\ref{eq:cartesian-E2}) are straightforward, the M-step (\ref{eq:cartesian-M}) is a quadratic programming of dimension $p_1p_2$, which may become the computational bottleneck of the EM algorithm when the size of the problem is large. We will introduce a Dykstra-like algorithm to solve (\ref{eq:cartesian-M}) in Appendix \ref{app:imp}.

The Cartesian product model is useful for simultaneous learning the row structure $\gamma_1$ and the column structure and $\gamma_2$ of the coefficient matrix $\theta$. Note that when $X_1 = X$, $X_2 = I_d$, and $E_2=\varnothing$, the Cartesian product model becomes the multivariate regression model described in Section~\ref{sec:multivariate}. In this case, the model only regularizes the row structure of $\theta$. Another equally interesting example is obtained when $X_1 = X$, $X_2 = I_d$, and $E_1=\varnothing$. In this case, the model only regularizes the column structure of $\theta$, and can be interpreted as multitask learning with task clustering.

\subsection{Kronecker Product}\label{sec:kronecker}

The Kronecker product of two graphs is defined below.
\begin{definition}\label{def:kronecker}
	Given two graphs $G_1=(V_1,E_1)$ and $G_2=(V_2,E_2)$, their Kronecker product $G=G_1\otimes G_2$ is defined with the vertex set $V_1\times V_2$. Its edge set contains $((x_1,x_2),(y_1,y_2))$ if and only if $(x_1,y_1)\in E_1$ and $(x_2,y_2)\in E_2$.
\end{definition}

It is not hard to see that the adjacency matrix of two graphs has the formula $A_{1\otimes 2}=A_1\otimes A_2$, which gives the name of Definition \ref{def:kronecker}. The prior distribution of $\theta$ given row and column graphs $\gamma_1$ and $\gamma_2$ that we discuss in this subsection is
\begin{equation}
p(\theta\,|\,\gamma_1,\gamma_2,\sigma^2)\propto \prod_{(i,j)\in E_1}\prod_{(k,l)\in E_2}\exp\left(-\frac{(\theta_{ik}-\theta_{jl})^2}{2\sigma^2[v_0\gamma_{1,ij}\gamma_{2,kl}+v_1(1-\gamma_{1,ij}\gamma_{2,kl})]}\right)\mathbb{I}\{\theta\in\Theta_w\}.\label{eq:kronecker-laplacian}
\end{equation}
Again, $E_1$ and $E_2$ are the base graphs of the row and column structures.
According to its form, the prior imposes a nearly block structure on $\theta$ based on the graphs $\gamma_1$ and $\gamma_2$. Moreover, $p(\theta\,|\,\gamma_1,\gamma_2,\sigma^2)$ can be viewed as a spike-and-slab Laplacian prior $p(\theta\,|\,\gamma,\sigma^2)$ defined in (\ref{eq:Laplacian-prior}) with $\gamma=\gamma_1\otimes\gamma_2$. The distribution of $(\gamma_1,\gamma_2,\sigma^2)$ follows the same specification in (\ref{eq:cart-gamma})-(\ref{eq:cart-sigma}). 

To derive an EM algorithm, we follow the strategy in Section \ref{sec:alg-non} and lower bound $\log\sum_{T\in\text{spt}(G)}\sum_{e\in T}[v_0^{-1}\gamma_e+v_1^{-1}(1-\gamma_e)]$ by
\begin{equation}
\sum_{(i,j)\in E_1}\sum_{(k,l)\in E_2}r_{(i,k),(j,l)}\log[v_0^{-1}\gamma_{1,ij}\gamma_{2,kl}+v_1^{-1}(1-\gamma_{1,ij}\gamma_{2,kl})].\label{eq:lower-kronecker}
\end{equation}
Unlike the Cartesian product, the Kronecker product structure has a lower bound (\ref{eq:lower-kronecker}) that is not separable with respect to $\gamma_1$ and $\gamma_2$. This makes the E-step combinatorial, and does not apply to a large-scale problem. To alleviate this computational barrier, we consider a variational EM algorithm that finds the best posterior distribution of $\gamma_1,\gamma_2$ that can be factorized. In other words, instead of maximizing over all possible distribution $q$, we  maximize over the mean-filed class $q\in\mathcal{Q}$, with $\mathcal{Q}=\{q(\gamma_1,\gamma_2)=q_1(\gamma_1)q_2(\gamma_2):q_1,q_2\}$. Then, the objective becomes
\ba
\max_{q_1,q_2}\max_{\alpha,\theta\in\Theta_2,\delta,\eta,\sigma^2}\sum_{\gamma_1,\gamma_2}q_1(\gamma_1)q_2(\gamma_2)\log\frac{\wt{p}(y,\alpha,\theta,\delta,\gamma_1,\gamma_2,\eta,\sigma^2)}{q_1(\gamma_1)q_2(\gamma_2)},
\ea
where $\wt{p}(y,\alpha,\theta,\delta,\gamma_1,\gamma_2,\eta,\sigma^2)$ is obtained by replacing $p(\theta\,|\,\gamma_1,\gamma_2,\sigma^2)$ with $\wt{p}(\theta\,|\,\gamma_1,\gamma_2,\sigma^2)$ in the joint distribution ${p}(y,\alpha,\theta,\delta,\gamma_1,\gamma_2,\eta,\sigma^2)$. Here, $\log\wt{p}(\theta\,|\,\gamma_1,\gamma_2,\sigma^2)$ is a lower bound for $\log{p}(\theta\,|\,\gamma_1,\gamma_2,\sigma^2)$ with (\ref{eq:lower-kronecker}).
The E-step of the variational EM is
\ba
q_1^{\rm new}(\gamma_1) {}& \propto \exp\left(\sum_{\gamma_2}q_2(\gamma_2)\log \wt{p}(y,\alpha,\theta,\delta,\gamma_1,\gamma_2,\eta,\sigma^2)\right), \\
q_2^{\rm new}(\gamma_2) {}& \propto \exp\left(\sum_{\gamma_1}q_1^{\rm new}(\gamma_1)\log \wt{p}(y,\alpha,\theta,\delta,\gamma_1,\gamma_2,\eta,\sigma^2)\right).
\ea
After some simplification, we have
\begin{align}
\label{eq:kronecker-E1} q_{1,ij}^{\rm new} &=& \left( {1+\frac{(1-\eta_1)\prod_{(k,l)\in E_2}\left(v_1^{-{r_{(i,k),(j,l)}}/{2}}e^{-(\theta_{ik}-\theta_{jl})^2 / 2\sigma^2 v_1}\right)^{q_{2,kl}}}{\eta_1\prod_{(k,l)\in E_2}\left(v_0^{-{r_{(i,k),(j,l)}}/{2}}e^{-(\theta_{ik}-\theta_{jl})^2 / 2\sigma^2 v_0}\right)^{q_{2,kl}}}} \right)^{-1}, \\
\label{eq:kronecker-E2} q_{2,kl}^{\rm new} &=& \left( {1+\frac{(1-\eta_2)\prod_{(i,j)\in E_1}\left(v_1^{-{r_{(i,k),(j,l)}}/{2}}e^{-(\theta_{ik}-\theta_{jl})^2 / 2\sigma^2 v_1}\right)^{q^{\rm new}_{1,kl}}}{\eta_2\prod_{(i,j)\in E_1}\left(v_0^{-{r_{(i,k),(j,l)}}/{2}}e^{-(\theta_{ik}-\theta_{jl})^2 / 2\sigma^2 v_0}\right)^{q^{\rm new}_{1,kl}}}} \right)^{-1}.
\end{align}

The M-step can be derived in a standard way, and it has the same updates as in (\ref{eq:cartesian-M})-(\ref{eq:cartesian-M-eta2}), with a new definition of $F(\alpha,\theta;q_1,q_2)$ given by
\ba
F(\alpha,\theta;q_1,q_2) = {}&  \|y-X_1(\alpha w+\theta)X_2\|^2+\nu\alpha^2 \\
{}&  +\sum_{(i,j)\in E_1}\sum_{(k,l)\in E_2}\left(\frac{q_{1,ij}q_{2,kl}}{v_0}+\frac{1-q_{1,ij}q_{2,kl}}{v_1}\right)(\theta_{ik}-\theta_{jl})^2.
\ea

\subsection{Applications in Biclustering} \label{sec:biclust}

When both row and column graphs encode clustering structures discussed in Section \ref{sec:clustering-model}, we have the biclustering model. In this section, we discuss both biclustering models induced by Kronecker and Cartesian products. We start with a special form of the likelihood (\ref{eq:likelihood-product2}), which is given by
\ba
y\,|\,\alpha,\theta,\sigma^2\sim N(\alpha \mathds{1}_{n_1}\mathds{1}_{n_2}^T+ \theta, \sigma^2 I_{n_1}\otimes I_{n_2}),
\ea
and the prior distribution on $\alpha$ is given by (\ref{eq:alpha-product}). The prior distribution on $\theta$ will be discussed in two cases.

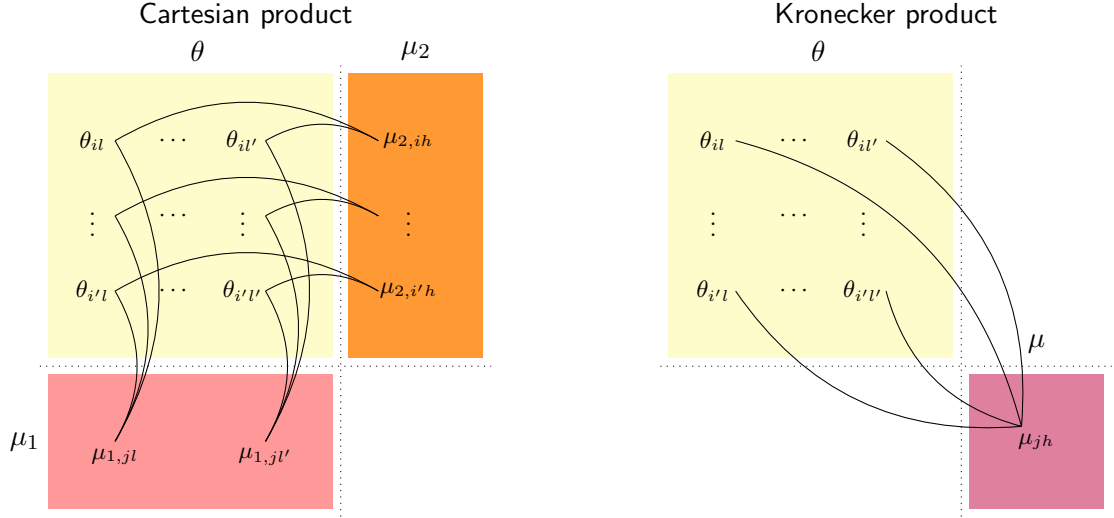
\begin{figure}[!h]
	\centering
	\centerline{
		\begin{minipage}{0.55\textwidth}
			\centering
			{\sf Cartesian product}
			\begin{tikzpicture}[every node/.style={minimum size=0.1cm-\pgflinewidth, outer sep=0pt}]
			\draw[dotted, step=10cm,color=black] (-4,4) grid (2,-2);
			\node at (-1.9,4.2) {$\bf \theta$};
			\node at (1.0,4.2) {$\mu_2$};
			\node at (-4.2,-1.0) {$\mu_1$};
			\node[fill=yellow, opacity = 0.2, text opacity = 1, minimum size = 3.8cm-\pgflinewidth] at (-2,2) {};
			\node[fill=orange,  opacity = 0.8, text opacity = 1, minimum size = 3.8cm-\pgflinewidth, minimum width = 1.8cm-\pgflinewidth] at (1,2) {};
			\node[fill=red,  opacity = 0.4, text opacity = 1, minimum size = 3.8cm-\pgflinewidth, minimum height = 1.8cm-\pgflinewidth] at (-2,-1) {};
			\node at (-3.3, 3) {\footnotesize $\theta_{il}$}; \node at (-2.2, 3) {\footnotesize $\cdots$}; 
			\node at (-1.3, 3) {\footnotesize $\theta_{il'}$}; \node at (0.9, 3) {\footnotesize $\mu_{2,ih}$}; 
			\node at (-3.3, 2) {\footnotesize $\vdots$}; \node at (-2.2, 2) {\footnotesize $\cdots$}; 
			\node at (-1.3, 2) {\footnotesize $\vdots$}; \node at (0.9, 2) {\footnotesize $\vdots$}; 
			\node at (-3.3, 1) {\footnotesize $\theta_{i'l}$}; \node at (-2.2, 1) {\footnotesize $\cdots$}; 
			\node at (-1.3, 1) {\footnotesize $\theta_{i'l'}$}; \node at (0.9, 1) {\footnotesize $\mu_{2,i'h}$};
			\node at (-3, -1.2) {\footnotesize $\mu_{1,jl}$}; \node at (-1, -1.2) {\footnotesize $\mu_{1,jl'}$};
			\path[bend left]  (-3, 3) edge (1/2,3); \path[bend left]  (-1, 3) edge (1/2,3);
			\path[bend left]  (-3, 2) edge (1/2,2); \path[bend left]  (-1, 2) edge (1/2,2);
			\path[bend left]  (-3, 1) edge (1/2,1); \path[bend left]  (-1, 1) edge (1/2,1);
			\path[bend left]  (-3, 3) edge (-3,-1); \path[bend left]  (-3, 1) edge (-3,-1);
			\path[bend left]  (-1, 3) edge (-1,-1); \path[bend left]  (-1, 1) edge (-1,-1);
			\path[bend left]  (-3, 2) edge (-3,-1); \path[bend left]  (-1, 2) edge (-1,-1);
			\end{tikzpicture}
		\end{minipage}
		\begin{minipage}{0.55\textwidth}
			\centering
			{\sf Kronecker product}
			\begin{tikzpicture}[every node/.style={minimum size=0.1cm-\pgflinewidth, outer sep=0pt}]
			\draw[dotted, step=10cm,color=black] (-4,4) grid (2,-2);
			\node at (-1.9,4.2) {$\theta$};
			\node at (1,0.3) {$\mu$};
			\node[fill=yellow, opacity = 0.2, text opacity = 1, minimum size = 3.8cm-\pgflinewidth] at (-2,2) {};
			\node[fill=purple,  opacity = 0.5, text opacity = 1, minimum size = 1.8cm-\pgflinewidth] at (1,-1) {};
			\node at (-3.3, 3) {\footnotesize $\theta_{il}$}; \node at (-1.3, 3) {\footnotesize $\theta_{il'}$};
			\node at (-3.3, 1) {\footnotesize $\theta_{i'l}$}; \node at (-1.3, 1) {\footnotesize $\theta_{i'l'}$};
			\node at (-2.2, 3) {\footnotesize $\cdots$}; \node at (-2.2, 2) {\footnotesize $\cdots$}; 
			\node at (-3.3, 2) {\footnotesize $\vdots$};\node at (-1.3, 2) {\footnotesize $\vdots$};
			\node at (1, -1) {\footnotesize $\mu_{jh}$}; \node at (-2.2, 1) {\footnotesize $\cdots$};
			\path[bend left]  (-3, 3) edge (0.8,-0.8); \path[bend left]  (-1, 3) edge (0.8,-0.8);
			\path[bend right]  (-3, 1) edge (0.8,-0.8); \path[bend right]  (-1,1) edge (0.8,-0.8);
			\end{tikzpicture}
		\end{minipage}
	}
	\caption{Structure diagrams for the two biclustering methods. The Cartesian product biclustering model \textrm{(Left)} and the Kronecker product biclustering model \textrm{(Right)} have different latent variables and base graphs. While the Cartesian product models the row and column clustering structures by separate latent variable matrices $\mu_1\in\mathbb{R}^{k_1\times n_2}$ and $\mu_2\in\mathbb{R}^{n_1\times k_2}$, the Kronecker product directly models the checkerboard structure by a single latent matrix $\mu\in\mathbb{R}^{k_1\times k_2}$.}
	\label{fig:cart_vs_kron}
\end{figure}

\subsubsection{Cartesian product biclustering model} Let $k_1\in[n_1]$ and $k_2\in[n_2]$ be upper bounds of the numbers of row and column clusters, respectively. We introduce two latent matrices $\mu_1\in\mathbb{R}^{k_1\times n_2}$ and $\mu_2\in\mathbb{R}^{n_1\times k_2}$ that serve as row and column clustering centers. The prior distribution is then specified by
\ba
p(\theta,\mu_1,\mu_2\,|\,\gamma_1,\gamma_2,\sigma^2) \propto {}& \prod_{i=1}^{n_1}\prod_{j=1}^{k_1}\exp\left(-\frac{\|\theta_{i*}-\mu_{1,j*}\|^2}{2\sigma^2[v_0\gamma_{1,ij}+v_1(1-\gamma_{1,ij})]}\right) \\
{}& \times \prod_{l=1}^{n_2}\prod_{h=1}^{k_2}\exp\left(-\frac{\|\theta_{*l}-\mu_{2,*h}\|^2}{2\sigma^2[v_0\gamma_{2,lh}+v_1(1-\gamma_{2,lh})]}\right)\mathbb{I}\{\mathds{1}_{n_1}^T\theta\mathds{1}_{n_2}=0\},
\ea
which can be regarded as an extension of (\ref{eq:spike-and-slab-clustering}) in the form of (\ref{eq:cartesian-laplacian}). The prior distributions on $\gamma_1$ and $\gamma_2$ are independently specified by (\ref{eq:clustering-prior-gamma-uniform}) with $(k,n)$ replaced by $(k_1,n_1)$ and $(k_2,n_2)$. Finally, $\sigma^2$ follows the inverse Gamma prior (\ref{eq:prior-sigma}).

We follow the framework of Section \ref{sec:alg-non}. The derivation of the EM algorithm requires lower bounding $\log\sum_{T\in\text{spt}(G)}\sum_{e\in T}[v_0^{-1}\gamma_e+v_1^{-1}(1-\gamma_e)]$. Using the same argument in Section \ref{sec:cartesian}, we have the following lower bound
\baa
\sum_{i=1}^{n_1}\sum_{j=1}^{k_1}r_{1,ij}\log[v_0^{-1}\gamma_{1,ij}+v_1^{-1}(1-\gamma_{1,ij})]  + \sum_{l=1}^{n_2}\sum_{h=1}^{k_2}r_{2,lh}\log[v_0^{-1}\gamma_{2,lh}+v_1^{-1}(1-\gamma_{2,lh})].\label{eq:lower-cartesian-biclustering}
\eaa
By the symmetry of the complete bipartite graph, $r_{1,ij}$ is a constant that does not depend on $(i,j)$. Then use the same argument in (\ref{eq:audi-r8})-(\ref{eq:amg-gt}), and we obtain the fact that $\sum_{i=1}^{n_1}\sum_{j=1}^{k_1}r_{1,ij}\log[v_0^{-1}\gamma_{1,ij}+v_1^{-1}(1-\gamma_{1,ij})]$ is independent of $\{\gamma_{1,ij}\}$, and the same conclusion also applies to the second term of (\ref{eq:lower-cartesian-biclustering}).

Since the lower bound (\ref{eq:lower-cartesian-biclustering}) does not dependent on $\gamma_1,\gamma_2$, the determinant factor in the density function $p(\theta,\mu_1,\mu_2\,|\,\gamma_1,\gamma_2,\sigma^2)$ does not play any role in the derivation of the EM algorithm. With some standard calculations, the E-step is given by
\ba
q_{1,ij}^{\rm new} {}&= \frac{\exp\left(-\frac{\|\theta_{i*}-\mu_{1,j*}\|^2}{2\sigma^2\bar{v}}\right)}{\sum_{u=1}^{k_1}\exp\left(-\frac{\|\theta_{i*}-\mu_{1,u*}\|^2}{2\sigma^2\bar{v}}\right)}, \quad q_{1,lh}^{\rm new} = \frac{\exp\left(-\frac{\|\theta_{*l}-\mu_{2,*h}\|^2}{2\sigma^2\bar{v}}\right)}{\sum_{v=1}^{k_2}\exp\left(-\frac{\|\theta_{*l}-\mu_{2,*v}\|^2}{2\sigma^2\bar{v}}\right)},
\ea
where $\bar{v}^{-1}=v_0^{-1}-v_1^{-1}$. The M-step is given by
\ba
(\alpha^{\rm new}, \theta^{\rm new},\mu_1^{\rm new},\mu_2^{\rm new}) {}&= \argmin_{\alpha, \mathds{1}_{n_1}^T\theta\mathds{1}_{n_2}=0,\mu_1,\mu_2}F(\alpha,\theta,\mu_1,\mu_2;q_1^{\rm new},q_2^{\rm new}), \\
\nonumber (\sigma^2)^{\rm new}{}&= \frac{F(\alpha^{\rm new},\theta^{\rm new},\mu_1^{\rm new},\mu_2^{\rm new};q_1^{\rm new},q_2^{\rm new})+b}{2n_1n_2+n_1k_2+n_2k_1+a+2},
\ea
where
\ba
F(\alpha,\theta,\mu_1,\mu_2;q_1,q_2) ={}& \fnorm{y-\alpha\mathds{1}_{n_1}\mathds{1}_{n_2}^T-\theta}^2+\nu\|\alpha\|^2 \\
{}& +\sum_{i=1}^{n_1}\sum_{j=1}^{k_1}\left(\frac{q_{1,ij}}{v_0}+\frac{1-q_{1,ij}}{v_1}\right)\|\theta_{i*}-\mu_{1,j*}\|^2 \\
{}& +\sum_{l=1}^{n_2}\sum_{h=1}^{k_2}\left(\frac{q_{2,lh}}{v_0}+\frac{1-q_{2,lh}}{v_1}\right)\|\theta_{*l}-\mu_{2,*h}\|^2.
\ea

\subsubsection{Kronecker product biclustering model} For the Kronecker product structure, we introduce a latent matrix $\mu\in\mathbb{R}^{k_1\times k_2}$. Since the biclustering model implies a block-wise constant structure for $\theta$. Each entry of $\mu$ serves as a center for a block of the matrix $\theta$. The prior distribution is defined by
\ba
{}& p(\theta,\mu\,|\,\gamma_1,\gamma_2,\sigma^2) \\
\propto {}& \prod_{i=1}^{n_1}\prod_{j=1}^{k_1}\prod_{l=1}^{n_2}\prod_{h=1}^{k_2}\exp\left(-\frac{(\theta_{il}-\mu_{jh})^2}{2\sigma^2[v_0\gamma_{1,ij}\gamma_{2,lh}+v_1(1-\gamma_{1,ij}\gamma_{2,lh})]}\right)\mathbb{I}\{\mathds{1}_{n_1}^T\theta\mathds{1}_{n_2}=0\}.
\ea
The prior distribution is another extension of (\ref{eq:spike-and-slab-clustering}), and it is in a similar form of (\ref{eq:kronecker-laplacian}). To finish the Bayesian model specification, we consider the same priors for $\gamma_1,\gamma_2,\sigma^2$ as in the Cartesian product case.

Recall that the lower bound of $\log\sum_{T\in\text{spt}(G)}\sum_{e\in T}[v_0^{-1}\gamma_e+v_1^{-1}(1-\gamma_e)]$ is given by (\ref{eq:lower-kronecker}) for a general Kronecker product structure. In the current setting, a similar argument gives the lower bound 
\ba
\sum_{i=1}^{n_1}\sum_{j=1}^{k_1}\sum_{l=1}^{n_2}\sum_{h=1}^{k_2}r_{(i,l),(j,h)}\log[v_0^{-1}\gamma_{1,ij}\gamma_{2,lh}+v_1^{-1}(1-\gamma_{1,ij}\gamma_{2,lh})].
\ea
Since $r_{(i,l),(j,h)} \equiv r$ is independent of $(i,l),(j,h)$ by the symmetry of the complete bipartite graph, the above lower bound can be written as
\ba
\nonumber {}& r\sum_{i=1}^{n_1}\sum_{j=1}^{k_1}\sum_{l=1}^{n_2}\sum_{h=1}^{k_2}\log[v_0^{-1}\gamma_{1,ij}\gamma_{2,lh}+v_1^{-1}(1-\gamma_{1,ij}\gamma_{2,lh})] \\
\nonumber {}&= r\log(v_0^{-1})\sum_{i=1}^{n_1}\sum_{j=1}^{k_1}\sum_{l=1}^{n_2}\sum_{h=1}^{k_2}\gamma_{1,ij}\gamma_{2,lh} + r\log(v_1^{-1})\sum_{i=1}^{n_1}\sum_{j=1}^{k_1}\sum_{l=1}^{n_2}\sum_{h=1}^{k_2}(1-\gamma_{1,ij}\gamma_{2,lh}) \\
\label{eq:biclustering-constant} {}&= rn_1n_2\log(v_0^{-1}) + rn_1n_2(k_1k_2-1)\log(v_1^{-1}),
\ea
which is independent of $\gamma_1,\gamma_2$. The inequality (\ref{eq:biclustering-constant}) is because both $\gamma_1$ and $\gamma_2$ satisfy (\ref{eq:gamma-clustering-condition}).

Again, the determinant factor in the density function $p(\theta,\mu\,|\,\gamma_1,\gamma_2,\sigma^2)$ does not play any role in the derivation of the EM algorithm, because the lower bound (\ref{eq:biclustering-constant}) does not depend on $(\gamma_1,\gamma_2)$. Since we are working with the Kronecker product, we will derive a variational EM algorithm with the E-step finding the posterior distribution in the mean filed class $\mathcal{Q}=\{q(\gamma_1,\gamma_2)=q_1(\gamma_1)q_2(\gamma_2):q_1,q_2\}$. By following the same argument in Section \ref{sec:kronecker}, we obtain the E-step as
\ba
q_{1,ij}^{\rm new} {}&= \frac{\exp\left(-\sum_{l=1}^{n_2}\sum_{h=1}^{k_2}\frac{q_{2,lh}(\theta_{il}-\mu_{jh})^2}{2\sigma^2\bar{v}}\right)}{\sum_{u=1}^{k_1}\exp\left(-\sum_{l=1}^{n_2}\sum_{h=1}^{k_2}\frac{q_{2,lh}(\theta_{il}-\mu_{uh})^2}{2\sigma^2\bar{v}}\right)}, \\
q_{2,lh}^{\rm new} {}&= \frac{\exp\left(-\sum_{i=1}^{n_1}\sum_{j=1}^{k_1}\frac{q_{1,ij}^{\rm new}(\theta_{il}-\mu_{jh})^2}{2\sigma^2\bar{v}}\right)}{\sum_{v=1}^{k_2}\exp\left(-\sum_{i=1}^{n_1}\sum_{v=1}^{k_1}\frac{q_{1,ij}^{\rm new}(\theta_{il}-\mu_{lv})^2}{2\sigma^2\bar{v}}\right)}
\ea
where $\bar{v}^{-1}=v_0^{-1}-v_1^{-1}$. The M-step is given by
\ba
(\alpha^{\rm new}, \theta^{\rm new},\mu^{\rm new}) {}&= \argmin_{\alpha, \mathds{1}_{n_1}^T\theta\mathds{1}_{n_2}=0,\mu}F(\alpha,\theta,\mu;q_1^{\rm new},q_2^{\rm new}), \\
\nonumber (\sigma^2)^{\rm new} {}&= \frac{F(\alpha^{\rm new},\theta^{\rm new},\mu^{\rm new};q_1^{\rm new},q_2^{\rm new})+b}{2n_1n_2+n_1k_2+n_2k_1+a+2},
\ea
where
\ba
F(\alpha,\theta,\mu_1,\mu_2;q_1,q_2) = {}& \fnorm{y-\alpha\mathds{1}_{n_1}\mathds{1}_{n_2}^T-\theta}^2+\nu\|\alpha\|^2 \\
{}& +\sum_{i=1}^{n_1}\sum_{j=1}^{k_1}\sum_{l=1}^{n_2}\sum_{h=1}^{k_2}\left(\frac{q_{1,ij}q_{2,lh}}{v_0}+\frac{1-q_{1,ij}q_{2,lh}}{v_1}\right)(\theta_{il}-\mu_{jh})^2.
\ea

\section{Reduced Isotonic Regression} \label{sec:iso}

The models that we have discussed so far in our general framework all involve Gaussian likelihood functions and Gaussian priors. It is important to develop a natural extension of the framework to include non-Gaussian models. In this section, we discuss a reduced isotonic regression problem with a non-Gaussian prior distribution, while a full extension to non-Gaussian models will be considered as a future project.

Given a vector of observation $y\in\mathbb{R}^n$, the reduced isotonic regression seeks the best piecewise constant fit that is nondecreasing \citep{schell1997reduced,gao2017minimax}. It is an important model that has applications in problems with natural monotone constraint on the signal. With the likelihood $y|\alpha,\theta,\sigma^2\sim N(\alpha\mathds{1}_n+\theta,\sigma^2 I_n)$, we need to specify a prior distribution on $\theta$ that induces both piecewise constant and isotonic structures. We propose the following prior distribution,
\baa
\theta\,|\,\gamma,\sigma^2 \sim p(\theta\,|\,\gamma,\sigma^2)\propto \prod_{i=1}^{n-1}\exp\left(-\frac{(\theta_{i+1}-\theta_i)^2}{2\sigma^2[v_0\gamma_i+v_1(1-\gamma_i)]}\right)\mathbb{I}\{\theta_i\leq\theta_{i+1}\}\mathbb{I}\{\mathds{1}_n^T\theta=0\}. \label{eq:Laplacian-half}
\eaa
We call (\ref{eq:Laplacian-half}) the spike-and-slab half-Gaussian distribution. Note that the support of the distribution is the intersection of the cone $\{\theta:\theta_1\leq\theta_2\leq...\leq\theta_n\}$ and the subspace $\{\theta:\mathds{1}_n^T\theta=0\}$. The parameters $v_0$ and $v_1$ play similar roles as in (\ref{eq:Laplacian-prior}), which model the closedness between $\theta_i$ and $\theta_{i+1}$ depending on the value of $\gamma_i$.
\begin{proposition}\label{prop:half-gaussian}
	For any $\gamma\in\{0,1\}^{n-1}$ and $v_0,v_1\in(0,\infty)$, the spike-and-slab half-Gaussian prior (\ref{eq:Laplacian-half}) is well defined on $\{\theta:\theta_1\leq\theta_2\leq...\leq\theta_n\}\cap \{\theta:\mathds{1}_n^T\theta=0\}$, and its density function with respect to the Lebesgue measure restricted on the support is given by
	\ba
	p(\theta\,|\,\gamma,\sigma^2) ={}& 2^{n-1}\frac{1}{(2\pi\sigma^2)^{(n-1)/2}}\sqrt{n\prod_{i=1}^{n-1}[v_0^{-1}\gamma_i+v_1^{-1}(1-\gamma_i)]} \\
	{}& \times \exp\left(-\sum_{i=1}^{n-1}\frac{(\theta_{i+1}-\theta_i)^2}{2\sigma^2[v_0\gamma_i+v_1(1-\gamma_i)]}\right)\mathbb{I}\{\theta_1\leq \theta_2\leq...\leq\theta_n\}\mathbb{I}\{\mathds{1}_n^T\theta=0\}.
	\ea
\end{proposition}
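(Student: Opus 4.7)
The plan is to reduce the proposition to the already-established Gaussian case in Proposition~\ref{prop:Laplacian}, then account for the isotonic constraint by a symmetry argument. First, observe that the chain graph $G=(V,E)$ with $V=[n]$ and $E=\{(i,i+1):i\in[n-1]\}$ is a tree, and the grounding vector $w=\mathds{1}_n$ satisfies $\mathds{1}_n^Tw=n\neq 0$. So Proposition~\ref{prop:Laplacian} applies to the associated spike-and-slab Laplacian prior $p_{\mathrm{full}}(\theta\,|\,\gamma,\sigma^2)$, and by Lemma~\ref{lem:matrix-tree},
\[
\det{}_w(L_{\gamma})=\frac{(\mathds{1}_n^Tw)^2}{\|w\|^2}\prod_{i=1}^{n-1}\left[v_0^{-1}\gamma_i+v_1^{-1}(1-\gamma_i)\right]=n\prod_{i=1}^{n-1}\left[v_0^{-1}\gamma_i+v_1^{-1}(1-\gamma_i)\right].
\]
This gives the Gaussian normalizer in the claimed formula. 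It therefore suffices to show that the isotonic cone $C=\{\theta:\theta_1\le\cdots\le\theta_n\}$ carries mass exactly $2^{-(n-1)}$ under $p_{\mathrm{full}}$; multiplying $p_{\mathrm{full}}\cdot\mathbb{I}\{\theta\in C\}$ by $2^{n-1}$ then normalizes the spike-and-slab half-Gaussian to a probability density on $C\cap\Theta_w$.

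To compute this mass, I would change variables via $T:\Theta_w\to\mathbb{R}^{n-1}$, $\theta\mapsto\delta$ with $\delta_i=\theta_{i+1}-\theta_i$. Since $\mathds{1}_n^T\theta=0$ pins down the global level, $T$ is a linear bijection between $(n-1)$-dimensional spaces, with constant (hence irrelevant) Jacobian. Under $T$, the exponent of $p_{\mathrm{full}}$ becomes the separable form
\[
\sum_{i=1}^{n-1}\frac{\delta_i^2}{2\sigma^2[v_0\gamma_i+v_1(1-\gamma_i)]},
\]
so the pushforward of $p_{\mathrm{full}}$ is (up to normalization that must match by Proposition~\ref{prop:Laplacian}) a product of independent centered Gaussians $\delta_i\sim N(0,\sigma^2[v_0\gamma_i+v_1(1-\gamma_i)])$. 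The cone $C$ maps bijectively onto $\mathbb{R}_{\ge 0}^{n-1}$, so by independence and symmetry
\[
\Pr_{p_{\mathrm{full}}}(\theta\in C)=\prod_{i=1}^{n-1}\Pr(\delta_i\ge 0)=2^{-(n-1)}.
\]

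Combining the two pieces, the density of (\ref{eq:Laplacian-half}) with respect to the Lebesgue measure on $\Theta_w$ restricted to $C$ is $2^{n-1}\,p_{\mathrm{full}}(\theta\,|\,\gamma,\sigma^2)\,\mathbb{I}\{\theta\in C\}$, which produces exactly the formula in the statement after substituting $\det_w(L_\gamma)=n\prod_i[v_0^{-1}\gamma_i+v_1^{-1}(1-\gamma_i)]$. The only subtle step is ensuring that the change of variables $T$ and its Jacobian are used consistently so that the restricted half-Gaussian integrates to $1$; this is the main thing one must verify carefully, and it is handled cleanly by appealing to Proposition~\ref{prop:Laplacian} for the Gaussian normalization and the symmetry/independence argument for the orthant probability, rather than by directly computing any Jacobian.
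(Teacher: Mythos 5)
Your proposal is correct and follows essentially the same route as the paper's proof: both pass to the successive differences $\delta_i=\theta_{i+1}-\theta_i$ (a linear bijection on $\Theta_w$ with constant Jacobian), use the fact that the quadratic form is diagonal in $\delta$ so that the isotonic cone is an orthant of probability $2^{-(n-1)}$ under the product of independent centered Gaussians, and read off the normalizer $n\prod_i[v_0^{-1}\gamma_i+v_1^{-1}(1-\gamma_i)]$ from Lemma~\ref{lem:matrix-tree} applied to the chain (a tree). The only cosmetic difference is that you cite Proposition~\ref{prop:Laplacian} for the Gaussian normalization, whereas the paper re-derives it in-line via Lemma~\ref{lem:density-to-distribution}.
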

Note that the only place that Proposition \ref{prop:half-gaussian} deviates from Proposition \ref{prop:Laplacian} is the extra factor $2^{n-1}$ due to the isotonic constraint $\{\theta:\theta_1\leq\theta_2\leq...\leq\theta_n\}$ and the symmetry of the density. We complete the model specification by put priors on $\alpha,\gamma,\eta,\sigma^2$ that are given by (\ref{eq:alpha-prior}), (\ref{eq:prior-gamma}), (\ref{eq:prior-eta}) and (\ref{eq:prior-sigma}).

Now we are ready to derive the EM algorithm. Since the base graph is a tree, the EM algorithm for reduced isotonic regression is exact. The E-step is given by
\ba
q^{\rm new}_i=\frac{\eta\phi(\theta_i-\theta_{i-1};0,\sigma^2 v_0)}{\eta\phi(\theta_i-\theta_{i-1};0,\sigma^2 v_0)+(1-\eta)\phi(\theta_i-\theta_{i-1};0,\sigma^2v_1)}.
\ea
The M-step is given by
\baa
(\alpha^{\rm new},\theta^{\rm new}) = \argmin_{\alpha,\theta_1\leq\theta_2\leq...\leq\theta_n, \mathds{1}_n^T\theta=0}F(\alpha,\theta;q^{\rm new}),\label{eq:isotonic-M}
\eaa
where
\ba
F(\alpha,\theta;q)=\|y-\alpha\mathds{1}_n-\theta\|^2+\nu\alpha^2+\sum_{i=1}^{n-1}\left(\frac{q_i}{v_0}+\frac{1-q_i}{v_1}\right)(\theta_i-\theta_{i-1})^2,
\ea
and the updates of $\sigma^2$ and $\eta$ are given by (\ref{eq:M4}) with $p=n$. The M-step (\ref{eq:isotonic-M}) can be solved by a very efficient optimization technique. Since $\|y-\alpha\mathds{1}_n-\theta\|^2=\|(\bar{y}-\alpha)\mathds{1}_n\|^2+\|y-\bar{y}\mathds{1}_n-\theta\|^2$ by $\mathds{1}_n^T\theta=0$, $\alpha$ and $\theta$ can be updated independently. It is easy to see that $\alpha^{\rm new}=\frac{n}{n+\nu}\bar{y}$. The update of $\theta$ can be solved by SPAVA \citep{burdakov2017dual}.

Similar to the Gaussian case, the parameter $v_0$ determines the complexity of the model. For each $v_0$ between $0$ and $v_1$, we apply the EM algorithm above to calculate $\wh{q}$, and then let $\wh{\gamma}_{i}=\wh{\gamma}_i(v_0)=\mathbb{I}\{\wh{q}_{i}\geq 1/2\}$ form a solution path. The best model will be selected from the EM-solution path by the limiting version of the posterior distribution as $v_0\rightarrow 0$.

Given a $\gamma\in\{0,1\}^{n-1}$, we write $s=1+\sum_{i=1}^{n-1}(1-\gamma_i)$ to be the number of pieces, and $Z_{\gamma}\in\{0,1\}^{n\times s}$ is the membership matrix defined in Section \ref{sec:model-selection}. As $v_0\rightarrow 0$, a slight variation of Proposition \ref{prop:reduced-model} implies that $\theta$ that follows (\ref{eq:Laplacian-half}) weakly converges to $Z_{\gamma}\wt{\theta}$, where $\wt{\theta}$ is distributed by
\begin{equation}
p(\wt{\theta}\,|\,\gamma,\sigma^2)\propto\exp\left(-\sum_{l=1}^s\frac{(\wt{\theta}_l-\wt{\theta}_{l+1})^2}{2\sigma^2v_1}\right)\mathbb{I}\{\wt{\theta}_1\leq\wt{\theta}_2\leq...\leq\wt{\theta}_{s}\}\mathbb{I}\{\mathds{1}_n^TZ_{\gamma}\wt{\theta}=0\}. \label{eq:half-limit-iso}
\end{equation}
The following proposition determines the normalizing constant of the above distribution.
\begin{proposition}\label{prop:half-gaussian-limit}
	The density function of (\ref{eq:half-limit-iso}) is given by
	\baa \label{eqn:half-gaussian-limit}
	p(\wt{\theta}\,|\,\gamma,\sigma^2) ={}& 2^{s-1}(2\pi\sigma^2)^{-(s-1)/2} \sqrt{\text{det}_{Z_{\gamma}^T \mathds{1}_n}(Z_{\gamma}^T\wt{L}_{\gamma}Z_{\gamma})} \times \\
	{}& \exp\left(-\sum_{l=1}^s\frac{(\wt{\theta}_l-\wt{\theta}_{l+1})^2}{2\sigma^2v_1}\right)\mathbb{I}\{\wt{\theta}_1\leq\wt{\theta}_2\leq...\leq\wt{\theta}_{s}\}\mathbb{I}\{\mathds{1}_n^TZ_{\gamma}\wt{\theta}=0\},
	\eaa
	where $Z_{\gamma}$ and $\wt{L}_{\gamma}$ are defined in Section \ref{sec:model-selection}.
\end{proposition}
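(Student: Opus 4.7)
To establish (\ref{eqn:half-gaussian-limit}), I will verify that the stated formula integrates to one, i.e., that
\[
\int \exp\!\Bigl(-\tfrac{1}{2\sigma^2}\wt\theta^T A\wt\theta\Bigr)\,\mathbb{I}\{\wt\theta_1\!\leq\!\cdots\!\leq\!\wt\theta_s\}\,\mathbb{I}\{\mathds{1}_n^TZ_\gamma\wt\theta=0\}\,d\wt\theta \;=\; \frac{(2\pi\sigma^2)^{(s-1)/2}}{2^{s-1}\sqrt{\det_{Z_\gamma^T\mathds{1}_n}(A)}},
\]
where $A := Z_\gamma^T\wt L_\gamma Z_\gamma$. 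My approach is to decouple the computation into two pieces: the Gaussian-integral factor on the affine subspace $\{w'^T\wt\theta=0\}$ (where $w' := Z_\gamma^T\mathds{1}_n$), handled by Proposition \ref{prop:Laplacian}, and the cone-restriction factor, handled by a symmetry argument analogous to the one that produced the $2^{n-1}$ in Proposition \ref{prop:half-gaussian}.

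For the first piece, observe that $A$ is precisely the graph Laplacian of the contracted chain $\wt G=(\wt V,\wt E)$: each edge of $\wt G$ is the contraction of a unique original edge $(i,i{+}1)$ with $\gamma_i=0$, so $A$ is a weighted chain Laplacian on $s$ nodes with all edge weights equal to $v_1^{-1}$. Since $\wt G$ is a connected contraction of the chain $G$ and $\mathds{1}_s^Tw'=\mathds{1}_n^T\mathds{1}_n=n\neq 0$, Proposition \ref{prop:Laplacian} applies on $\mathbb{R}^s$ with grounding vector $w'$, yielding
\[
\int_{\{w'^T\wt\theta=0\}}\exp\!\Bigl(-\tfrac{1}{2\sigma^2}\wt\theta^T A\wt\theta\Bigr)\,d\wt\theta \;=\; \frac{(2\pi\sigma^2)^{(s-1)/2}}{\sqrt{\det_{w'}(A)}}.
\]

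For the second piece, I introduce successive differences $\delta_l = \wt\theta_{l+1}-\wt\theta_l$, $l=1,\dots,s-1$. Restricted to the hyperplane $\{w'^T\wt\theta=0\}$, the map $\wt\theta\mapsto\delta$ is a linear bijection onto $\mathbb{R}^{s-1}$ (given $\delta$, the equation $\sum_l n_l\wt\theta_l=0$ determines $\wt\theta_1$ uniquely, and the remaining $\wt\theta_l$ are recovered by partial sums). Under this change of variables the quadratic form becomes $\wt\theta^TA\wt\theta = v_1^{-1}\sum_l \delta_l^2$, so the pushforward Gaussian density on $\mathbb{R}^{s-1}$ is invariant under every sign flip $\delta_l\mapsto-\delta_l$; meanwhile, the isotonic cone $\wt\theta_1\leq\cdots\leq\wt\theta_s$ maps exactly to the positive orthant $\{\delta\geq 0\}$, one of $2^{s-1}$ equal-measure orthants. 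Hence intersecting with the isotonic cone multiplies the integral by $2^{-(s-1)}$. The (constant) Jacobian of the linear map $\wt\theta\mapsto\delta$ cancels between the cone-restricted and unrestricted integrals, so no further correction is needed. Combining this with the first piece gives the asserted normalizing constant.

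\textbf{Main obstacle.} The only delicate point is making sure the symmetry-and-orthant argument is carried out on the right space: the sign-flip symmetry is clean in the $\delta$-coordinates but, lifted back to $\wt\theta$, it is an affine involution on the subspace $\{w'^T\wt\theta=0\}$, not a coordinate permutation of $\wt\theta$. Framing the argument directly in $\delta$-coordinates (as above) sidesteps this and makes the factor $2^{s-1}$ transparent; this parallels, and in fact subsumes, the corresponding step in the proof of Proposition \ref{prop:half-gaussian}, to which one may appeal by viewing the contracted chain $\wt G$ as a chain graph in its own right with uniform edge weight $v_1^{-1}$ and grounding vector $w'=Z_\gamma^T\mathds{1}_n$ instead of $\mathds{1}_s$.
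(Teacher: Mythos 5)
Your proof is correct. You verify the normalizing constant by splitting it into two independent factors: the Gaussian integral over the affine subspace $\{(Z_\gamma^T\mathds{1}_n)^T\wt\theta=0\}$, which you obtain by recognizing $Z_\gamma^T\wt L_\gamma Z_\gamma$ as the Laplacian of the contracted chain on $s$ nodes with uniform weight $v_1^{-1}$ and invoking Proposition \ref{prop:Laplacian} with grounding vector $Z_\gamma^T\mathds{1}_n$, and the orthant fraction $2^{-(s-1)}$, which you extract by passing to difference coordinates $\delta_l=\wt\theta_{l+1}-\wt\theta_l$ and using sign-flip symmetry. This is a genuinely more modular route than the paper's. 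The paper proves the result by a single explicit computation: it fixes the basis $[Z_\gamma^T\mathds{1}_n:(Z_\gamma^TZ_\gamma)^{-1}\wt V]$ with $\wt V_{*l}=e_l-e_{l+1}$, tracks the surface Lebesgue measure and the Jacobian of the change of variables to differences through determinant factors $\det(\wt W^T\wt W)$ and $\det(\wt V^T\wt W)$, applies the same symmetry argument for the $2^{s-1}$, and then verifies by hand the identity $v_1^{(s-1)/2}\sqrt{\smash[b]{\text{det}_{Z_\gamma^T\mathds{1}_n}(Z_\gamma^T\wt L_\gamma Z_\gamma)}}=\det(\wt V^T\wt W)/\sqrt{\det(\wt W^T\wt W)}$ to match the stated constant. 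Your approach buys a shorter argument in which all Jacobian bookkeeping cancels (since the same linear bijection is used for both the restricted and unrestricted integrals) and the determinant identification comes for free from Proposition \ref{prop:Laplacian}; the paper's approach buys self-containedness and makes the surface-measure conventions fully explicit. Both hinge on the same two ideas --- difference coordinates and orthant symmetry --- so the mathematical content agrees; only the packaging of the Gaussian normalization differs.
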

Interestingly, compared with the formula (\ref{eq:Laplacian-prior-low-d}), (\ref{eqn:half-gaussian-limit}) has an extra $2^{s-1}$ due to the isotonic constraint $\{\wt{\theta}_1\leq...\leq\wt{\theta}_s\}$.

Following Section \ref{sec:model-selection}, we consider a reduced version of the likelihood $y\,|\,\alpha,\wt{\theta},\gamma,\sigma^2\sim N(\alpha \mathds{1}_n + Z_\gamma \wt\theta, \sigma^2 I_n)$. Then, with the prior distributions on $\alpha,\wt{\theta},\gamma,\sigma^2$ specified by (\ref{eq:alpha-prior}), (\ref{eqn:half-gaussian-limit}), (\ref{eq:prior-gamma}), (\ref{eq:prior-eta}) and (\ref{eq:prior-sigma}), we obtain the joint posterior distribution $p(\alpha,\wt{\theta},\gamma,\sigma^2\,|\,y)$. Ideally, we would like to integrate out $\alpha,\wt{\theta},\sigma^2$ and use $p(\gamma\,|\,y)$ for model selection. However, the integration with respect to $\wt{\theta}$ is intractable due to the isotonic constraint. Therefore, we propose to maximize out $\alpha,\wt{\theta},\sigma^2$, and then the model selection score for reduced isotonic regression is given by
\ba
g(\gamma)=\max_{\alpha,\wt{\theta}_1\leq...\leq\wt{\theta}_s,\mathds{1}_n^TZ_{\gamma}\wt{\theta}=0,\sigma^2}\log p(\alpha,\wt{\theta},\gamma,\sigma^2\,|\,y).
\ea
For each $\gamma$, the optimization involved in the evaluation of $g(\gamma)$ can be done efficiently, which is very similar to the M-step updates.



\section{Numerical Results}\label{sec:num}

In this section, we test the performance of the methods proposed in the paper and compare the accuracy in terms of sparse signal recovery and graphical structure estimation with existing methods. We name our method BayesMSG (Bayesian Model Selection on Graphs) throughout the section. All simulation studies and real data applications were conduced on a standard laptop (2.6 GHz Intel Core i7 processor and 16GB memory) using R and Julia programming languages. 

Our Bayesian method outputs a subgraph defined by
\ba
\wh{\gamma}=\argmax\left\{g(\gamma): \gamma\in\{\wh{\gamma}(v_0)\}_{0<v_0\leq v_1}\right\},
\ea
which is a sub-model selected by the model selection score $g(\gamma)$ on the EM solution path (see Section \ref{sec:model-selection} for details). Suppose $\gamma^*$ is the underlying true subgraph that generates the data, we measure the performance of $\wh{\gamma}$ by false discovery proportion and power. The definitions are
\ba
{\sf FDP}=\frac{\sum_{(i,j)\in E}(1-\wh{\gamma}_{ij})\gamma_{ij}^*}{\sum_{(i,j)\in E}(1-\wh{\gamma}_{ij})}\quad\text{and}\quad{\sf POW}=1-\frac{\sum_{(i,j)\in E}(1-\gamma_{ij}^*)\wh{\gamma}_{ij}}{\sum_{(i,j)\in E}(1-\gamma_{ij}^*)},
\ea
where we adopt the convention that $0/0=1$. Note that the above ${\sf FDP}$ and ${\sf POW}$ are not suitable for the clustering/biclustering model, because clustering structures are equivalent up to arbitrary clustering label permutations.

The sub-model indexed by $\wh{\gamma}$ also induces a point estimator for the model parameters. This can be done by calculating the posterior mean of the reduced model specified by the likelihood (\ref{eqn:model_selection_y}) and priors (\ref{eq:alpha-prior}) and (\ref{eq:Laplacian-prior-low-d}). With notations $p(y|\alpha,\wt{\theta},\gamma,\sigma^2)$, $p(\alpha|\sigma^2)$ and $p(\wt{\theta}|\gamma,\sigma^2)$ for (\ref{eqn:model_selection_y}), (\ref{eq:alpha-prior}) and (\ref{eq:Laplacian-prior-low-d}), the point estimator is defined by $\wh{\beta}=\alpha^{\rm est}w+ Z_{\wh{\gamma}}\wt{\theta}^{\rm est}$, where $Z_{\gamma}$ is the membership matrix defined in Section \ref{sec:model-selection}, and the definition of $(\alpha^{\rm est},\wt{\theta}^{\rm est})$ is given by
\ba
(\alpha^{\rm est},\wt{\theta}^{\rm est})=\argmax_{\alpha,\wt{\theta}\in\{\wt{\theta}:w^TZ_{\wh{\gamma}}\wt{\theta}=0\}}\log\left[p(y|\alpha,\wt{\theta},\wh{\gamma},\sigma^2)p(\alpha|\sigma^2)p(\wt{\theta}|\wh{\gamma},\sigma^2)\right],
\ea
which is a simple quadratic programming whose solution does not depend on $\sigma^2$. Note that the definition implies that $\wh{\beta}$ is the posterior mode of the reduced model. Since the posterior distribution is Gaussian, $\wh{\beta}$ is also the posterior mean. The performance of $\wh{\beta}$ will be measured by the mean squared error
\ba
{\sf MSE}=\frac{1}{n}\|X(\wh{\beta}-\beta^*)\|^2,
\ea
where $\beta^*$ is the true parameter that generates the data.

The hyper-parameters $a,b,A,B$ in (\ref{eq:prior-eta}) and (\ref{eq:prior-sigma}) are all set as the default value $1$. The same rule is also applied to the extensions in Sections \ref{sec:clustering}-\ref{sec:iso}.

\begin{figure}[!t]
	\centering
	\centerline{\includegraphics[width = 6.2in]{./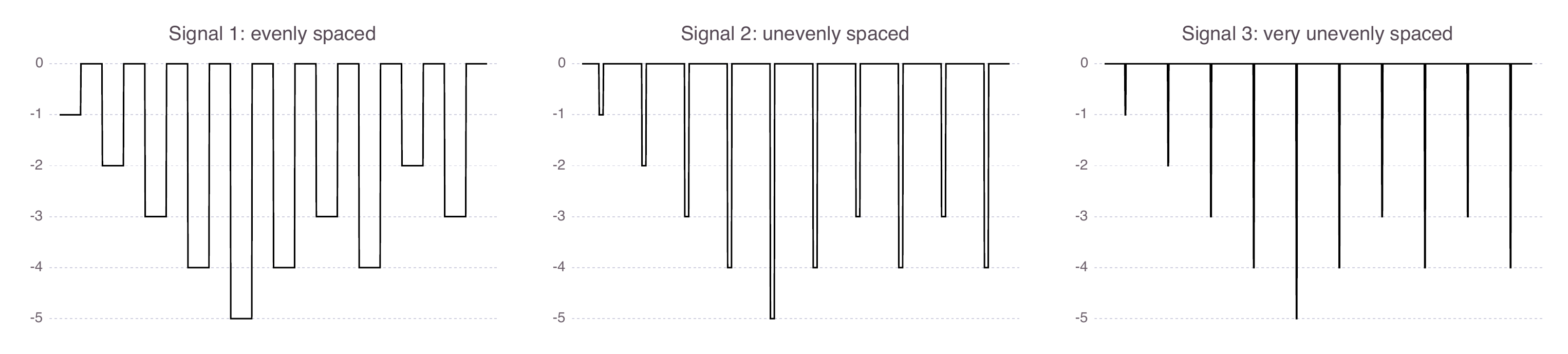}}
	\caption{Three different signals for the linear chain graph. All signals have 20 pieces. Signal 1 has evenly spaced changes (each piece has length $50$), Signal 2 has unevenly spaced changes (a smaller piece has length $10$), and Signal 3 has very unevenly spaced changes (a smaller one has length $2$).}
	\label{fig:linearpath0}
\end{figure}
\begin{figure}[!h]
	\centering
	\centerline{\includegraphics[width = 6.2in]{./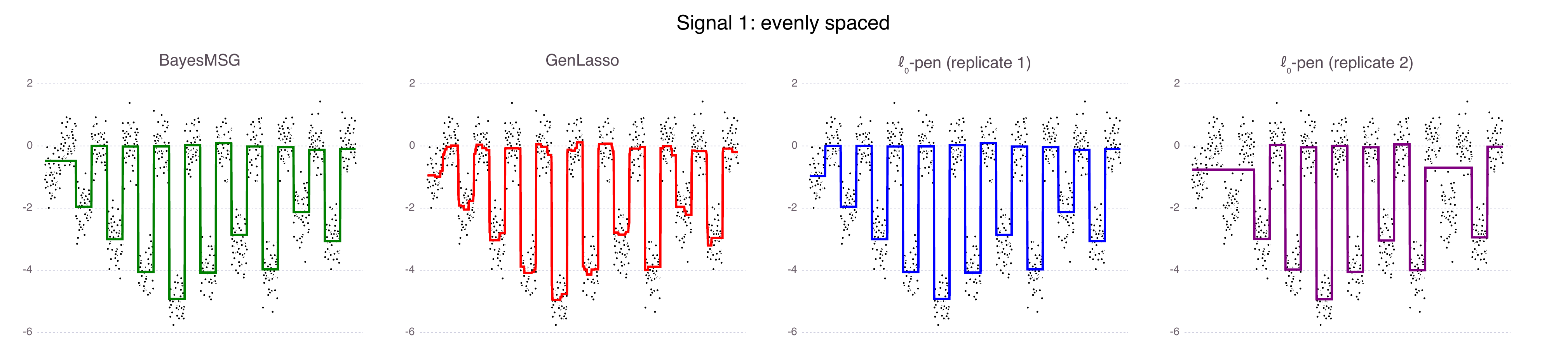}}
	\centerline{\includegraphics[width = 6.2in]{./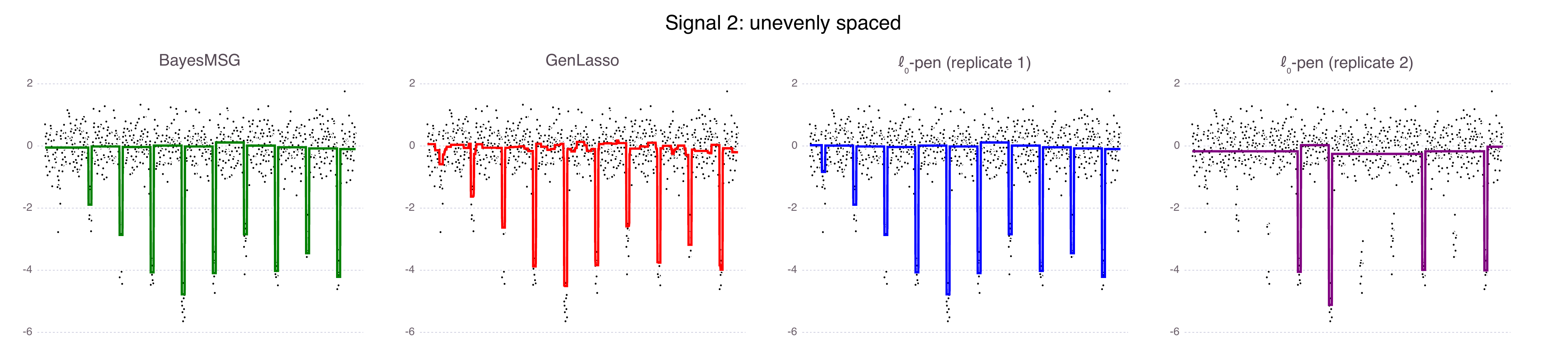}}
	\centerline{\includegraphics[width = 6.2in]{./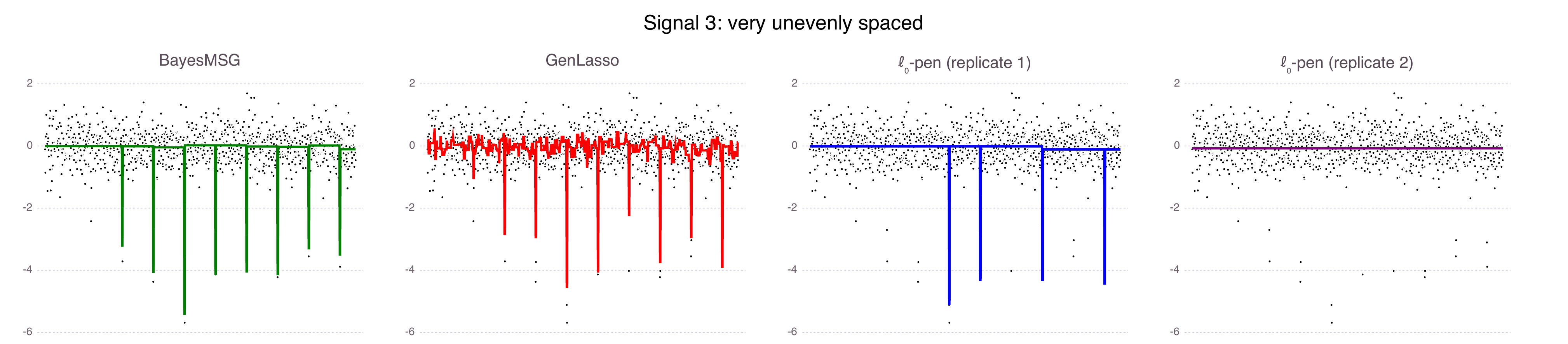}}
	\caption{Visualization of typical solutions of the three methods when $\sigma = 0.5$. Since $\ell_0$-pen is very unstable, we plot contrasting solutions from two independent replicates.
		\textrm{(Top)} Evenly spaced signal; \textrm{(Center)} Unequally spaced signal; \textrm{(Bottom)} Very unevenly spaced signal; \textrm{(Far Left)} BayesMSG; \textrm{(Left)} GenLasso; \textrm{(Right and Far Right)} Two independent replicates of $\ell_0$-pen.}
	\label{fig:linearpath1}
\end{figure}
\subsection{Simulation Studies}

In this section, we compare the proposed Bayesian model selection procedure with existing methods in the literature. There are two popular generic methods for graph-structured model selection in the literature. The first method is the generalized Lasso (or simlply GenLasso henceforth) \citep{tibshirani2005sparsity,she2010sparse,tibshirani2011solution}, defined by
\baa
\wh{\beta}=\frac{1}{2}\|y-X\beta\|^2+\lambda\sum_{(i,j)\in E}|\beta_i-\beta_j|.\label{eq:GenLasso}
\eaa
The second method is the $\ell_0$-penalized least-squares \citep{barron1999risk,friedrich2008complexity,fan2018approximate}, defined by
\baa
\wh{\beta}=\frac{1}{2}\|y-X\beta\|^2+\lambda\sum_{(i,j)\in E}\mathbb{I}\{\beta_i\neq \beta_j\}.\label{eq:l0-pen}
\eaa
For both methods, an estimated subgraph is given by
\ba
\wh{\gamma}_{ij}=\mathbb{I}\{|\wh{\beta}_i-\wh{\beta}_j|\leq\epsilon\},
\ea
for all $(i,j)\in E$. Here, the number $\epsilon$ is taken as $10^{-8}$. The two methods are referred to by GenLasso and $\ell_0$-pen from now on. In addition to GenLasso and $\ell_0$-pen, various other methods \citep{perez2014genome,bondell2008simultaneous,govaert2003clustering,tan2014sparse,gao2016optimal,chi2017convex,mair2009isotone,gao2017minimax,tibshirani2011nearly} that are specific to different models will also be compared in our simulation studies.

\subsubsection{Linear Chain Graph}
\label{sec:linear-chain-sim}

We first consider the simplest linear chain graph, which corresponds to the change-point mdoel explained in Example \ref{ex:cpm}. We generate data according to $y\sim N(\beta^*,\sigma^2I_n)$ with $n=1000$ and $\sigma\in\{0.1,0.2,0.3,0.4,0.5\}$. The mean vector $\beta^*\in\mathbb{R}^n$ is specified in three different cases as shown in Figure~\ref{fig:linearpath0}.

We compare the performances of the proposed Bayesian method, GenLasso and $\ell_0$-pen. For the linear chain graph, GenLasso is the same as fused Lasso \citep{tibshirani2005sparsity}. Its tuning parameter $\lambda$ in (\ref{eq:GenLasso}) is selected by cross validation using the default method of the R package {\tt genlasso} \citep{arnold2014genlasso}. For $\ell_0$-pen, the $\lambda$ in (\ref{eq:l0-pen}) is selected using the method suggested by \cite{fan2018approximate}.
\begin{table}[t]
	\centering
	\small
	\begin{tabular}{l | l || lll | lll | lll  }
		\multirow{2}{*}{}  & \multirow{2}{*}{$\sigma$}& \multicolumn{3}{|c|}{Even} &  \multicolumn{3}{|c|}{Uneven} &   \multicolumn{3}{|c}{Very uneven}  \\
		& & {\sf MSE}  & {\sf FDP} & {\sf POW}  & {\sf MSE}  & {\sf FDP} & {\sf POW} & {\sf MSE}  & {\sf FDP} & {\sf POW} \\
		\hline
		\hline
		\multirow{5}{*}{BayesMSG}
		& $0.1$ & 0.00019 & 0.00 & 1.00 & 0.00949 & 0.00 & 1.00 & 0.00217 & 0.00 & 0.80  \\
		& $0.2$ & 0.00585 & 0.00 & 0.98 & 0.01010 & 0.00 & 0.97 & 0.00279 & 0.00 & 0.81  \\
		& $0.3$ & 0.01620 & 0.01 & 0.96 & 0.01116 & 0.01 & 0.97 & 0.00349 &  0.00 & 0.81 \\
		& $0.4$ & 0.01940 & 0.05 & 0.95 & 0.01693 & 0.02 & 0.96 & 0.00837 &  0.00 & 0.79 \\
		& $0.5$ & 0.04667 & 0.10 & 0.95 & 0.03682 & 0.02 & 0.96 & 0.01803 & 0.05 & 0.78 \\
		\hline
		\multirow{5}{*}{GenLasso}
		& $0.1$ & 0.00094 & 0.81 &  1.00 & 0.00116 & 0.90 & 1.00 & 0.00570 & 0.96 & 1.00  \\
		& $0.2$ & 0.00374 & 0.81 & 1.00 &  0.00458 & 0.90 & 1.00 & 0.01152 & 0.94 & 1.00 \\
		& $0.3$ & 0.00842 & 0.81 & 0.98 & 0.01024& 0.89 & 1.00 & 0.02084 & 0.93 & 0.99 \\
		& $0.4$ & 0.01494 & 0.81 & 0.98 & 0.01813 & 0.88 & 0.98 & 0.03376 & 0.92  & 0.98 \\
		& $0.5$ & 0.02345 & 0.82 & 0.98 & 0.02818& 0.89 & 0.98 & 0.04984 & 0.92 & 0.97 \\
		\hline
		\multirow{5}{*}{$\ell_0$-pen}
		& $0.1$ & 0.00505 & 0.00 & 0.98 & 0.00288 & 0.00 & 0.97 & 0.02042 & 0.00 & 0.81 \\
		& $0.2$ & 0.00545 & 0.00 & 0.98 &  0.00888 & 0.00 & 0.94 & 0.06049 & 0.00 & 0.63 \\
		& $0.3$ & 0.00399 & 0.01 & 0.98 & 0.00918 & 0.02 & 0.94 & 0.06121 & 0.00 & 0.63 \\
		& $0.4$ & 0.00826 & 0.02 & 0.97 & 0.01119 & 0.02 & 0.93 & 0.06250 & 0.00 & 0.63  \\
		& $0.5$ & 0.06512 & 0.03 & 0.92  &0.04627& 0.02 & 0.93 & 0.06452 & 0.00 & 0.63  \\
	\end{tabular}
\normalsize
\caption{Comparisons of the three methods for the linear chain graph.} \label{table:1}
\end{table}

The results are summarized in Table \ref{table:1}. Some typical solutions of the three methods are plotted in Figure~\ref{fig:linearpath1}. In terms of {\sf MSE}, our Bayesian method achieves the smallest error among the three methods when $\sigma$ is small, and GenLasso has the best performance when $\sigma$ is large. For model selection performance measured by {\sf FDP} and {\sf POW}, the Bayesian method is the best, and $\ell_0$-pen is better than GenLasso. We also point out that the solutions of $\ell_0$-pen is highly unstable, as shown in Figure~\ref{fig:linearpath1}. In terms of computational time, BayesMSG, GenLasso and $\ell_0$-pen require 5.2, 11.8 and 19.0 seconds on average.

It is not surprising that GenLasso achieves the lowest {\sf MSE} in the low signal strength regime. This is because Lasso is known to produce estimators with strong bias towards zero, and therefore it is favored when the true parameters are close to zero. The other two methods, BayesMSG and $\ell_0$ are designed to achieve nearly unbiased estimators when the signals are strong, and therefore show their advantages over GenLasso when the signal strength is large.

An interesting question would be if it is possible to design a method that works well in all of the three criteria ({\sf MSE}, {\sf POW}, {\sf FDP}) with both low and strong signal? Unfortunately, a recent paper \citep{song2018optimal} proves that this is impossible. The result of \citet{song2018optimal} rigorously establishes the incompatibility phenomenon between selection consistency and rate-optimality in high-dimensional sparse regression. We therefore believe our proposed BayesMSG, which performs very well in terms of the three criteria ({\sf MSE}, {\sf POW}, {\sf FDP}) except for the {\sf MSE} in the low signal regime is a very good solution in view of this recent impossibility result.

\subsubsection{Regression with Graph-structured Coefficients}
\label{sec:reg-graph-structure-sim}

For this experiment, we consider a linear regression setting with graph structured sparsity on the regression coefficients. We sample random Gaussian measurements $X_{ij} \sim N(0,1)$ and measurement errors $\epsilon_i \sim N(0,1)$. Then we construct a design matrix $X \in \mathbb{R}^{n \times p}$ and a response vector $y = X\theta + \epsilon \in \mathbb{R}^n$, where $\theta$ is a vector of node attributes of an underlying graph $G$. 

We fix $n = 500$ throughout this simulation study, and consider three different graph settings listed in Table~\ref{table:r1-appendix}. When $G$ is a star graph with the center node at $0$, our proposed model in Section \ref{sec:model} corresponds to the sparse linear regression problem. We compare our proposed approach BayesMSG with the following baseline methods implemented in R programming language: Lasso ({\tt glmnet} R package) \citep{friedman2010regularization}, and Bayesian Spike-And-Slab linear regression (BSAS) via MCMC ({\tt BGLR} R package) \citep{perez2014genome}. All the R packages listed here are implemented using their default setting and their recommended model selection methods.
\begin{table}[!t]
	\small
	\centering
	\begin{tabular}{l || l | l | l  lll}
		Graph &	\# of nodes & \# of edges & Description \\
		\hline
		\hline
		Star & 1,001 (1 fixed node) & 1,001 &  regression with sparse coefficients \\
		Linear chain 	& 1,000 & 999 &  regression with gradient-sparse signals  \\
		Complete & 200 & 19,900 & regression with clustered coefficients \\
	\end{tabular}
\normalsize
	\caption{Simulation Settings for Gaussian Design.} \label{table:r1-appendix}
	\vspace{0.25in}
	\small
	\centering
	\begin{tabular}{l | c | c | c | c | c | c | c | c | c}
		Graph & \multicolumn{3}{|c|}{Star Graph} & \multicolumn{3}{|c|}{Linear Chain Graph} & \multicolumn{3}{|c}{Complete Graph}\\
		\hline
		Method & BMSG & Lasso & BSAS & BMSG & GLasso & ITALE & BMSG & GLasso &  OSCAR  \\
		\hline
		\hline
		{\sf MSE} & 0.579 & 0.792 & 0.530 & 0.099 & 0.276 & 0.159 & 0.399 & 0.472 & 0.438  \\
		{\sf Time} & 5.425 & 1.663 & 11.32 & 3.161 & 5.791 & 6.543 & 27.52 & 50.62 & 18.18 \\
		{\sf FDP} & 0.324 & 0.662 & - & 0.000 & 0.953 & 0.106 & 0.224 & 0.614 & 0.467 \\
		{\sf PO}W & 0.978 & 0.995 & - & 0.954 & 0.980 & 0.988 & 0.996 & 1.000 & 1.000 
	\end{tabular}\\
	\vspace{0.2in}
	\begin{tabular}{l | c | c | c | c | c | c | c | c | c}
		Graph & \multicolumn{3}{|c|}{Star Graph} & \multicolumn{3}{|c|}{Linear Chain Graph} & \multicolumn{3}{|c}{Complete Graph}\\
		\hline
		Method & BMSG & Lasso & BSAS & BMSG & GLasso & ITALE & BMSG & GLasso &  OSCAR  \\
		\hline
		\hline
		{\sf MSE} & 0.624 & 0.789 & 0.592 & 0.099 & 0.274 & 0.176 & 0.403 & 0.472 & 0.442 \\
		{\sf Time} & 5.396 & 1.721 & 12.75 & 3.436 & 5.904 & 6.554 & 22.85 &  51.42 & 15.80 \\
		{\sf FDP} & 0.319 & 0.658 & -  & 0.000 & 0.930 & 0.032 & 0.205 & 0.290 & 0.208 \\
		{\sf POW} & 0.983 & 0.982 & - & 0.994 & 0.988 & 0.996 & 0.998 & 1.000 & 0.994
	\end{tabular}
\normalsize
\caption{Simulation Results for Gaussian Design $C = 0.5$ (above) $C= 1$ (below).} \label{table:r2}

\end{table}

Next, when $G$ is a linear chain graph, the model corresponds to the linear regression problem with a sparse graph difference vector $(\theta_2 - \theta_1,\cdots,\theta_p - \theta_{p-1})$. This problem setting is particularly studied for fused Lasso \citep{tibshirani2005sparsity} and the approximate $\ell_0$ regression setting (ITALE) \citep{xu2019iterative}. Therefore, we compare BayesMSG with the above baseline approaches: fused Lasso ({\tt genlasso} R package) and ITALE ({\tt ITALE} R package) on the linear chain graph.

Finally, when $G$ is a complete graph, the model corresponds to the linear regression problem with clustered coefficients, i.e. $\beta_j$'s may be clustered together. This particular problem setting is also considered in the studies of GenLasso \citep{tibshirani2011solution} and OSCAR \citep{bondell2008simultaneous}. We compare BayesMSG with the following baseline methods: GenLasso and OSCAR ({\tt lqa} R package). In brief, OSCAR seeks to solve
\begin{equation*}
\textrm{minimize}_{\beta} \quad \frac12 \norm{y - X\theta}_2^2 + \lambda \sum_{j=1}^p (c(j-1) + 1 )|\theta|_{(j)}.
\end{equation*}

A true graph-structured sparse signal is constructed as follows. For the case of star graphs, $\theta_j^* = 0.5C$ for $j = 1,\cdots,40$ and $0$ otherwise. For the cases of linear change graphs and complete graphs, $\theta_j^* = C$ for $j =1,\cdots, 0.4p$, $\theta_j^* =2C$ for $j=0.4p+1,\cdots,0.7p$, $\theta_j^* =3C$ for $j=0.7p+1,\cdots,0.9p$ and $4C$ otherwise. Tables~\ref{table:r2} displays the estimation error ({\sf MSE}s) $\norm{\theta^* - \widehat\theta}_2$ on the test data sets, and computation times. BMSG, BSAS and GenLasso are abbreviations for BayesMSG, Bayesian Spike-And-Slab regression and Generalized Lasso. Each reported error value is averaged across 10 independent simulations with different random seeds.

The results show that our proposed BayesMSG method is the overall winners across all models in both estimation error ({\sf MSE}) and model selection error ({\sf FDP} and {\sf POW}). The advantage is especially obvious for the linear chain graph and the complete graph. The only case where BayesMSG cannot beat its competitor (BSAS) is the estimation error in the star graph case (sparse linear regression). However, we note that BSAS is an MCMC-based method that does not involve model selection but perform Bayesian model averaging. Thus, the solution of BSAS is not sparse. On the other hand, BayesMSG is designed for model selection, and therefore performs much better in terms of model selection error ({\sf FDP} and {\sf POW}).

\begin{figure}[htbp]
	\centering
	\centerline{\includegraphics[width = 6in]{./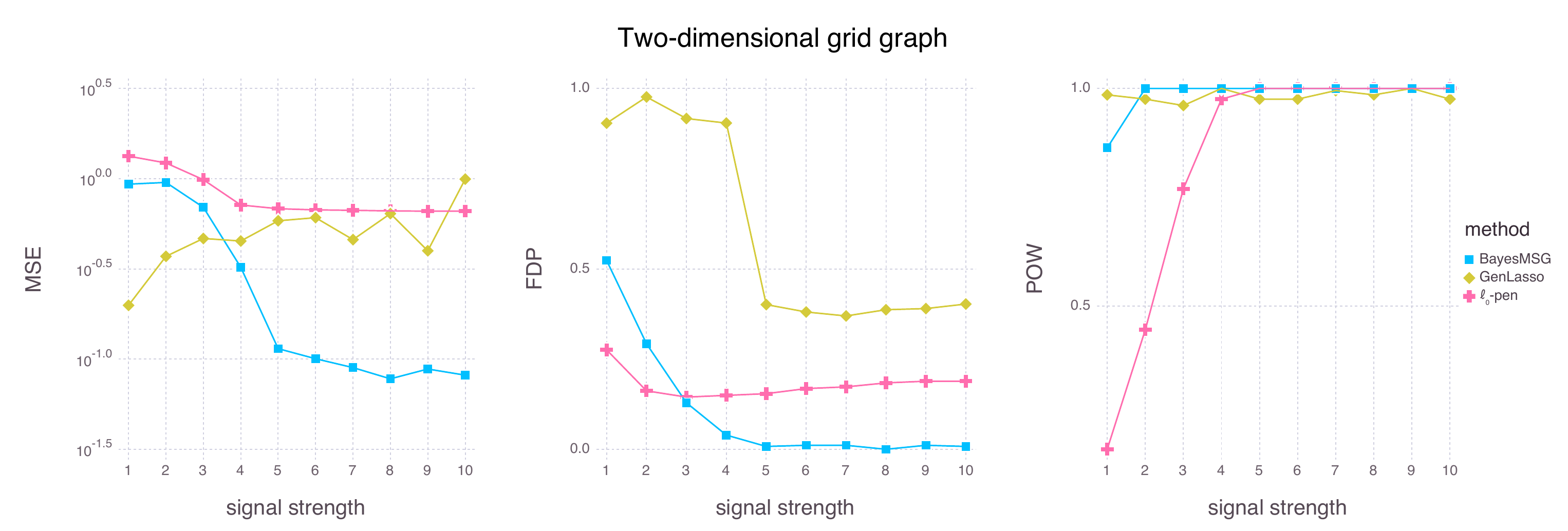}}
	\caption{Comparison of the three methods for the two-dimensional grid graph. (\textrm{Left}) {\sf MSE}; (\textrm{Center}) {\sf FDP}; (\textrm{Right}) {\sf POW}.}
	\label{fig:2dimgrid1}
	\centering
	\centerline{\includegraphics[width = 7.4in]{./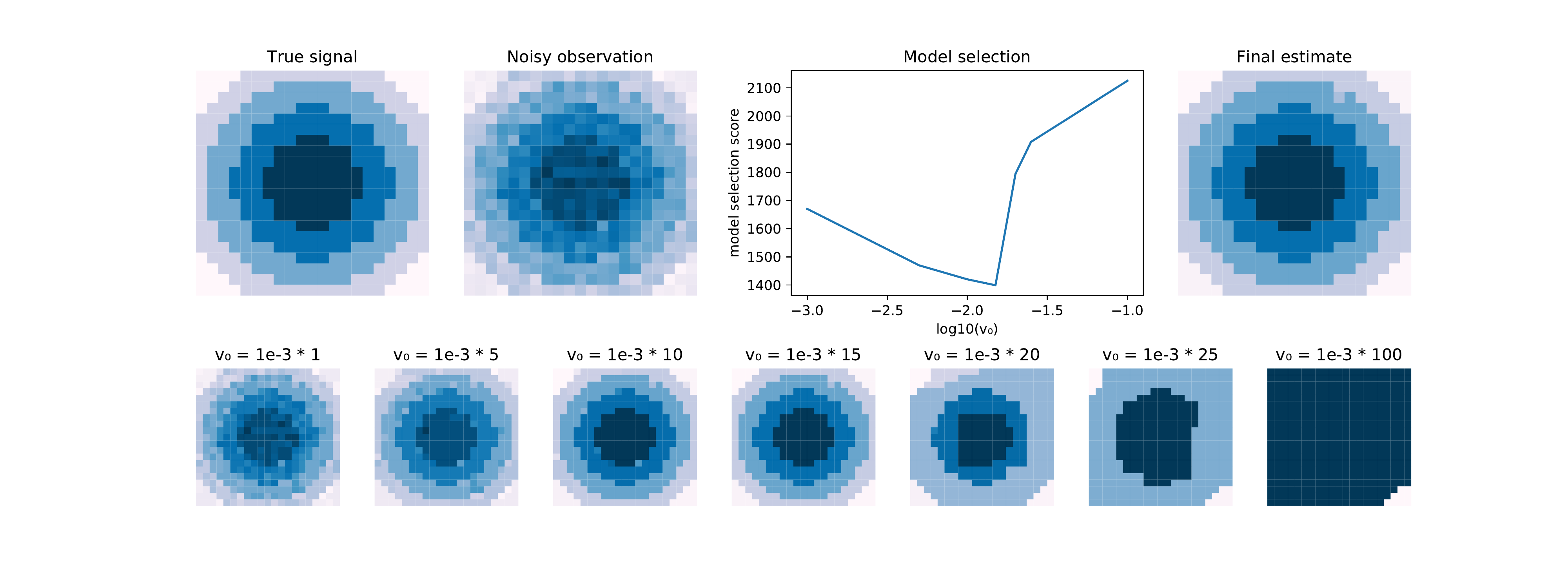}}
	\vspace{-0.35in}
	\caption{(\textrm{Top panels}) True signal, noisy observations, model selection score,  and final estimate; (\textrm{Bottom panels}) A regularization path from $v_0 = 10^{-3}$ to $v_0 = 10^{-1}$.}
	\label{fig:2dimgrid2}
	\centering
	\vspace{0.1in}
	\centerline{\includegraphics[width = 6.6in]{./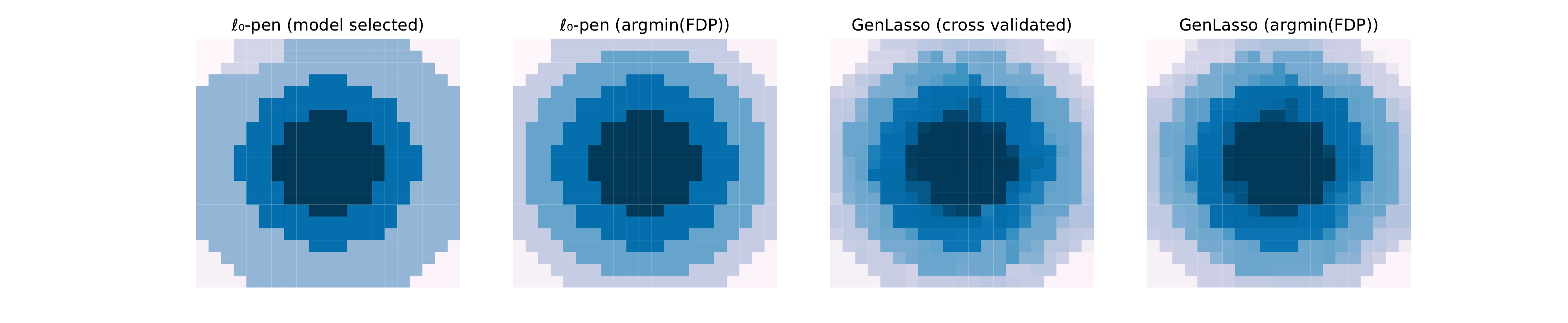}}
	\vspace{-0.03in}
	\caption{\textrm{(Far Left)} $\ell_0$-pen with $\lambda$ selected using the method in \cite{fan2018approximate}; \textrm{(Left)} $\ell_0$-pen with $\lambda$ that minimizes {\sf FDP}; \textrm{(Right)} GenLasso with $\lambda$ selected by cross validation; \textrm{(Far Right)} GenLasso with $\lambda$ that minimizes {\sf FDP}.}
	\label{fig:2dimgrid3}
\end{figure}

\subsubsection{Two-Dimensional Grid Graph}
\label{sec:two-dim-grid-sim}

We consider the two-dimensional grid graph described in Example \ref{ex:2dimgrid}. The data is generated according to $y_{ij}\sim N(\kappa\mu_{ij}^*,1)$ for $i=1,...,21$ and $j=1,...,21$, where
\ba
\mu_{ij}^*=\left\lceil 2.8 \cos \left( \frac{\sqrt{i^2 + j^2}}{2\pi} \right) - 0.2 \right\rceil,
\ea
and $\kappa\in\{1,2,...,10\}$ is used to control the signal strength. Note that $\mu_{ij}^*$ has a piecewise constant structure because of the operation by $\left\lceil\cdot\right\rceil$ that denotes the integer part. In fact, $\mu_{ij}^*$ only takes $5$ possible values as shown in Figure  \ref{fig:2dimgrid2}.

Since the R package {\tt genlasso} does not provide a tuning method for the $\lambda$ in (\ref{eq:GenLasso}) for the two-dimensional grid graph setting, we report {\sf MSE} based on the $\lambda$ selected by cross validation, and {\sf FDP} and {\sf POW} are reported based on the $\lambda$ that minimizes the {\sf FDP}. The $\lambda$ in $\ell_0$-pen is tuned by the method in \cite{fan2018approximate}.

The results are shown in Figure \ref{fig:2dimgrid1}. It is clear that our method outperforms the other two in terms of all the evaluation criteria when the signal strength is not very small. When the signal strength is very small, GenLasso achieves the lowest {\sf MSE} but shows poor model selection performance. We also illustrate the solution path of our method in Figure \ref{fig:2dimgrid2}. Typical solutions of GenLasso and $\ell_0$-pen are visualized in Figure \ref{fig:2dimgrid3}. We observe that $\ell_0$-pen tends to oversmooth the data, while GenLasso tends to undersmooth. In terms of the computational time, BayesMSG, GenLasso and $\ell_0$-pen require 21.2, 26.7 and 8.4 seconds on average.

\subsubsection{Generic Graphs}
\begin{table}[t]
	\centering
	\small
	\begin{tabular}{l || l | l | l | l | l | lll}
		Name &	\# of nodes & \# of edges & mean.ER & sd.ER & diameter & \# of CC \\
		\hline
		\hline
		Chicago roadmap & 4126 & 4308 & 0.9575 & 0.0499 & 324 & 1 \\
		Enron email 	& 4112 & 14520 & 0.2831 & 0.2341 & 14 & 1 \\
		Facebook egonet & 4039 & 88234 & 0.0457 & 0.0608 & 8 & 1 \\
	\end{tabular}
\normalsize
\caption{Graph properties of the three real networks.} \label{table:2}
	\vspace{0.2in}
	\centering
	\small
	\begin{tabular}{l || l | l | l | l l  lll}
		Name & \# of clust &	\# of nodes in each cluster & \# of cuts &  total variation \\
		\hline
		\hline
		Chicago roadmap & 4 & $(576, 678, 835, 2037)$ & 31 & $31 \times \kappa$ \\
		Enron email 	& 4  & $(384, 538, 1531, 1659)$ & 4570 & $5047 \times \kappa$  \\
		Facebook egonet & 4  & $(750, 753, 778, 1758)$ & 651  & $ 1220 \times \kappa$ \\
	\end{tabular}
\normalsize
\caption{Important features of the signals on the three networks.} \label{table:3}
\end{table}
\begin{figure}[!h]
	\centering
	\centerline{\includegraphics[width = 6.2in]{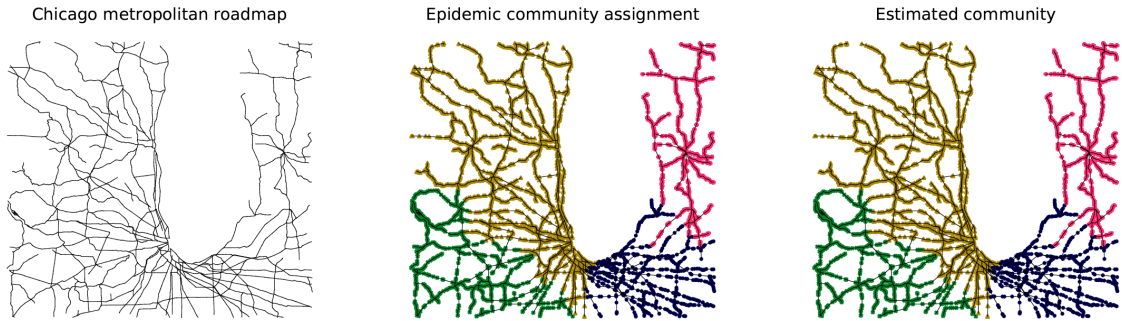}}
	\caption{The Chicago roadmap network with signals exhibiting four clusters.}
	\label{fig:generic0}
\end{figure}

In this section, we consider some graphical structures that naturally arise in real world applications.
The three graphs to be tested are the Chicago metropolitan area road network\footnote{The data set can be retrieved from \url{http://www.cs.utah.edu/~lifeifei/SpatialDataset.htm}.}, the Enron email network\footnote{The data set can be retrieved from \url{http://snap.stanford.edu/data/email-Enron.html}.}, and the Facebook egonet network\footnote{The data set can be retrieved from \url{http://snap.stanford.edu/data/ego-Facebook.html}.}. For all the three networks, we extract induced subgraphs of sizes about $4000$.
Graph properties for the three networks are summarized in Table~\ref{table:2}.
For each network, we calculate its number of nodes, number of edges, mean and standard deviation of effective resistances, diameter, and number of connected components. We observe that the three networks behave very differently.
The Chicago roadmap network is locally and globally tree-like, since its number of edges is very close to its number of nodes, and the distribution of its effective resistances highly concentrates around $1$. The other two networks, the Enron email network and the Facebook egonet, are denser graphs but their effective resistances behave in very different ways. 

For each network, we generate data according to $y_i\sim N(\kappa\mu_i^*,1)$ on its set of nodes, with the signal strength varies according to $\kappa\in\{1,2,...,5\}$. The signal $\mu^*$ for each graph is generated as follows:
\begin{enumerate}
	\item Pick four anchor nodes  from the the set of all nodes uniformly at random.
	\item For each node, compute the the length of the shortest path to each of the four anchor nodes.
	\item Code the $i$th node by $j$ if the $j$th anchor node is the closest one to the $i$th node. This gives four clusters for each graph.
	\item Generate a piecewise constant signal $\mu_i^* = j$.
\end{enumerate}
Some properties of the signals are summarized in Table \ref{table:3}, where the number of cuts of $\mu^*$ with respect to the base graph $G=(V,E)$ is defined by $\sum_{(i,j)\in E}\mathbb{I}\{\mu_i^*\neq \mu_j^*\}$, and the total variation of $\mu^*$ means $\sum_{(i,j)\in E}|\mu_i^*-\mu_j^*|$.
We also plot the signal on the Chicago roadmap network in Figure \ref{fig:generic0}.

Since the R package {\tt genlasso} does not provide a tuning method for the $\lambda$ in (\ref{eq:GenLasso}) for a generic graph, we report {\sf MSE} based on the $\lambda$ selected by cross validation, and {\sf FDP} and {\sf POW} are reported based on the $\lambda$ that minimizes the {\sf FDP}. The $\lambda$ in $\ell_0$-pen is tuned by the method in \cite{fan2018approximate}.

\begin{figure}[!t]
	\centering
	\centerline{\includegraphics[width = 6.2in]{./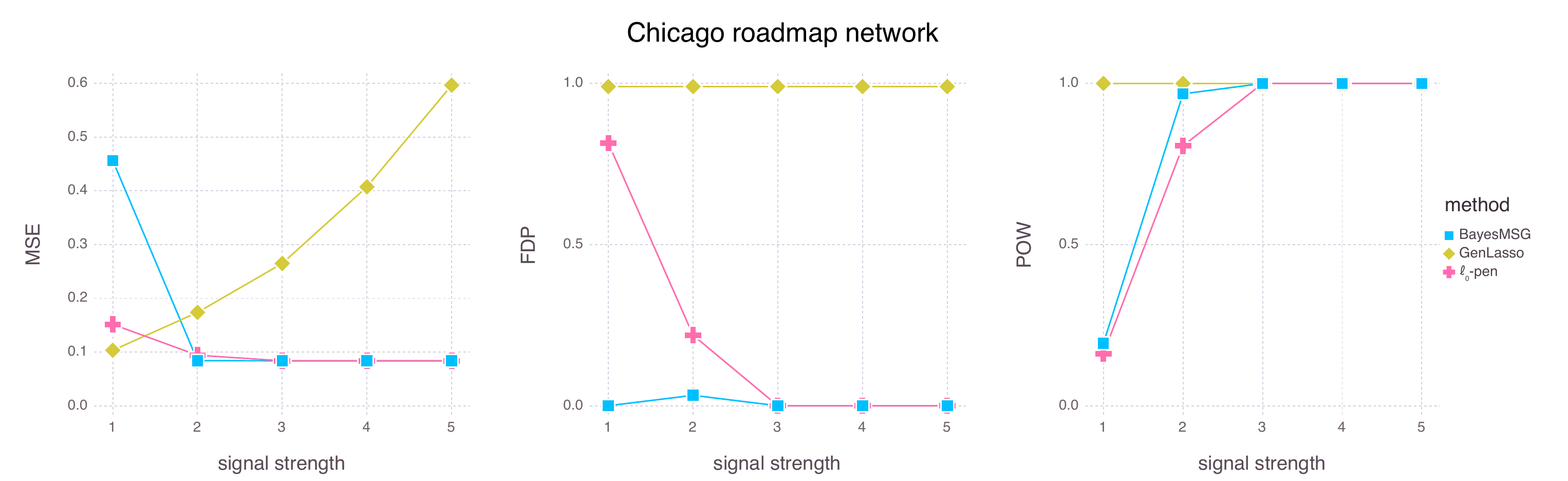}}
	\centerline{\includegraphics[width = 6.2in]{./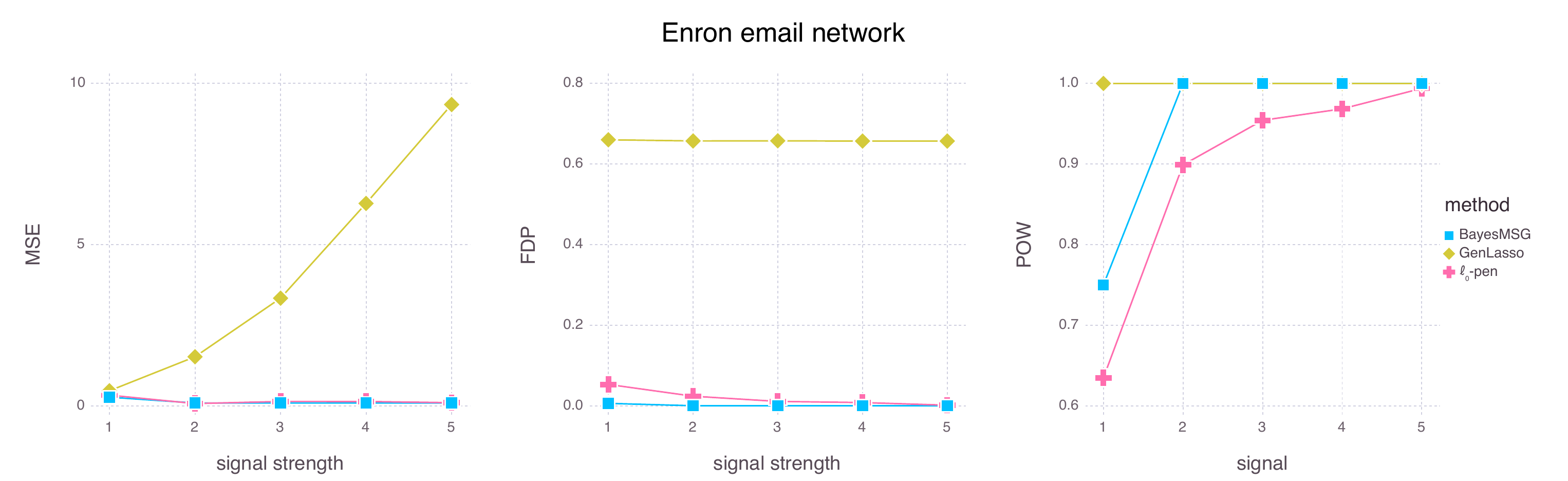}}
	\centerline{\includegraphics[width = 6.2in]{./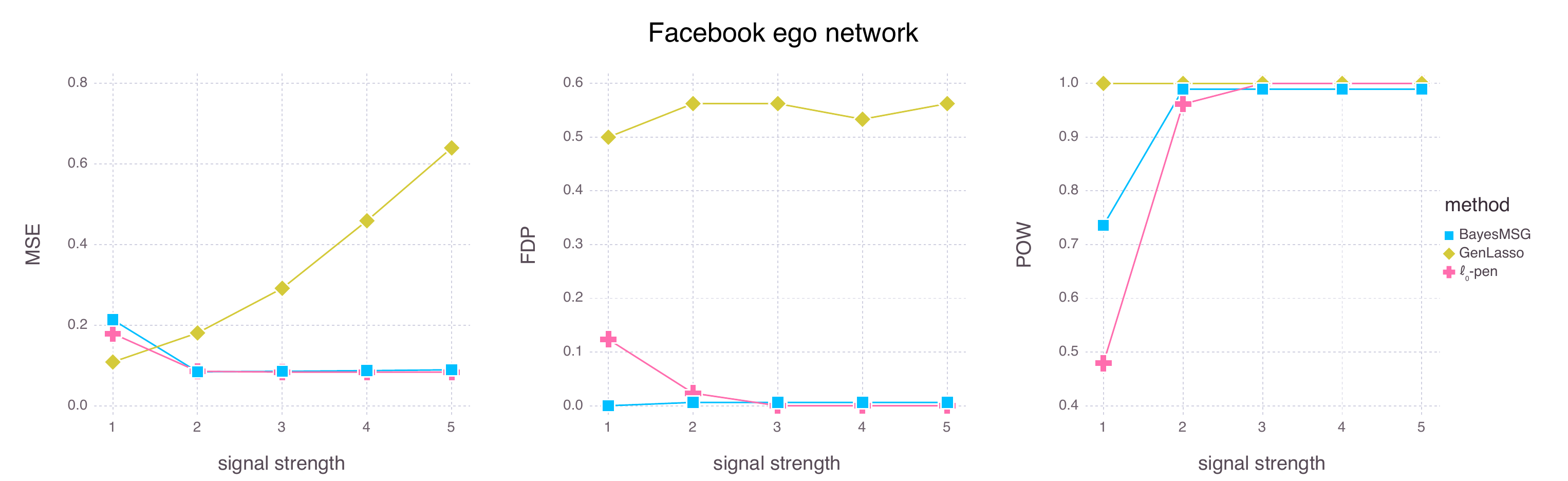}}
	\caption{Comparison of the three methods on generic graphs.
		(\textrm{Top}) Chicago Roadmap network; (\textrm{Center}) Enron Email network; (\textrm{Bottom}) Facebook Ego network.}
	\vspace{-0.3in}
	\label{fig:generic1}
\end{figure}
The results are shown in Figure \ref{fig:generic1}. It is clear that our method outperforms the other two. When the signal strength $\kappa$ is small, we observe that GenLasso sometimes has the smallest {\sf MSE}, but its {\sf MSE} grows very quickly as $\kappa$ increases. For most $\kappa$'s, our method and $\ell_0$-pen are similar in terms of {\sf MSE}. In terms of the model selection performance, GenLasso is not competitive, and our method outperforms $\ell_0$-pen.

\subsubsection{Comparison of Different Base Graphs}
\label{sec:comparision-of-different-base-graphs}
\begin{figure}[!t]
	\centering
	\centerline{\includegraphics[width = 6in]{./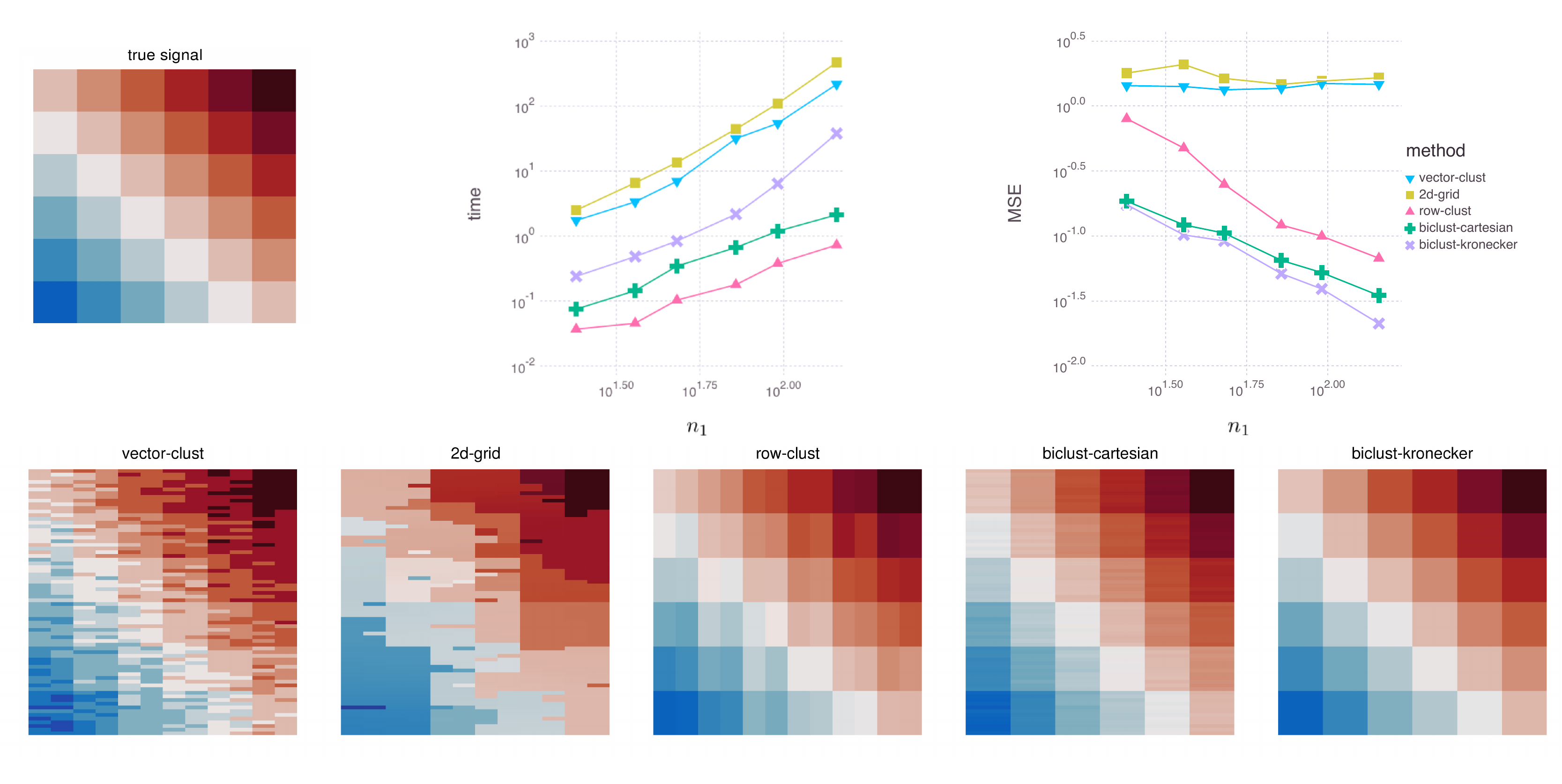}}
	\caption{(\textrm{Top left}) True signal; (\textrm{Top center}) Computational time; (\textrm{Top right}) {\sf MSE}; \textrm{(Bottom)}  Heatmaps of estimators using different models ($n_1=72$).}
	\label{fig:check}
\end{figure}

One key ingredient of our Bayesian model selection framework is the specification of the base graph. For the same problem, there can be multiple ways to specify the base graph that lead to completely different models and methods. In this section, we consider an example and compare the performances of different Bayesian methods with different base graphs.

We consider observations $y_{ij}\sim N(\theta_{ij}^*,1)$ for $i\in[n_1]$ and $j\in[n_2]$. We fix $n_2=12$ and vary $n_1$ from $24$ to $144$. The signal matrix $\theta^*\in\mathbb{R}^{n_1\times n_2}$ has a checkerboard structure as shown in Figure \ref{fig:check}. That is, the $n_1\times n_2$ matrix is divided into $6\times 6$ equal-sized blocks. On the $(u,v)$th block, $\theta_{ij}^*=2(u+v-6)$.

The following models are considered to fit the observations:
\begin{enumerate}
	\item \textit{Vector clustering.} We regard the matrix $y\in\mathbb{R}^{n_1\times n_2}$ as a $n_1n_2$-dimensional vector and apply the clustering model described in Section \ref{sec:clustering-model} with $n=k=n_1n_2$.
	\item \textit{Two-dimensional grid graph.} The two-dimensional image denoising model described in Example \ref{ex:2dimgrid} is fit to the observations.
	\item \textit{Row clustering.} We regard the matrix $y\in\mathbb{R}^{n_1\times n_2}$ as $n=n_1$ observations in $\mathbb{R}^d$ with $d=n_2$, and then fit the clustering model described in Section \ref{sec:clustering-model} to the rows of $y$ with $n=k=n_1$.
	\item \textit{Cartesian product biclustering.} The biclustering model induced by the Cartesian product described in Section \ref{sec:biclust} is fit to the observations.
	\item \textit{Kronecker product biclustering.} The biclustering model induced by the Kronecker product described in Section \ref{sec:biclust} is fit to the observations.
\end{enumerate}

Figure~\ref{fig:check} summarizes the results. In terms of {\sf MSE}, the vector clustering and the two-dimensional grid graph do not fully capture the structure of the data and thus perform worse than all other methods. Both the biclustering models are designed for the checkerboard structure, and they therefore have the best performances. Between the two biclustering models, the one induced by the Kronecker product has a smaller {\sf MSE} at the cost of a higher computational time.

To summarize the comparisons, we would like to emphasize that the right choice of the base graph has an enormous impact to the result. This also highlights the flexibility of our Bayesian model selection framework that is able to capture various degrees of structures of the data.

\subsubsection{Biclustering}
\label{sec:biclust-sim-manu}
\begin{figure}[!t]
	\centering
	\centerline{\includegraphics[width = 6.2in]{./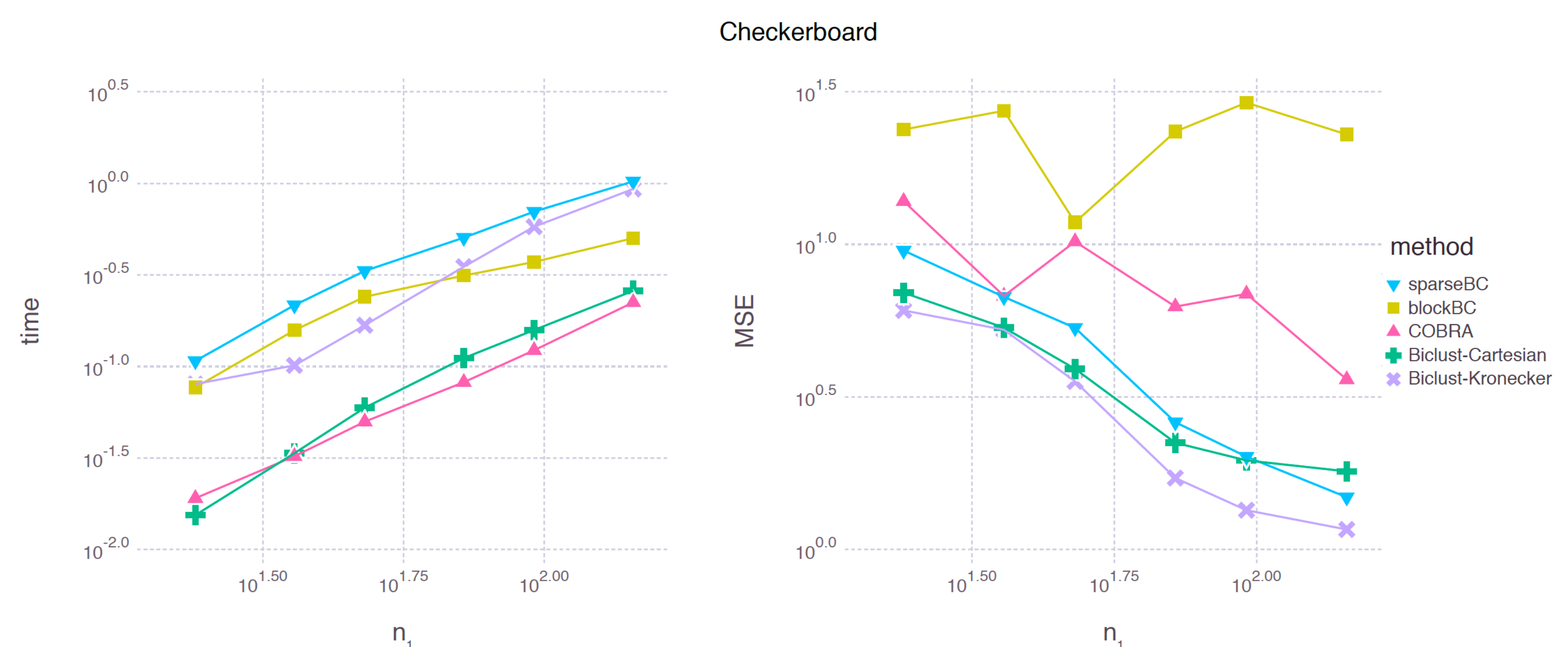}}
	\caption{Comparisons of the 5 biclustering methods for the checkerboard data. (\textrm{Left}) {\sf Time}; (\textrm{Right}) {\sf MSE}.}
	\label{fig:r4}
\end{figure}

To evaluate performance of our biclustering methods in comparison to existing methods, we provide a simulation study using the same data in Section~\ref{sec:comparision-of-different-base-graphs}. The true parameter $\theta \in \mathbb{R}^{n_1\times n_2}$ has a checkerboard structure and is visualized in Figure~\ref{fig:check}. More precisely, we consider observations $y_{ij}\sim N(\theta_{ij}^*,1)$ for $i\in[n_1]$ and $j\in[n_2]$. We fix $n_2=12$ and vary $n_1$ from $24$ to $144$. The signal matrix $\theta^*\in\mathbb{R}^{n_1\times n_2}$ has a checkerboard structure as shown in Figure \ref{fig:check}. That is, the $n_1\times n_2$ matrix is divided into $6\times 6$ equal-sized blocks. On the $(u,v)$th block, $\theta_{ij}^*=2(u+v-6)$.

For comparison, we consider the three competitors, blockBC \citep{govaert2003clustering}, sparseBC \citep{tan2014sparse},  and COBRA \citep{chi2017convex}, with the implementation via R packages {\tt blockcluster}, {\tt sparseBC} and {\tt cvxbiclustr}. In summary, blockBC and sparseBC are based on the minimization of $\sum_{i,j} (y_{ij} - \mu_{z_1(i)z_2(j)})^2$ given the number $k_1$ and $k_2$ of row and column clusters, respectively, where $\mu \in \mathbb{R}^{k_1 \times k_2}$ is a matrix of latent hidden bicluster means and $z_1$ and $z_2$ are row and column cluster assignments, respectively. The methods blockBC and sparseBC use different approaches for the estimation of row and column clusters, i.e. blockBC uses a block EM algorithm. sparseBC solves a penalized linear regression with the $\ell_1$-penalty $\lambda\norm{\mu}_1$. COBRA is an alternating direction method of multipliers (ADMM) algorithm based on the minimization of $\sum_{i,j} (y_{ij} - \theta_{ij})) + \textrm{pen}_{\rm row}(\theta) + \textrm{pen}_{\rm col}(\theta)$ where $\textrm{pen}_{\rm row}(\theta) = \sum_{i<j}w_{ij}\norm{\theta_{i*} - \theta_{j*}}_2$ and $\textrm{pen}_{\rm col}$ is defined similarly.
\begin{figure}[htbp]
	\centering
	\centerline{\includegraphics[width = 6.2in]{./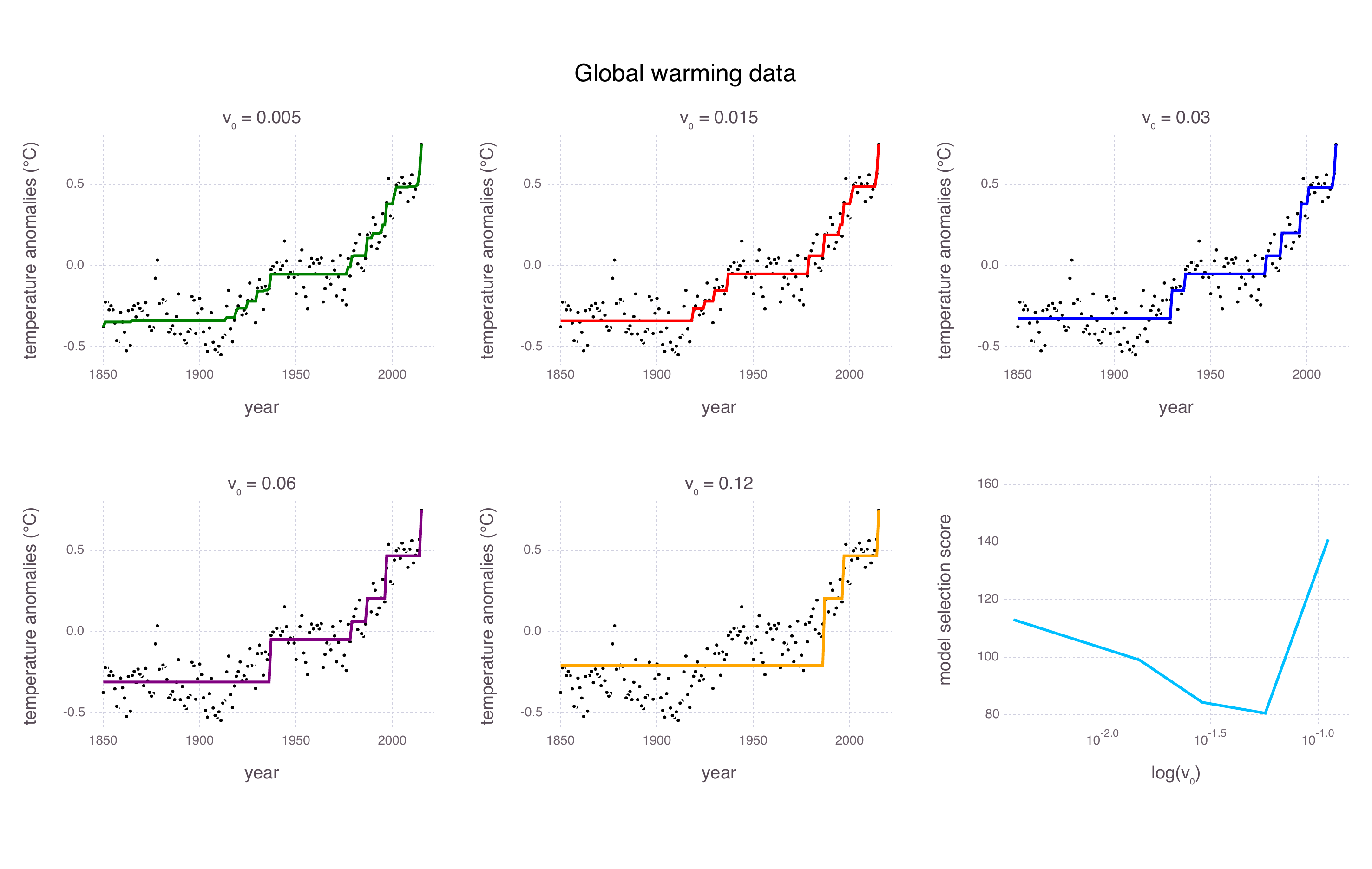}}
	\vspace{-0.3in}
	\caption{The solution path (smaller $v_0$ at top left and larger $v_0$ at bottom right) for Bayesian reduced isotonic regression.}
	\label{fig:iso1}
	\bigskip
	\vspace{0.2in}
	\centerline{\includegraphics[width = 3.2in]{./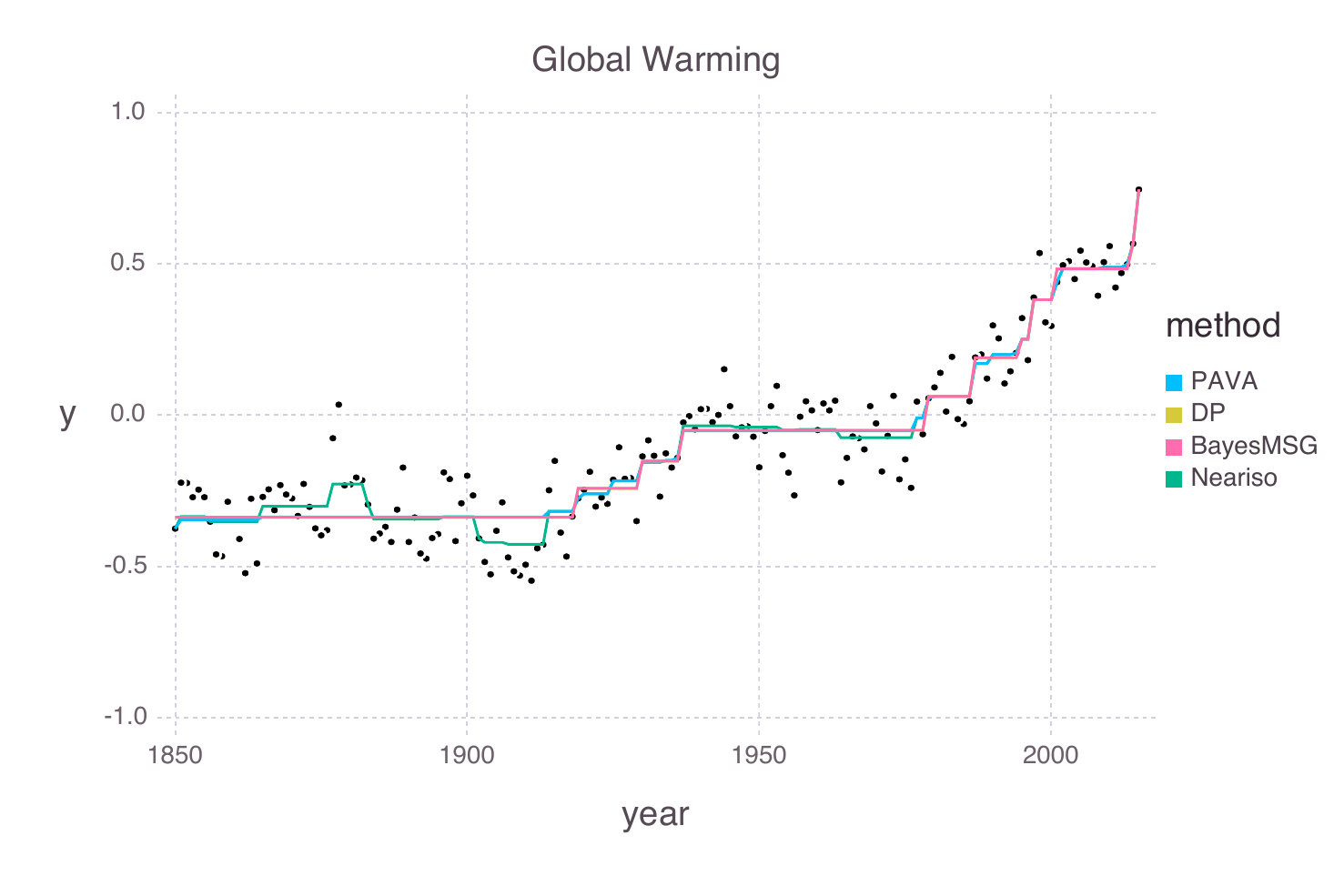} \includegraphics[width = 3.2in]{./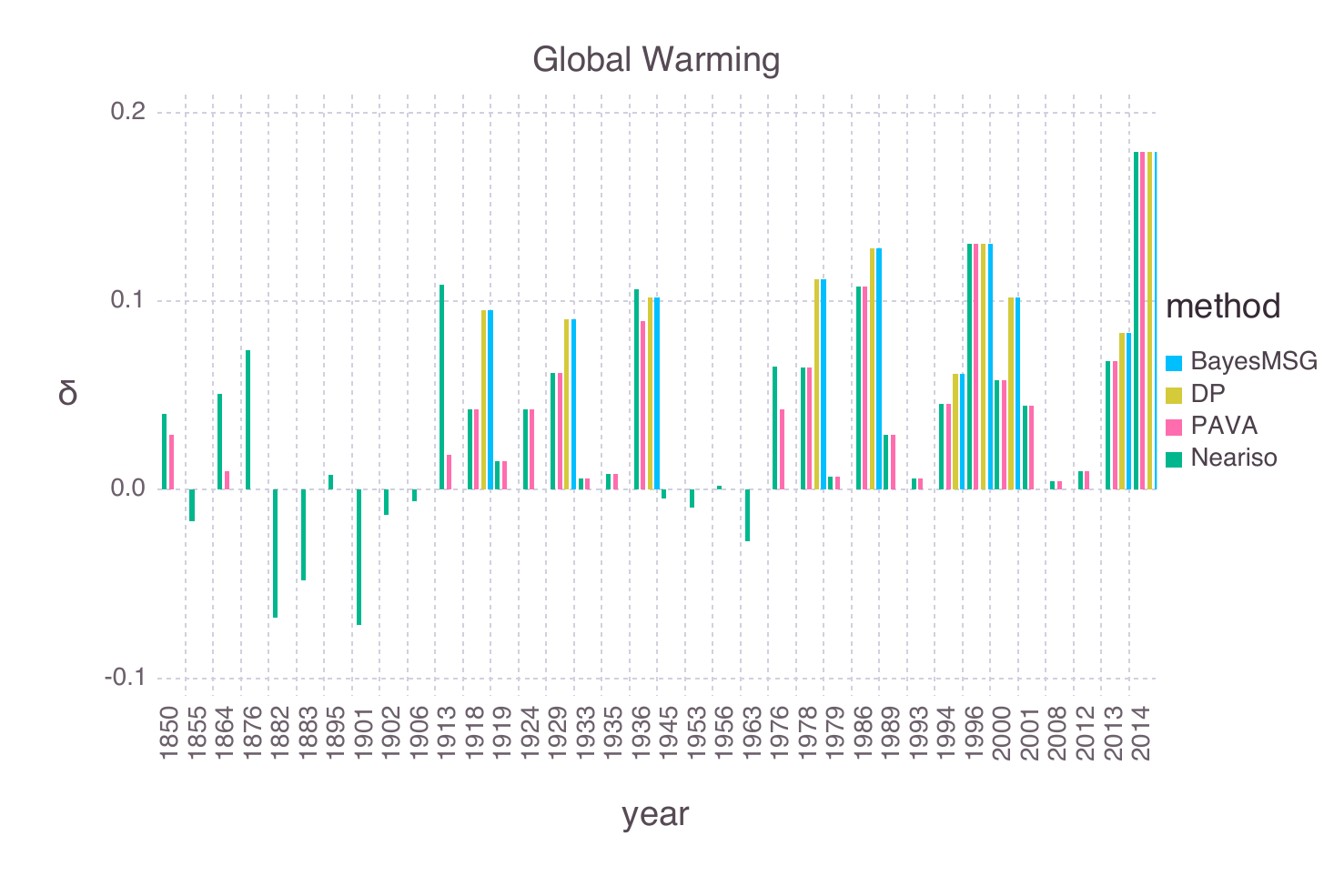}}
	\caption{Comparison of various isotonic regression methods. (\textrm{Left}) The estimated isotonic signals (PAVA, DP, BayesMSG, NearIso); let us mention that DP and BayesMSG signals exactly coincide. (\textrm{Right}) The estimated differences between the two adjacent years; the only years (x-axis) with at least one nonzero differences are reported.}
	\label{fig:r3}
\end{figure}

The result is displayed in Figure~\ref{fig:r4}. The result shows that BayesMSG Kronecker outperforms other methods in terms of {\sf MSE}. BayesMSG Cartesian also produces competitive solutions in terms of {\sf MSE} within short amount of computation time. Note that BayesMSG and COBRA are regularization path based methods, and require less computation time than other approaches (blockBC and sparseBC) based on cross validation.

\subsection{Real Data Applications}
In this section, we apply our methods to three different data sets.

\subsubsection{Global warming data}
\label{sec:global-warming-sim}
The global warming data has been studied previously by \cite{wu2001isotonic,tibshirani2011nearly}. It consists of 166 data points in degree Celsius from 1850 to 2015. Here we fit the Bayesian reduced isotonic regression discussed in Section \ref{sec:iso}. Our results are shown in Figure~\ref{fig:iso1}.

When $v_0$ is nearly zero, the solution is very close to the regular isotonic regression that can be solved efficiently by the pool-adjacent-violators algorithm (PAVA) \citep{mair2009isotone}. When $v_0=0.005$, we obtain a fit with 24 pieces. The PAVA outputs a very similar fit also with 24 pieces. In contrast, the Bayesian model selection procedure suggests a model with $v_0=0.06$, which has only $6$ pieces, a significantly more parsimonious and a more interpretable fit. This may suggest global warming is accelerating faster in recent years. The same conclusion cannot be obtained from the suboptimal fit with $24$ pieces.

To compare with existing methods, we have implemented reduced isotonic regression with dynamic programming (DP) algorithm introduced by \citet{gao2017minimax} and the near isotonic (NearIso) regression \citep{tibshirani2011nearly}. 
Figure~\ref{fig:r3} shows the estimated signals (left panel) and the estimated differences $\{\theta_{i+1} -\theta_{i}: i=1,\cdots,n-1 \}$ between the two adjacent years (right panel). Since the DP algorithm does not have a practical model selection procedure, we use the number of pieces $k = 9$ selected by BayesMSG. NearIso uses Mallow's $C_p$ for model selection  \citep{tibshirani2011nearly}. The left panel of Figure~\ref{fig:r3} shows that BayesMSG outputs a more sparse but still reasonable isotonic fit. 
On the other hand, NearIso relaxes the isotonic constraint, allowing non-monotone signals but penalizing the decreasing portion of variations. Indeed, one can see from the right panel of Figure~\ref{fig:r3} that the changepoints of NearIso contain those of PAVA. The PAVA solution is the least parsimonious isotonic fit by definition, and thus we can conclude that NearIso does not seem to find a more parsimonious fit. 
\begin{figure}[htbp]
	\centering
	\centerline{\includegraphics[width = 6.2in]{./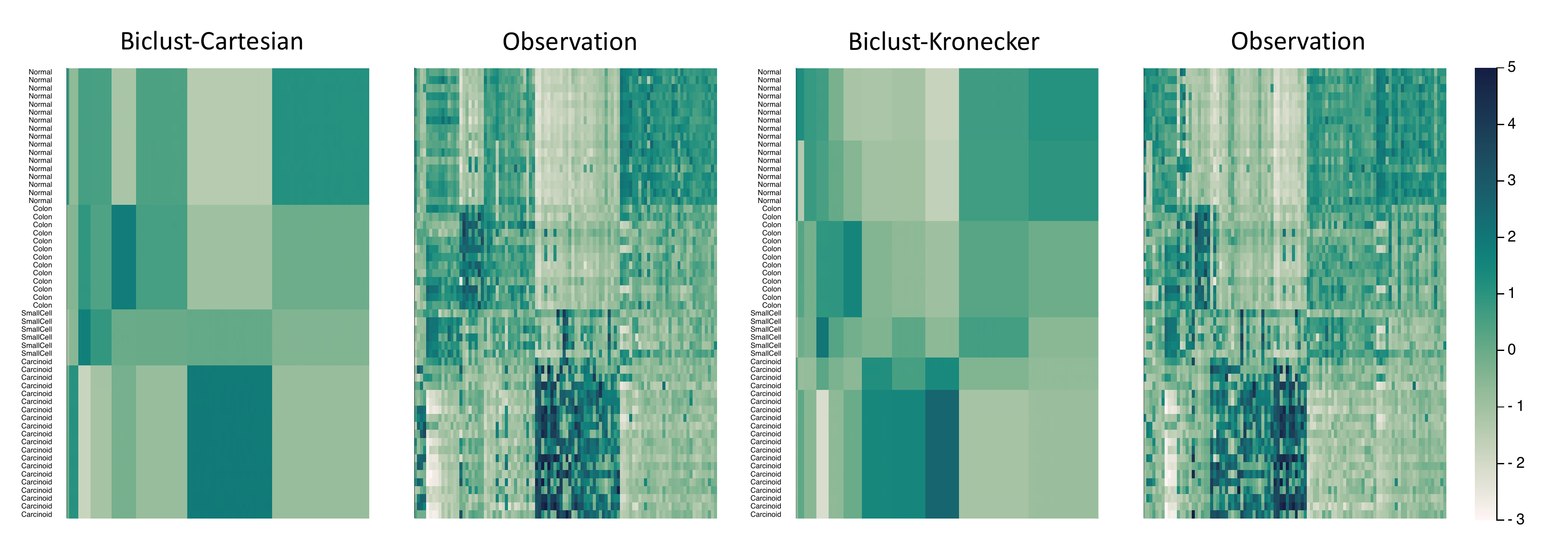}}
	\vspace{-0.1in}
	\caption{Results of biclustering for the lung cancer data\protect\footnotemark.}
	\label{fig:biclust1}
	\bigskip
	\centering
	\centerline{\includegraphics[width = 6.2in]{./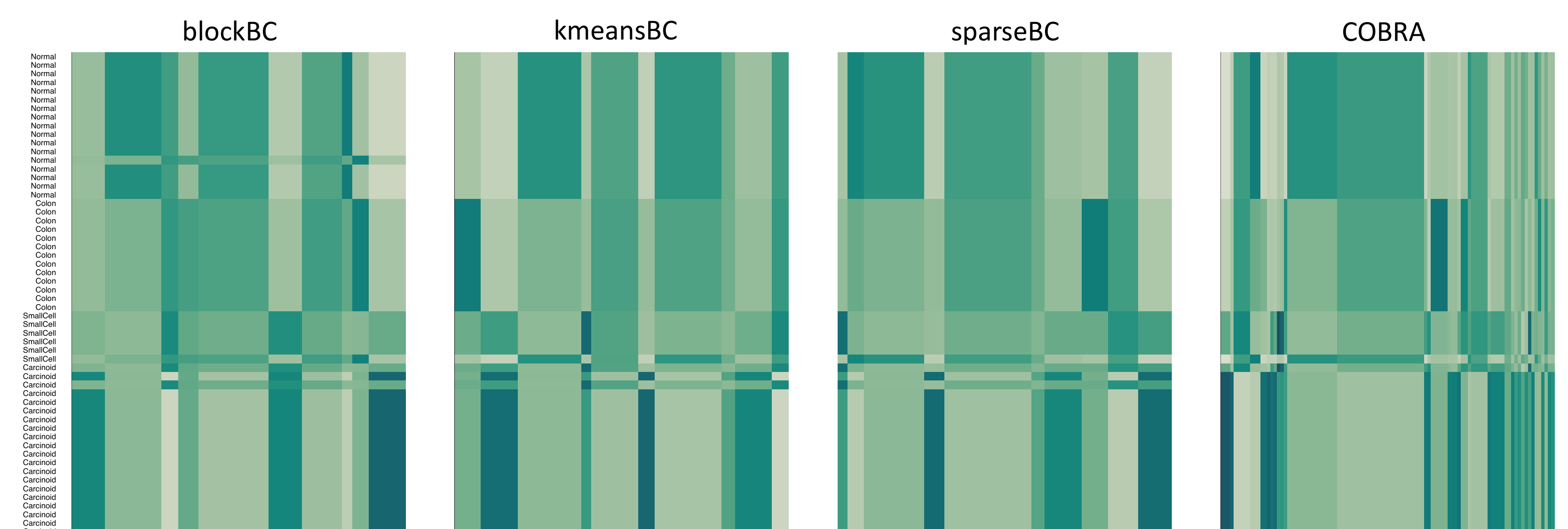}}
	\caption{Comparison of existing biclustering methods.}
	\label{fig:r5}
	\bigskip
	\centering
	\centerline{\includegraphics[width = 3.3in]{./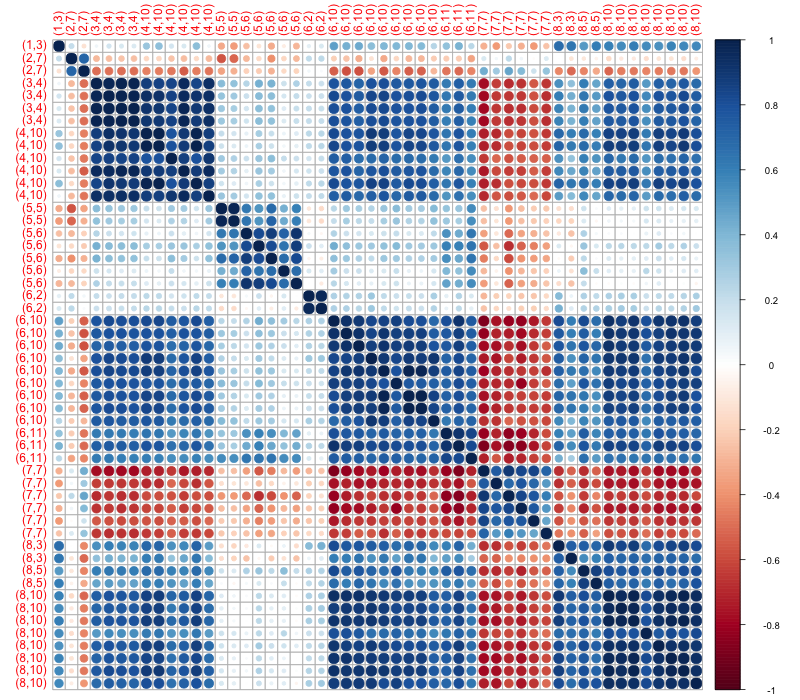}}
	\caption{A correlation plot of the selected genes.}
	\label{fig:biclust2}
\end{figure}
\subsubsection{Lung cancer data}
\label{sec:lung-cancer-sim}
We illustrate the Bayesian biclustering models by a gene expression data set from a lung cancer study. The same data set has also been used by \cite{bhattacharjee2001classification, lee2010biclustering,sill2011robust,chi2017convex}. Following \cite{chi2017convex}, we study a subset with $56$ samples and $100$ genes.  The $56$ samples comprise $20$ pulmonary carcinoid samples (Carcinoid), $13$ colon cancer metastasis samples (Colon), $17$ normal lung samples (Normal) and $6$ small cell lung carcinoma samples (SmallCell). We also apply the row and column normalizations as has been done in \cite{chi2017convex}.

Our goal is to identify sets of biologically relevant genes, for example, that are significantly expressed for certain cancer types. We fit both Bayesian biclustering models (Section \ref{sec:biclust}) induced by the Cartesian and Kronecker products to the data with $n_1=56$, $n_2=100$, $k_2=10$, and $k_2=20$. Recall that $k_1$ and $k_2$ are upper bounds of the numbers of row and column clusters, and the actual numbers of row and column clusters will be learned through Bayesian model selection. To pursue a more flexible procedure of model selection, we use two independent pairs of $(v_0,v_1)$ for the row structure and the column structure. To be specific, let $(v_0,v_1)$ be the parameters for the row structure, and the parameters for the column structure are set as $(cv_0,cv_1)$ with some $c\in \{1/10, 1/5, 1/2, 1,2,5,10\}$. Then, the model selection scores are computed with both $v_0$ and $c$ varying in their ranges.
\begin{table}[!t]
	\footnotesize
	\centering
	\begin{tabular}{c | l l lll}
		\footnotesize Label &  \footnotesize Gene description/GenBank ID \\
		\hline\hline
		\footnotesize $(2,7)$ & ``proteoglycan 1, secretory granule'', ``AI932613: Homo sapiens cDNA, 3 end" \\
		\hline
		\multirow{3}{*}{\footnotesize $(3,4)$}& ``AI147237: Homo sapiens cDNA, 3 end'', ``S71043: Ig A heavy chain allotype 2'', \\
		& ``advanced glycosylation end product-specific receptor'', \\
		& ``leukocyte immunoglobulin-like receptor, subfamily B'' \\
		\hline
		\footnotesize $(5,5)$& `immunoglobulin lambda locus", ``glypican 3'' \\
		\hline
		\multirow{3}{*}{\footnotesize$(5,6)$}  &``glutamate receptor, ionotropic, AMPA 2'', ``small inducible cytokine subfamily A'', \\ &``W60864: Homo sapiens cDNA, 3 end'', ``secreted phosphoprotein 1'',\\
		&``LPS-induced TNF-alpha factor''\\
		\hline
		\footnotesize $(6,2)$& ``interleukin 6", ``carcinoembryonic antigen-related cell adhesion molecule 5'' \\
		\hline
		\multirow{2}{*}{\footnotesize $(6,11)$}& ``secretory granule, neuroendocrine protein 1", ``alcohol dehydrogenase 2'', \\ & ``neurofilament, light polypeptide'' \\
		\hline
		\footnotesize $(8,3)$ & ``fmajor histocompatibility complex, class II",  ``glycoprotein (transmembrane) nmb" \\
		\hline
		\footnotesize $(8,5)$ & ``N90866: Homo sapiens cDNA, 3 end'', `receptor (calcitonin) activity modifying protein 1''
	\end{tabular}
\normalsize
	\caption{Description or GenBank ID of the selected gene clusters of size at least $2$ and at most $5$.} \label{table:4}
\end{table}
\footnotetext{The rows of the four heatmaps are ordered in the same way according to the labels of tumor types.}

The results are shown in Figure~\ref{fig:biclust1}. The two methods select different models with different interpretations. The Cartesian product fit gives $4$ row clusters and $8$ column clusters, while the Kroneker product fit gives $6$ row clusters and $11$ column clusters. Even though we have not used the information of the row labels for both biclustering methods, the row clustering structure output by the Cartesian product model almost coincides with these labels except one. On the other hand, the Kronecker product model leads to a finer row clustering structure, with potential discoveries of subtypes of both normal lung samples and pulmonary carcinoid samples.

Using the same lung cancer data set, we have also compared the BayesMSG Cartesian product method with several existing biclustering methods implemented via R packages {\tt blockcluster}, {\tt sparseBC} and {\tt cvxbiclustr}. Figure~\ref{fig:r4} displays the final estimates of the four competitors, blockBC \citep{govaert2003clustering}, sparseBC \citep{tan2014sparse}, kmeansBC \citep{gao2016optimal} and COBRA \citep{chi2017convex}. One can qualitatively compare these results with the BayesMSG solutions in Figure~\ref{fig:r5}.

The results show that BayesMSG and all the competitors select models with $\widehat{k}_1 = 4$ row clusters. However, BayesMSG selects a smallest number of column clusters. This implies that BayesMSG favors a more parsimonious model compared to the others. Also, the selected BayesMSG Cartesian model achieves a lowest misclassification error for the row (cancer type).

An important goal of biclustering is to simultaneously identify gene and tumor types. To be specific, we seek to find genes that show different expression levels for different types of samples. To this end, we report those genes that are clustered together by both the Cartesian and Kronecker product structures. Groups of genes with size between $2$ and $5$ are reported in Table~\ref{table:4}. Note that our gene clustering is assisted by the sample clustering in the biclustering framework, which is different from gene clustering methods that are only based on the correlation structure \citep{bhattacharjee2001classification}. As a sanity check, the correlation matrix of the subset of the selected genes is plotted in Figure \ref{fig:biclust2}, and we can observe a clear pattern of block structure.

\subsubsection{Chicago crime data}
\label{sec:chicago-crime-sim}

\begin{figure}[!t]
	\centering
	\centerline{\includegraphics[width = 6.2in]{./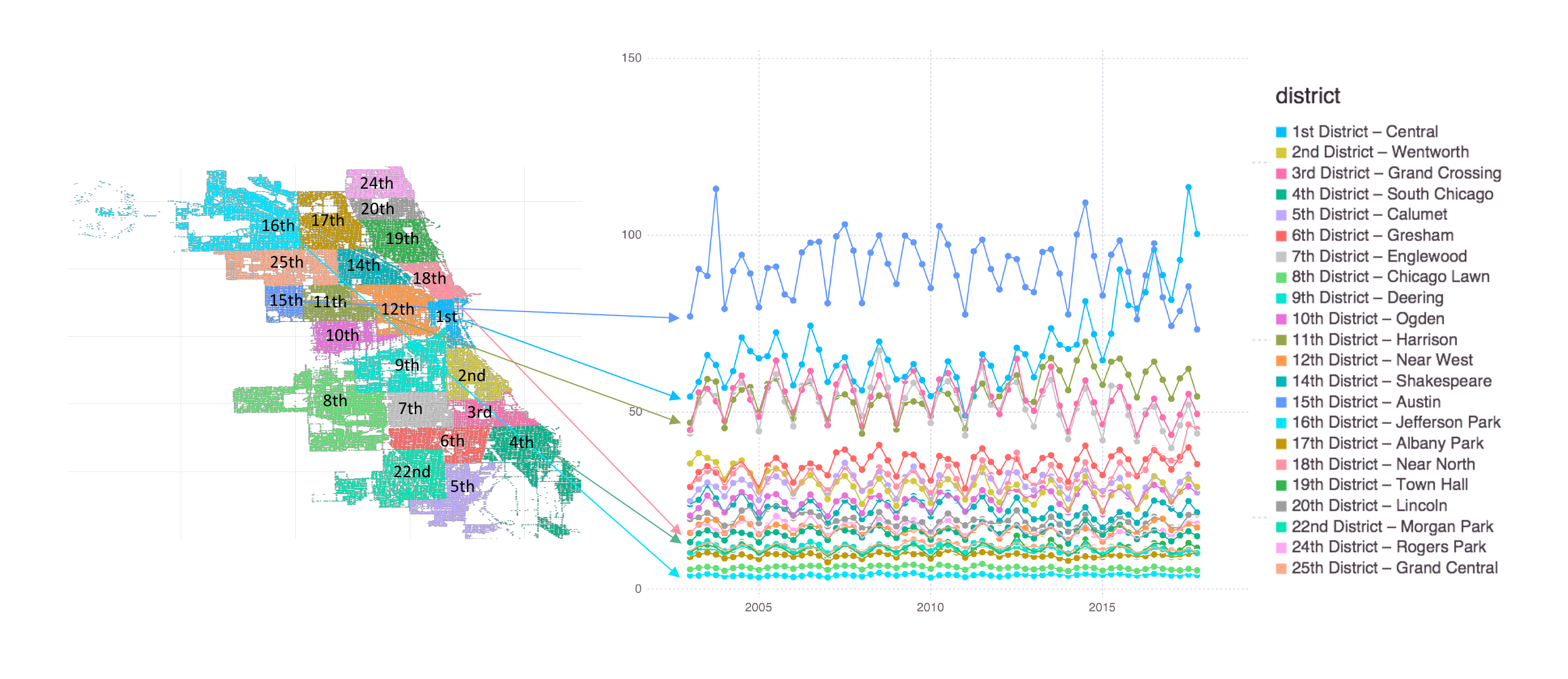}}
	\vspace{-0.15in}
	\caption{Visualization of the Chicago crime data after preprocessing.}
	\label{fig:chicago1}
	\vspace{0.3in}
	\centering
	\centerline{\includegraphics[width = 6.5in]{./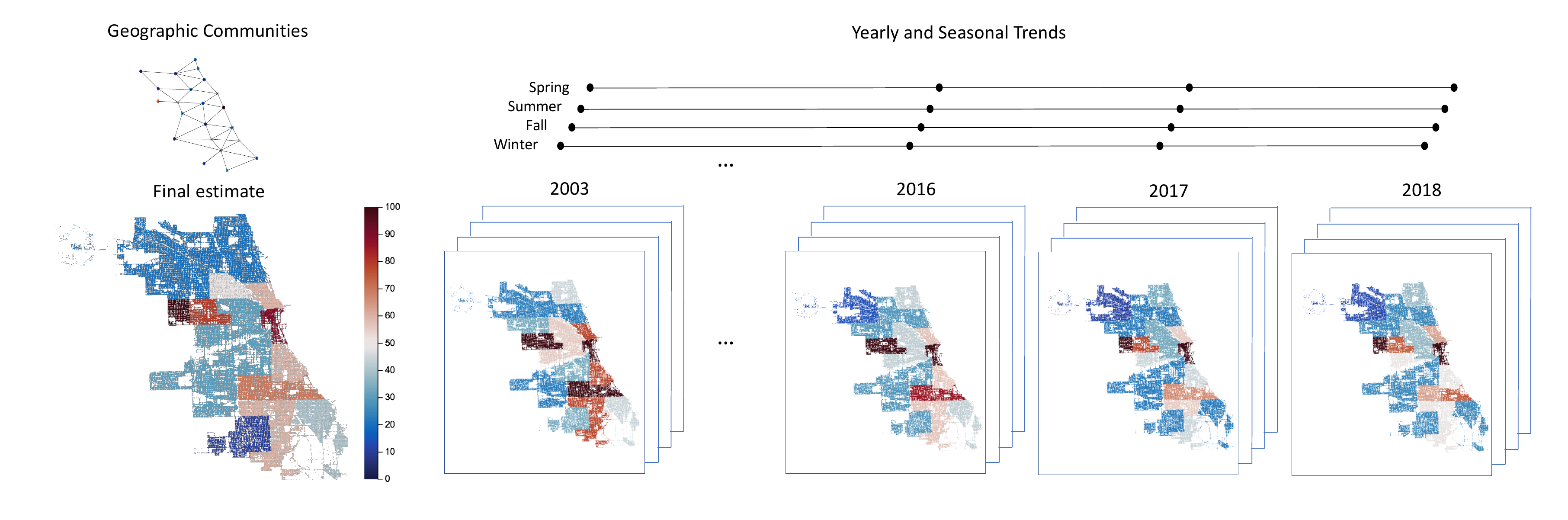}}
	\caption{Visualization of Bayesian model selection for the Chicago crime data. (\textrm{Left}) The overall geographical pattern; (\textrm{Right}) Four different patterns from $2003$ to $2018$.}
	\label{fig:chicago2}
\end{figure}

The Chicago crime data is publicly available at Chicago Police Department website\footnote{The Chicago crime data set can be retrieved from \url{https://data.cityofchicago.org/Public-Safety/Crimes-2001-to-present}.}. The report in the website contains time, type, district, community area, latitude and longitude of each crime occurred. After removing missing data, we obtain $6.0$ millions of crimes that occurred in $22$ police districts from 2003 to 2018 (16 years). Here we restrict ourselves to the analysis of the spatial and temporal structure of Chicago crimes within the past few years, ignoring the types or categories of the crimes.

Since the $22$ districts have different area sizes, we divide the total numbers of crimes in each district by its population density (population per unit area). We will call this quantity the \textit{Chicago crime rate}. We observe that the Chicago crime rates exhibit decreasing patterns over the years. Since our study is focused on the relative comparisons among different police districts, we divide each entry by the sum of the yearly Chicago crime rates over all the $22$ district in its current year. We will call this quantity the \textit{relative crime rate}. Admittedly this preprocessing step does not reflect the difference between the \textit{residential} and the \textit{floating} populations in each district which might be important for the analysis of the crime data. For instance, around O'Hare international airport, it is very likely that the floating and the residential populations differ a lot. After the preprocessing, we obtain a three-way tensor with size $22\times 16\times 4$, for $22$ districts, $16$ years, and $4$ seasons, which is visualized in Figure~\ref{fig:chicago1}.

Our main interest is to understand the geographical structure of the relative crime rates and how the structure changes over the year. A Bayesian model is constructed for this purpose by using the graphical tools under the proposed framework. We define a graph characterizing the geographical effect by $G_1 = (V_1,E_1)$ with $V_1 = \{1,2,\cdots,22\}$ and $E_1 = \{(i,j) : \text{the $i$th and $j$th districts are adjacent} \}$. A graph characterizing the temporal effect is given by $G_2 = (V_2,E_2)$ with $V_2 = \{1,2,...,16\}$ and $E_2 = \{(i,i+1): i=1,...,15\}$. Then, the $22\times 16\times 4$ tensor is modeled by a spike-and-slab Laplacian prior with the base graph $G_1 \square G_2$, in addition to a multivariate extension (Section \ref{sec:multivariate}) along the dimension of $4$ seasons.

The result of the Bayesian model selection for the Chicago crime data is visualized in Figure~\ref{fig:chicago2}. The geographical structure of the relative crime rates exhibit four different patterns according to the partition $\{2003,2004,...,2015\}$, $\{2016\}$, $\{2017\}$, $\{2018\}$. While geographical compositions of the crimes are similar from $2003$ to $2015$, our results reveal that the last three years have witnessed dramatic changes. In particular, in these three years, the relative crime rates of Districts 11 and 15 were continuously decreasing, and the relative crime rates of Districts 1 and 18 show the opposite trend. This implies that the overall crime pattern is moving away from historically dangerous areas to downtown areas in Chicago.


\acks{We thank Matthew Stephens for helpful discussions. Research of CG is supported in part by NSF grant DMS-1712957 and NSF CAREER award DMS-1847590.}

\newpage
\appendix


\section{Some Basics on Linear Algebra}

For a symmetric matrix $\Gamma$ with rank $r$, it has an eigenvalue decomposition $\Gamma=UDU^T$ with some orthonormal matrix $U\in \mathcal{O}(p,r)=\{V\in\mathbb{R}^{p\times r}:V^TV=I_r\}$ and some diagonal matrix $D$ whose diagonal entires are all positive. Then, the Moore-Penrose pseudo inverse of $\Gamma$ is defined by
\ba
\Gamma^+=UD^{-1}U^T.
\ea

\begin{lemma} \label{lemA1}
	Consider a symmetric and invertible matrix $R\in\mathbb{R}^{r\times r}$. Then, we have
	\begin{equation}
	R=V^T(VR^{-1}V^T)^+V, \label{eq:useful}
	\end{equation}
	for any $V\in\mathcal{O}(p,r)$.
\end{lemma}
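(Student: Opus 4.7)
The plan is to reduce everything to the eigendecomposition of $R^{-1}$ and exploit the fact that premultiplying an orthonormal frame by an $r \times r$ orthogonal matrix preserves orthonormality of columns. This is what makes the pseudo-inverse (as defined in the excerpt via $UDU^T \mapsto UD^{-1}U^T$) interact cleanly with the conjugation $\Gamma \mapsto V\Gamma V^T$.

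First I would diagonalize $R^{-1}$. Since $R$ is symmetric and invertible, so is $R^{-1}$, and we can write $R^{-1} = WDW^T$ with $W \in \mathbb{R}^{r \times r}$ orthogonal and $D$ diagonal with nonzero diagonal entries; equivalently $R = WD^{-1}W^T$. Substituting,
\[
V R^{-1} V^T \;=\; (VW)\, D\, (VW)^T.
\]
The crucial observation is that $VW \in \mathcal{O}(p,r)$: indeed $(VW)^T(VW) = W^T V^T V W = W^T I_r W = I_r$, since $V \in \mathcal{O}(p,r)$ and $W$ is orthogonal. Thus the right-hand side is precisely an eigendecomposition of the symmetric rank-$r$ matrix $VR^{-1}V^T$ in the form required by the pseudo-inverse definition.

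Applying the definition of $(\cdot)^+$, I then get
\[
(V R^{-1} V^T)^+ \;=\; (VW)\, D^{-1}\, (VW)^T \;=\; V\bigl(WD^{-1}W^T\bigr)V^T \;=\; V R V^T.
\]
Finally, left- and right-multiplying by $V^T$ and $V$ and using $V^T V = I_r$ again,
\[
V^T (V R^{-1} V^T)^+ V \;=\; V^T V \, R \, V^T V \;=\; R,
\]
which is (\ref{eq:useful}).

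There is essentially no obstacle here once the eigendecomposition is set up; the only point to be careful about is verifying that $VW$ actually qualifies as the orthonormal factor in the pseudo-inverse definition, which is why $W$'s orthogonality matters. (If one wanted to treat genuinely indefinite symmetric $R$, one would extend the paper's pseudo-inverse convention to allow negative diagonal entries in $D$; the same calculation goes through verbatim.)
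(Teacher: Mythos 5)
Your proof is correct and follows essentially the same route as the paper's: diagonalize $R$ (you diagonalize $R^{-1}$, which is equivalent), observe that $VW$ remains orthonormal so the pseudo-inverse definition applies directly to $(VW)D(VW)^T$, and finish with $V^TV=I_r$. Your parenthetical about extending the pseudo-inverse convention to indefinite $R$ is a reasonable observation but not needed here, since in the paper's applications $R=V^T\Omega V$ is positive definite.
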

\begin{proof}
	To prove (\ref{eq:useful}), we write $R$ as its eigenvalue decomposition $W\Lambda W^T$ for some $W\in\mathcal{O}(r,r)$ and some invertible diagonal matrix $\Lambda$. Then, it is easy to see that $VW\in\mathcal{O}(p,r)$, and we thus have 
	\ba
	(VR^{-1}V^T)^+=(VW\Lambda^{-1}W^TV^T)^+=VW\Lambda W^TV^T,
	\ea
	and then
	\ba
	V^T(VR^{-1}V^T)^+V=V^TVW\Lambda W^TV^TV=W\Lambda W^T=R.
	\ea
\end{proof}

\begin{lemma} \label{lem:matrix_inv}
	Let $A, B \in \mathbb{R}^{n\times m}$ be matrices of full column rank $m$ (i.e. $m \leq n$). Let $Z_A$ and $Z_B$ span the nullspaces of $A$ and $B$, respectively. That is, $Z_A,Z_B \in \mathbb{R}^{n\times (n-m)}$ and 
	\ba
	A^T Z_A =0,\quad B^T Z_B = 0.
	\ea
	Then, we have
	\ba
	I_n - A(B^T A)^{-1} B^T = Z_B (Z_A^T Z_B)^{-1} Z_A^T.
	\ea
\end{lemma}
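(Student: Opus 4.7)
The plan is to show both sides represent the same oblique projection. Specifically, the left-hand side $M := I_n - A(B^TA)^{-1}B^T$ and the right-hand side $N := Z_B(Z_A^TZ_B)^{-1}Z_A^T$ should both turn out to be the projection onto $\operatorname{range}(Z_B) = \operatorname{null}(B^T)$ along $\operatorname{range}(A)$.

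First I would verify that the decomposition $\mathbb{R}^n = \operatorname{range}(A) \oplus \operatorname{range}(Z_B)$ is well-defined. The dimensions match ($m$ and $n-m$), so it suffices to show the intersection is trivial: if $v = Au = Z_B w$, then $B^T v = B^T A u = 0$, and invertibility of $B^TA$ (implicitly assumed since it appears in the formula) forces $u = 0$, hence $v = 0$. So every $v \in \mathbb{R}^n$ can be written uniquely as $v = Au + Z_B w$.

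Next, I would evaluate both matrices on such a decomposition. Using $A^T Z_A = 0$ (equivalently $Z_A^T A = 0$) and $B^T Z_B = 0$:
\begin{align*}
M(Au + Z_B w) &= Au + Z_B w - A(B^TA)^{-1}B^T A u - A(B^TA)^{-1}\underbrace{B^T Z_B}_{=0} w = Z_B w,\\
N(Au + Z_B w) &= Z_B(Z_A^TZ_B)^{-1}\underbrace{Z_A^T A}_{=0} u + Z_B(Z_A^TZ_B)^{-1}Z_A^T Z_B w = Z_B w.
\end{align*}
Since $M$ and $N$ agree on every vector in $\mathbb{R}^n$, they are equal as matrices.

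There is no real obstacle here; the only subtle point is the implicit assumption that $Z_A^T Z_B$ is invertible (needed even to write the RHS). This can be justified by the same kind of transversality argument: $Z_A^T Z_B$ is a square $(n-m)\times(n-m)$ matrix, and if $Z_A^T Z_B w = 0$ for some $w$, then $Z_B w \in \operatorname{range}(Z_B) \cap \operatorname{range}(Z_A)^\perp = \operatorname{range}(Z_B) \cap \operatorname{range}(A)$, which by the argument above is $\{0\}$; since $Z_B$ has full column rank, $w = 0$. So the inverse appearing in $N$ exists under the same hypotheses making $M$ well-defined.
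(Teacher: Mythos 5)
Your proof is correct and takes essentially the same route as the paper's: the paper verifies that both sides $C$ satisfy $CA=0$ and $CZ_B=Z_B$ (plus the transposed identities) and asserts that these relations pin down the matrix, which is precisely your evaluation of both sides on the decomposition $\mathbb{R}^n=\operatorname{range}(A)\oplus\operatorname{range}(Z_B)$. Your write-up is in fact slightly more complete, since you justify explicitly why the identities determine the matrix (via the direct-sum decomposition) and why $Z_A^TZ_B$ is invertible, both of which the paper leaves implicit.
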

\begin{proof}
	Let $C= I_n - A(B^T A)^{-1} B^T$, and then it is easy to check that
	\ba
	CZ_B = Z_B,\quad Z_A^TC = Z_A^T,\quad CA = 0, \quad B^TC = 0.
	\ea
	Note that the above four equations determine the singular value decomposition of $C$ and are also satisfied by $Z_B (Z_A^T Z_B)^{-1} Z_A^T$, which immediately implies $C=Z_B (Z_A^T Z_B)^{-1} Z_A^T$.
\end{proof}

\begin{lemma}\label{lem:Y-Kim}
	Suppose for symmetric matrices $S,H\in\mathbb{R}^{p\times p}$, we have $\mathcal{M}([S;H])=\mathbb{R}^p$, where the notation $\mathcal{M}(\cdot)$ means the subspace spanned by the columns of a matrix. Then, we have
	\ba
	(tS+H)^{-1}\rightarrow R(R^THR)^{-1}R^T,
	\ea
	as $t\rightarrow\infty$,
	where $R$ is any matrix such that $\mathcal{M}(R)$ is the null space of $S$.
\end{lemma}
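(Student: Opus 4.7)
The strategy is to reduce to a block matrix calculation by choosing an orthonormal basis aligned with $\mathrm{null}(S)$, apply block matrix inversion, and take $t \to \infty$. Once the identity is verified for one convenient choice of $R$, basis-independence follows.

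First, I would fix a specific orthonormal decomposition: let $R_0 \in \mathbb{R}^{p \times k}$ have orthonormal columns spanning $\mathrm{null}(S)$, and let $Q \in \mathbb{R}^{p \times (p-k)}$ have orthonormal columns spanning $\mathrm{range}(S) = \mathrm{null}(S)^{\perp}$. Then $U := [R_0, Q]$ is orthogonal, and using $SR_0 = 0$ and $R_0^T S = 0$,
\begin{equation*}
U^T(tS+H)U = \begin{pmatrix} R_0^T H R_0 & R_0^T H Q \\ Q^T H R_0 & tQ^TSQ + Q^T H Q \end{pmatrix} =: \begin{pmatrix} A & B \\ B^T & D_t \end{pmatrix}.
\end{equation*}
Here $Q^T S Q$ is the restriction of $S$ to its range (which is nonsingular for PSD $S$), so $D_t$ is invertible for all sufficiently large $t$, with $D_t^{-1} = O(t^{-1})$ as $t \to \infty$.

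Next I would argue that $A = R_0^T H R_0$ is invertible. This is the key place where the hypothesis $\mathcal{M}([S,H]) = \mathbb{R}^p$ — equivalently $\mathrm{null}(S) \cap \mathrm{null}(H) = \{0\}$ — enters. In the intended applications $H$ is positive semidefinite, so if $v^T A v = (R_0 v)^T H (R_0 v) = 0$ then $H R_0 v = 0$; combined with $S R_0 v = 0$ this gives $R_0 v \in \mathrm{null}(S) \cap \mathrm{null}(H) = \{0\}$, hence $v = 0$. With $A$ invertible, the standard Schur-complement formula yields
\begin{equation*}
\begin{pmatrix} A & B \\ B^T & D_t \end{pmatrix}^{-1} = \begin{pmatrix} T_t^{-1} & -T_t^{-1} B D_t^{-1} \\ -D_t^{-1} B^T T_t^{-1} & D_t^{-1} + D_t^{-1} B^T T_t^{-1} B D_t^{-1} \end{pmatrix},
\end{equation*}
where $T_t = A - B D_t^{-1} B^T \to A$. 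Since $D_t^{-1} \to 0$ and $T_t^{-1} \to A^{-1}$, every block except the top-left vanishes in the limit, giving
\begin{equation*}
U^T (tS+H)^{-1} U \longrightarrow \begin{pmatrix} A^{-1} & 0 \\ 0 & 0 \end{pmatrix}.
\end{equation*}
Conjugating back by $U$ produces the limit $R_0(R_0^T H R_0)^{-1}R_0^T$.

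Finally, to upgrade from the orthonormal choice $R_0$ to an arbitrary matrix $R$ whose columns span $\mathrm{null}(S)$, write $R = R_0 M$ for some invertible $M$, and observe
\begin{equation*}
R(R^THR)^{-1}R^T = R_0 M (M^T R_0^T H R_0 M)^{-1} M^T R_0^T = R_0 (R_0^T H R_0)^{-1} R_0^T,
\end{equation*}
so the expression on the right is independent of the chosen basis of $\mathrm{null}(S)$, completing the proof. The main obstacle is Step two: the stated hypothesis alone does not force $R^T H R$ to be invertible for arbitrary symmetric $H$, so one must either invoke positive semidefiniteness of $H$ (which holds in every application of the lemma in this paper) or strengthen the hypothesis; aside from this the argument is a routine block-matrix computation together with a vanishing-correction limit.
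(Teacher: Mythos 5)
Your proof is correct (under the same implicit positive semidefiniteness that the paper itself relies on) and takes a genuinely different route. You pass to an orthonormal basis $[R_0,Q]$ adapted to $\mathrm{null}(S)$ and read the limit off the Schur-complement block inversion: the off-diagonal and lower-right blocks vanish because $D_t^{-1}=O(t^{-1})$, leaving $A^{-1}=(R_0^THR_0)^{-1}$ in the corner. The paper instead runs a three-stage reduction: first the case $H=I_p$, using the eigendecomposition $S=UDU^T$ and the resolution of identity $I_p=UU^T+R(R^TR)^{-1}R^T$ so that $(tS+I_p)^{-1}$ splits into a decaying piece $U(tD+I_r)^{-1}U^T$ plus the projector onto $\mathrm{null}(S)$; then general invertible $H$ by the congruence $H=Q^TQ$ and conjugating by $Q^{-1}$; and finally possibly singular $H$ by the rewriting $tS+H=(t-1)S+(S+H)$, where $S+H$ has full rank. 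Your approach is more self-contained (one computation rather than a chain of reductions) and makes the mechanism of the limit more transparent; the paper's buys a very short argument in the base case and an elegant trick for singular $H$, at the cost of needing the factorization $H=Q^TQ$.

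Your closing caveat is well taken and applies equally to the paper's own proof: the factorization $H=Q^TQ$ with $Q$ invertible presumes $H$ positive definite, and the base case takes the eigenvalues of $S$ to be positive, so the paper is also tacitly working with positive semidefinite $S$ and $H$. Under that assumption your invertibility argument for $R_0^THR_0$ goes through exactly as you wrote it, since $\mathcal{M}([S;H])=\mathbb{R}^p$ is equivalent to $\mathrm{null}(S)\cap\mathrm{null}(H)=\{0\}$ for symmetric matrices; without it, $R^THR$ can indeed be singular for an indefinite $H$, so the hypothesis as literally stated is too weak. In every application in the paper $S$ and $H$ are graph Laplacians or Gram matrices, so nothing is lost.
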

\begin{proof}
	We first prove the special case of $H=I_p$. Denote the rank of $S$ by $r$, and then $S$ has an eigenvalue decomposition $S=UDU^T$ for some $U\in\mathcal{O}(p,r)$ and some diagonal matrix $D$ with positive diagonal entries. Since $\mathcal{M}(R)$ is the null space of $S$, we have $I_p=UU^T+R(R^TR)^{-1}R^T$ by Lemma~\ref{lem:matrix_inv}. Then,
	\ba
	(tS+I_p)^{-1}=(U(tD+I_r)U^T+R(R^TR)^{-1}R^T)^{-1}=U(tD+I_r)^{-1}U^T+R(R^TR)^{-1}R^T,
	\ea
	which converges to $R(R^TR)^{-1}R^T$ as $t\rightarrow\infty$. The second equality above follows from $U^TR = 0$. Now we assume a general $H$ of full rank, which means $H=Q^TQ$ for some $Q\in\mathbb{R}^{p\times p}$ that is invertible. Then,
	\ba
	(tS+H)^{-1}=Q^{-1}(t(Q^T)^{-1}SQ^{-1}+I_p)^{-1}(Q^T)^{-1}.
	\ea
	Since the null space of $(Q^T)^{-1}SQ^{-1}$ is $\mathcal{M}(QR)$, we have
	\ba
	(t(Q^T)^{-1}SQ^{-1}+I_p)^{-1}\rightarrow QR(R^TQ^TQR)^{-1}R^TQ^T=QR(R^THR)^{-1}R^TQ^T,
	\ea
	and therefore $(tS+H)^{-1}\rightarrow R(R^THR)^{-1}R^T$. For a general $H$ that is not necessarily full rank, since $\mathcal{M}([S;H])=\mathbb{R}^p$, $S+H$ is a matrix of full rank. Then,
	\ba
	(tS+H)^{-1}=((t-1)S+S+H)^{-1}\rightarrow R(R^T(S+H)R)^{-1}R^T=R(R^THR)^{-1}R^T,
	\ea
	and the proof is complete.
\end{proof}

\section{Degenerate Gaussian Distributions}

A multivariate Gaussian distribution is fully characterized by its mean vector and covariance matrix. For $N(\mu,\Sigma)$ with some $\mu\in\mathbb{R}^p$ and a positive semidefinite $\Sigma\in\mathbb{R}^{p\times p}$, we call the distribution degenerate if $\rank(\Sigma)<p$. Given any $\Sigma$ such that $\rank(\Sigma)=r<p$, we have the decomposition $\Sigma=AA^T$ for some $A\in\mathbb{R}^{p\times r}$. Therefore, $X\sim N(\mu,\Sigma)$ if and only if
\begin{equation}
X=\mu+AZ, \label{eq:degen-Gau-latent}
\end{equation}
where $Z\sim N(0,I_r)$. The latent variable representation (\ref{eq:degen-Gau-latent}) immediately implies that $X-\mu\in\mathcal{M}(A)=\mathcal{M}(\Sigma)$ with probability one.

The density function of $N(\mu,\Sigma)$ is given by
\begin{equation}
(2\pi)^{-r/2}\frac{1}{\sqrt{\det_+(\Sigma)}}\exp\left(-\frac{1}{2}(x-\mu)^T\Sigma^+(x-\mu)\right)\mathbb{I}\{x-\mu\in\mathcal{M}(\Sigma)\}. \label{eq:degen-Gau-density}
\end{equation}
The formula (\ref{eq:degen-Gau-density}) can be found in \cite{khatri1968some,songgui1994advanced}. Note that the density function (\ref{eq:degen-Gau-density}) is defined with respect to the Lebesgue measure on the subspace $\{x:x-\mu\in\mathcal{M}(\Sigma)\}$.
Here, the $\det_+(\cdot)$ is used for the product of all nonzero eigenvalues of a symmetric matrix, and $\Sigma^+$ is the Moore-Penrose inverse of the covariance matrix $\Sigma$. The two characterizations (\ref{eq:degen-Gau-latent}) and (\ref{eq:degen-Gau-density}) of $N(\mu,\Sigma)$ are equivalent to each other.

The property is useful for us to identify whether a formula leads to a well-defined density function of a degenerate Gaussian distribution.
\begin{lemma}\label{lem:density-to-distribution}
	Suppose $f(x)=\exp(-\frac{1}{2}(x-\mu)^T\Omega (x-\mu))\mathbb{I}\{x-\mu\in\mathcal{M}(V)\}$ for some $\mu\in\mathbb{R}^p$, some positive semidefinite $\Omega\in\mathbb{R}^{p\times p}$ and some $V\in \mathcal{O}(p,r)$. As long as $\mathcal{M}(\Omega V)=\mathcal{M}(V)$, we have $\int f(x)dx<\infty$, and $f(x)/\int f(x)dx$ is the density function of $N(\mu,\Sigma)$ with $\Sigma=V(V^T\Omega V)^{-1}V^T$.
\end{lemma}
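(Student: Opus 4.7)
The plan is to change variables onto the $r$-dimensional affine subspace $\mu+\mathcal{M}(V)$ on which $f$ is supported, compute the normalizing constant there, and then identify the resulting probability density with the standard formula (\ref{eq:degen-Gau-density}) for the degenerate Gaussian $N(\mu,\Sigma)$ with $\Sigma=V(V^T\Omega V)^{-1}V^T$.

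First I would check that the hypothesis $\mathcal{M}(\Omega V)=\mathcal{M}(V)$ guarantees that the $r\times r$ matrix $V^T\Omega V$ is invertible. If $V^T\Omega V z=0$ for some $z\in\mathbb{R}^r$, then $z^TV^T\Omega V z=0$ forces $\Omega^{1/2}Vz=0$, hence $\Omega Vz=0$; but the image of $\Omega V$ equals $\mathcal{M}(V)$, which has dimension $r$, so $\Omega V:\mathbb{R}^r\to\mathbb{R}^p$ is injective and $z=0$.

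Next, since $V\in\mathcal{O}(p,r)$, the map $z\mapsto \mu+Vz$ is an isometry from $\mathbb{R}^r$ onto $\mu+\mathcal{M}(V)$, so the natural Lebesgue measure on the affine subspace pulls back to standard Lebesgue measure on $\mathbb{R}^r$. The change of variables then gives $\int f(x)\,dx=\int_{\mathbb{R}^r}\exp(-\tfrac{1}{2}z^TV^T\Omega V z)\,dz = (2\pi)^{r/2}/\sqrt{\det(V^T\Omega V)}$, which is finite by the previous step and the usual Gaussian integral.

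Finally, I would verify that the normalized density agrees with (\ref{eq:degen-Gau-density}) applied to $\Sigma=V(V^T\Omega V)^{-1}V^T$. Since $V^TV=I_r$, the matrix $\Sigma$ has rank $r$ with $\mathcal{M}(\Sigma)=\mathcal{M}(V)$, and its nonzero eigenvalues coincide with those of $(V^T\Omega V)^{-1}V^TV=(V^T\Omega V)^{-1}$, giving $\det_+(\Sigma)=1/\det(V^T\Omega V)$. Matching the quadratic forms is where Lemma~\ref{lemA1} is the crucial input: applied with $R=V^T\Omega V$ it yields $V^T\Omega V = V^T\Sigma^+V$, so for every $x-\mu=Vz\in\mathcal{M}(V)$, $(x-\mu)^T\Omega(x-\mu)=z^TV^T\Omega Vz=z^TV^T\Sigma^+Vz=(x-\mu)^T\Sigma^+(x-\mu)$. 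Together with the coincidence $\mathcal{M}(V)=\mathcal{M}(\Sigma)$ of the supports, this shows that $f/\int f$ is exactly the density (\ref{eq:degen-Gau-density}) of $N(\mu,\Sigma)$. The delicate point is precisely this last step: because $\Omega$ may itself be singular, one cannot simply substitute $\Omega$ for $\Sigma^+$, and the pseudoinverse identity from Lemma~\ref{lemA1} is what reconciles the full-space quadratic form with the one naturally associated with $\Sigma$ on its support.
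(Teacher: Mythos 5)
Your proposal is correct and follows essentially the same route as the paper's proof: the crux in both is to match the quadratic form on the support by applying Lemma~\ref{lemA1} with $R=V^T\Omega V$, which gives $(x-\mu)^T\Omega(x-\mu)=(x-\mu)^T\Sigma^+(x-\mu)$ for $x-\mu\in\mathcal{M}(V)=\mathcal{M}(\Sigma)$. Your additional explicit computations (the invertibility of $V^T\Omega V$, the Gaussian integral on the subspace, and $\det_+(\Sigma)=1/\det(V^T\Omega V)$) are correct but only flesh out steps the paper leaves implicit.
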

\begin{proof}
	Without loss of generality, assume $\mu=0$. Since $\mathcal{M}(\Omega V)=\mathcal{M}(V)$, $V^T\Omega V$ is an invertible matrix, and thus $\Sigma$ is well defined. It is easy to see that $\mathcal{M}(V)=\mathcal{M}(\Sigma)$. Therefore, in view of (\ref{eq:degen-Gau-density}), we only need to show
	\ba
	x^T\Omega x = x^T(V(V^T\Omega V)^{-1}V^T)^+x,
	\ea
	for all $x\in\mathcal{M}(V)$. Since $x=VV^Tx$ for all $x\in\mathcal{M}(V)$, it suffices to show
	\ba
	V^T\Omega V = V^T (V(V^T\Omega V)^{-1}V^T)^+V,
	\ea
	which is immediately implied by (\ref{eq:useful}) with $R=V^T\Omega V$. The proof is complete.
\end{proof}

We remark that Lemma \ref{lem:density-to-distribution} also holds for a $V\in\mathbb{R}^{p\times r}$ that satisfies $\rank(V)=r$ but is not necessarily orthonormal. This is because $V(V^T\Omega V)^{-1}V^T=W(W^T\Omega W)^{-1}W^T$ whenever $\mathcal{M}(V)=\mathcal{M}(W)$.

\section{Proofs of Propositions}

\begin{proof}{\bf of Proposition \ref{prop:Laplacian}} \vspace{0.05in}\\
	The property of the Laplacian matrix $L_{\gamma}$ is standard in spectral graph theory \citep{spielman2007spectral}.
	We apply Lemma \ref{lem:density-to-distribution} with $\mu=0$, $\Omega=\sigma^{-2}L_{\gamma}$ and $V$ is chosen arbitrarily from $\mathcal{O}(p,p-1)$ such that $w^TV=0$. Then, the condition $\mathcal{M}(\Omega V)=\mathcal{M}(V)$ is equivalent to $\mathds{1}_p^Tw\neq 0$, because $\mathds{1}_p$ spans the null space of $L_{\gamma}$. Note that (\ref{eq:useful}) immediately implies $VRV^T=VV^T(VR^{-1}V^T)^+VV^T=(VR^{-1}V^T)^+$. We then have
	\begin{eqnarray*}
		\text{det}_+(V(V^T\Omega V)^{-1}V^T) &=& \sigma^{2(p-1)}\text{det}_+(V(V^TL_{\gamma} V)^{-1}V^T) \\
		&=& \sigma^{2(p-1)}\text{det}_+((VV^TL_{\gamma} VV^T)^+) \\
		&=& \sigma^{2(p-1)}\frac{1}{\text{det}_+(VV^TL_{\gamma} VV^T)}.
	\end{eqnarray*}
	The proof is complete by realizing that $VV^T=I_p-ww^T/\|w\|^2$ from Lemma~\ref{lem:matrix_inv}.
\end{proof}

\begin{proof}{\bf of Proposition \ref{prop:reduced-model}} \vspace{0.05in}\\
	Recall the incidence matrix $D\in\mathbb{R}^{m\times p}$ defined in Section \ref{sec:model-des}. With the new notations $\Phi=D^T\diag(\gamma)D$ and $\Psi=D^T\diag(1-\gamma)D$, we can write $L_{\gamma}=v_0^{-1}\Phi+v_1^{-1}\Psi$ and $\wt{L}_{\gamma}=v_1^{-1}\Psi$.
	
	We first prove that (\ref{eq:Laplacian-prior-low-d}) is well defined. By Lemma \ref{lem:density-to-distribution}, it is sufficient to show $\wt{V}^TZ_{\gamma}^T\Psi Z_{\gamma}\wt{V}$ is invertible. This is because $\mathcal{M}([\Phi;\Psi;w])=\mathbb{R}^p$, and the columns of $Z_{\gamma}\wt{V}$ are all orthogonal to $\mathcal{M}([\Phi;w])$. Therefore, (\ref{eq:Laplacian-prior-low-d}) is well defined and its covariance matrix is given by (\ref{eq:Laplacian-prior-low-d-moment}).

	Now we will show the distribution (\ref{eq:prior-Laplacian-form}) converges to that of $Z_{\gamma}\wt{\theta}$ with $\wt{\theta}$ distributed by (\ref{eq:Laplacian-prior-low-d}) as $v_0\rightarrow 0$. Let $\wt{V}\in\mathbb{R}^{r\times r-1}$ be a matrix of rank $r-1$ that satisfies $\wt{V}^TZ_{\gamma}^Tw=0$. Then, by Lemma \ref{lem:density-to-distribution}, the distribution (\ref{eq:Laplacian-prior-low-d}) can be written as
	\begin{equation}
	\wt{\theta}\sim N(0,\sigma^2\wt{V}(\wt{V}^TZ_{\gamma}^T\wt{L}_{\gamma}Z_{\gamma}\wt{V})^{-1}\wt{V}^T),\label{eq:Laplacian-prior-low-d-moment}
	\end{equation}
	which implies
	$$Z_{\gamma}\wt{\theta}\sim N(0,\sigma^2Z_{\gamma}\wt{V}(\wt{V}^TZ_{\gamma}^T\wt{L}_{\gamma}Z_{\gamma}\wt{V})^{-1}\wt{V}^TZ_{\gamma}^T).$$
	On the other hand, the distribution (\ref{eq:prior-Laplacian-form}) can be written as
	$$\theta\sim N(0,\sigma^2V(V^TL_{\gamma}V)^{-1}V^T),$$
	where the matrix $V\in\mathbb{R}^{p\times p-1}$ can be chosen to be any matrix of rank $p-1$ that satisfies $V^Tw=0$. In particular, we can choose $V$ that takes the form of
	$$V=[Z_{\gamma}\wt{V}; U],$$
	where $U\in\mathcal{O}(p,p-r)$ and $U$ satisfies $U^Tw=0$ and $U^TZ_{\gamma}\wt{V}=0$. Since both $\theta$ and $Z_{\gamma}\wt{\theta}$ are Gaussian random vectors, we need to prove
	\begin{equation}
	V(V^TL_{\gamma}V)^{-1}V^T \rightarrow Z_{\gamma}\wt{V}(\wt{V}^TZ_{\gamma}^T\wt{L}_{\gamma}Z_{\gamma}\wt{V})^{-1}\wt{V}^TZ_{\gamma}^T,\label{eq:cov-converge}
	\end{equation}
	as $v_0\rightarrow 0$.
	Note that (\ref{eq:cov-converge}) is equivalent to
	\begin{equation}
	V(v_0^{-1}V^T\Phi V+ v_1^{-1}V^T\Psi V)^{-1}V^T \rightarrow v_1Z_{\gamma}\wt{V}(\wt{V}^TZ_{\gamma}^T\Psi Z_{\gamma}\wt{V})^{-1}\wt{V}^TZ_{\gamma}^T, \label{eq:limit-to-prove}
	\end{equation}
	as $v_0\rightarrow 0$. By the definition of $Z_{\gamma}$ and the property of graph Laplacian, the null space of $\Phi$ is $\mathcal{M}(Z_{\gamma})$. By the construction of $V$, the null space of $V^T\Phi V$ is $\mathcal{M}(V^TZ_{\gamma})$. Moreover, we have $\mathcal{M}([V^T\Phi V; V^T\Psi V])=\mathbb{R}^{p-1}$. Therefore, Lemma \ref{lem:Y-Kim} implies that
	$$(v_0^{-1}V^T\Phi V+ v_1^{-1}V^T\Psi V)^{-1}\rightarrow v_1  V^TZ_{\gamma}(Z_{\gamma}^TVV^T\Psi VV^TZ_{\gamma})^{-1}Z_{\gamma}^TV,$$
	and thus
	$$V(v_0^{-1}V^T\Phi V+ v_1^{-1}V^T\Psi V)^{-1}V^T\rightarrow v_1  VV^TZ_{\gamma}(Z_{\gamma}^TVV^T\Psi VV^TZ_{\gamma})^{-1}Z_{\gamma}^TVV^T.$$
	Since $VV^TZ_{\gamma}=Z_{\gamma}\wt{V}\wt{V}^TZ_{\gamma}^TZ_{\gamma}$, we have $\mathcal{M}(VV^TZ_{\gamma})=\mathcal{M}(Z_{\gamma}\wt{V})$. This implies
	$$VV^TZ_{\gamma}(Z_{\gamma}^TVV^T\Psi VV^TZ_{\gamma})^{-1}Z_{\gamma}^TVV^T=Z_{\gamma}\wt{V}(\wt{V}^TZ_{\gamma}^T\Psi Z_{\gamma}\wt{V})^{-1}\wt{V}^TZ_{\gamma}^T,$$
	and therefore we obtain (\ref{eq:limit-to-prove}). The proof is complete.
\end{proof}

\begin{proof}{\bf of Proposition \ref{prop:projection-clustering}} \vspace{0.05in}\\
	We only prove the case with $d=1$. The general case with $d\geq 2$ follows the same argument with more complicated notation of covariance matrices (such as Kronecker products). By Lemma \ref{lem:density-to-distribution}, (\ref{eq:spike-and-slab-clustering}) is the density function of $N(0,\Sigma)$, with
	$$\Sigma=\sigma^2U(U^TL_{\gamma}U)^{-1}U^T,$$
	where
	$$L_{\gamma}=\begin{bmatrix}
	(v_0^{-1}-v_1^{-1})I_n + v_1^{-1}kI_n & -(v_0^{-1}-v_1^{-1})\gamma - v_1^{-1}\mathds{1}_{n\times k} \\
	-(v_0^{-1}-v_1^{-1})\gamma^T - v_1^{-1}\mathds{1}_{k\times n} & (v_0^{-1}-v_1^{-1})\gamma^T\gamma + v_1^{-1}nI_k
	\end{bmatrix}.$$
	The matrix $U$ is defined by
	$$U=\begin{bmatrix}
	V & 0_{n\times k} \\
	0_{k\times (n-1)} & I_k
	\end{bmatrix},$$
	and $V\in\mathbb{R}^{n\times (n-1)}$ is a matrix of rank $n-1$ that satisfies $\mathds{1}_n^TV=0$. Note that $\theta|\gamma,\sigma^2$ follows $N(0,\Sigma_{[n]\times [n]})$. That is, the covariance matrix is the top $n\times n$ submatrix of $\Sigma$. A direct calculation gives
	$$\Sigma_{[n]\times [n]}=\sigma^2 V[V^T(A-BC^{-1}B^T)V]^{-1}V^T,$$
	where
	\begin{eqnarray*}
		A &=& (v_0^{-1}-v_1^{-1})I_n + v_1^{-1}kI_n, \\
		B &=& -(v_0^{-1}-v_1^{-1})\gamma - v_1^{-1}\mathds{1}_{n\times k}, \\
		C &=& (v_0^{-1}-v_1^{-1})\gamma^T\gamma + v_1^{-1}nI_k.
	\end{eqnarray*}
	Letting $v_1\rightarrow\infty$, we have
	\ba
	\Sigma_{[n]\times [n]}\rightarrow \sigma^2v_0 V[V^T(I_n-\gamma(\gamma^T\gamma)^{-1}\gamma^T)V]^{-1}V^T.
	\ea
	The existence of $(\gamma^T\gamma)^{T}$ is guaranteed by the condition that $\gamma$ is non-degenerate. By Lemma \ref{lem:density-to-distribution}, $p(\theta|\gamma,\sigma^2)\propto \prod_{1\leq i<l\leq n}\exp\left(-\frac{\lambda_{il}\|\theta_i-\theta_l\|^2}{2\sigma^2v_0}\right)\mathbb{I}\{\mathds{1}_n^T\theta=0\}$ is the density function of $N(0,\sigma^2v_0 V[V^T(I_n-\gamma(\gamma^T\gamma)^{-1}\gamma^T)V]^{-1}V^T)$, which completes the proof.
\end{proof}

\begin{proof}{\bf of Proposition \ref{prop:gmm}} \vspace{0.05in}\\
	We only prove the case with $d=1$. The general case with $d\geq 2$ follows the same argument with more complicated notation of covariance matrices (such as Kronecker products). The proof is basically an application of Proposition \ref{prop:reduced-model}. That is, as $v_0\rightarrow 0$, the distribution of $(\theta^T,\mu^T)^T$ weakly converges to that of $Z_{\gamma}\wt{\mu}$. In the current setting, we have
	$$Z_{\gamma}=\begin{bmatrix}
	\gamma \\
	I_k
	\end{bmatrix}.$$
	The random vector $\wt{\mu}$ is distributed by (\ref{eq:Laplacian-prior-low-d-explicit}). Note that the contracted base graph is a complete graph on $\{1,...,k\}$, and $w_{jl}=n_j+n_l$ in the current setting. The density (\ref{eq:Laplacian-prior-low-d-explicit}) thus becomes
	\ba
	p(\wt{\mu}|\gamma,\sigma^2)\propto \prod_{1\leq j<l\leq k}\exp\left(-\frac{(n_j+n_l)(\wt{\mu}_j-\wt{\mu}_l)^2}{2\sigma^2 v_1}\right)\mathbb{I}\{\mathds{1}_n^T\gamma\wt{\mu}=0\}.
	\ea
	Finally, the relations $\theta=\gamma\wt{\mu}$ and $\mu=\wt{\mu}$ lead to the desired conclusion.
\end{proof}

\begin{proof}{\bf of Proposition \ref{prop:half-gaussian}} \vspace{0.05in}\\
	Note that the integration is with respect to the Lebesgue measure on the $(n-1)$-dimensional subspace $\{\theta:\mathds{1}_n^T\theta=0\}$. Consider a matrix $V\in\mathbb{R}^{n\times n-1}$ of rank $n-1$ that satisfies $\mathds{1}_n^TV=0$, which means that the columns of $[\mathds{1}_n: V]\in\mathbb{R}^{n\times n}$ form a nondegenerate basis. Then, we can write $d\theta$ in the integral as $\frac{1}{\sqrt{\det(V^TV)}}d(V^T\theta)$. For the Laplacian matrix $L_{\gamma}$ that satisfies $\theta^TL_{\gamma}\theta=\sum_{i=1}^{n-1}\frac{(\theta_{i+1}-\theta_i)^2}{v_0\gamma_i+v_1(1-\gamma_i)}$, we have $\mathds{1}_n^TL_{\gamma}=0$. In particular, we choose $V$ such that its $i$th column is $V_{*i}=e_{i}-e_{i+1}$, where $e_i$ is a vector whose $i$th entry is $1$ and $0$ elsewhere.
	Then, we have $L_{\gamma}=VS_{\gamma}V^T$ with $S_{\gamma}=\diag(v_0^{-1}\gamma + v_1^{-1}(1-\gamma))$, and
	the integral becomes
	\begin{align}
	\nonumber {}& \int_{\mathds{1}_n^T\theta=0,\theta_1\leq...\leq\theta_n}2^{(n-1)}\frac{1}{(2\pi\sigma^2)^{(n-1)/2}}\sqrt{\text{det}_{\mathds{1}_n}(L_{\gamma})}\exp\left(-\frac{1}{2\sigma^2}\theta^TL_{\gamma}\theta\right)d\theta \\
	\nonumber ={}& \int_{\theta_1\leq...\leq\theta_n}2^{(n-1)}\frac{1}{(2\pi\sigma^2)^{(n-1)/2}}\sqrt{\frac{\text{det}_{\mathds{1}_n}(L_{\gamma})}{\text{det}(V^TV)}}\exp\left(-\frac{1}{2\sigma^2}\theta^TVS_{\gamma}V^T\theta\right)d(V^T\theta) \\
	\nonumber ={}& \int_{\delta_1\leq 0,...,\delta_n\leq 0}2^{(n-1)}\frac{1}{(2\pi\sigma^2)^{(n-1)/2}}\sqrt{\frac{\text{det}_{\mathds{1}_n}(L_{\gamma})}{\text{det}(V^TV)}}\exp\left(-\frac{1}{2\sigma^2}\delta^TS_{\gamma}\delta\right)d\delta \\
	\label{eq:half-many} ={}& \int\frac{1}{(2\pi\sigma^2)^{(n-1)/2}}\sqrt{\frac{\text{det}_{\mathds{1}_n}(L_{\gamma})}{\text{det}(V^TV)}}\exp\left(-\frac{1}{2\sigma^2}\delta^TS_{\gamma}\delta\right)d\delta \\
	\label{eq:arsenal} ={}& \sqrt{\frac{\text{det}_{\mathds{1}_n}(L_{\gamma})}{\det(S_{\gamma})\det(V^TV)}} \\
	\nonumber ={}& 1,
	\end{align}
	where the last equality is by $\text{det}_{\mathds{1}_n}(L_{\gamma})=\det_+(VS_{\gamma}V^T)=\det(S_{\gamma})\det(V^TV)$. The equality (\ref{eq:half-many}) is by the symmetry of $\exp\left(-\frac{1}{2\sigma^2}\delta^TS_{\gamma}\delta\right)$, and (\ref{eq:arsenal}) is by Lemma \ref{lem:density-to-distribution}. Finally, Lemma \ref{lem:matrix-tree} says that 
	\ba
	\text{det}_{\mathds{1}_n}(L_{\gamma}) = n\prod_{i=1}^{n-1} [v_0^{-1}\gamma_i + v_1^{-1}(1-\gamma_i)].
	\ea
	This completes the proof.
\end{proof}

\begin{proof}{\bf of Proposition \ref{prop:half-gaussian-limit}} \vspace{0.05in}\\
	We need to calculate
	$$\int _{\mathds{1}_n^T Z_\gamma \wt{\theta} = 0,\ \wt\theta_1\leq \cdots \leq \wt\theta_s}\prod_{l=1}^s \exp\left(-\frac{(\wt{\theta}_l-\wt{\theta}_{l+1})^2}{2\sigma^2v_1}\right) d\widetilde\theta,$$
	where the integral is taken with respect to the Lebesgue measure on the low-dimensional subspace $\{\wt{\theta}: \mathds{1}_n^T Z_\gamma \wt{\theta} = 0 \}$. Choose $\wt{V}_{*l} = e_l - e_{l+1}$, where $e_l \in \{0,1\}^s$ is a vector whose $l$th entry is $1$ and $0$ elsewhere. Then the columns of $[Z_\gamma^T \mathds{1}_n: (Z_\gamma^T Z_\gamma)^{-1}\wt{V}]\in\mathbb{R}^{s\times s}$ form a non-degenerate basis of $\mathbb{R}^s$. This is because
	$$ Z_\gamma^T \mathds{1}_n = (n_1,\cdots,n_s)^T,\quad Z_\gamma^T Z_\gamma =  {\rm diag}(n_1,\cdots,n_s) ,$$
	where $n_l$ is the size of $l$th cluster. Furthermore, $Z_\gamma^T \mathds{1}_n$ and $(Z_\gamma^T Z_\gamma)^{-1}\wt{V}$ are orthogonal to each other. We write $\wt{W} = (Z_\gamma^T Z_\gamma)^{-1}\wt{V}$ for simplicity. Then,
	\begin{eqnarray*}
		&&\int _{\mathds{1}_n^T Z_\gamma \wt{\theta} = 0,\ \wt\theta_1\leq \cdots \leq \wt\theta_s}\prod_{l=1}^s \exp\left(-\frac{(\wt{\theta}_l-\wt{\theta}_{l+1})^2}{2\sigma^2v_1}\right) d\widetilde\theta \\
		&=& \frac{1}{\sqrt{\det \wt{W}^T\wt{W}  }} \int_{\wt\theta_1\leq \cdots \leq \wt\theta_s} \prod_{l=1}^s \exp\left(-\frac{(\wt{\theta}_l-\wt{\theta}_{l+1})^2}{2\sigma^2v_1}\right) d(\wt{W} ^T \wt\theta) \\
		&=&  \frac{\det \wt{W}^T \wt{W} (\wt{V}^T \wt{W})^{-1} }{\sqrt{\det \wt{W}^T\wt{W}  }}   \int_{\wt\theta_1\leq \cdots \leq \wt\theta_s} \prod_{l=1}^{s-1} \exp\left(-\frac{(\wt{\theta}_l-\wt{\theta}_{l+1})^2}{2\sigma^2v_1}\right)d(\wt{V}^T \wt\theta) \\
		&=&\frac{\det \wt{W}^T \wt{W} (\wt{V}^T \wt{W})^{-1} }{\sqrt{\det \wt{W}^T\wt{W}  }}  \int_{\wt\delta_1,\cdots,\wt\delta_{s-1}\leq 0} \prod_{l=1}^{s-1} \exp\left(-\frac{{\wt\delta}_l^2}{2\sigma^2v_1}\right)d\wt\delta \\
		&=& \frac{\sqrt{\det \wt{W}^T \wt{W} }}{\det \wt{V}^T \wt{W}} \int_{\wt\delta_1,\cdots,\wt\delta_{s-1}\leq 0} \prod_{l=1}^{s-1} \exp\left(-\frac{{\wt\delta}_l^2}{2\sigma^2v_1}\right)d\wt\delta \\
		&=& \frac{(2\pi\sigma^2v_1)^{(s-1)/2}}{2^{s-1}}  \times  \frac{\sqrt{\det \wt{W}^T \wt{W} }}{\det \wt{V}^T \wt{W}}.
	\end{eqnarray*}
	The first equality is from the Lebesgue integration on the reduced space and the last equality is by symmetry. The second equality follows from the change of variables formula, because for any $\wt{\theta}$ such that $\mathds{1}_n^T Z_\gamma \wt\theta = 0$, $\wt\theta = \wt{W}U$ for some $U$, which leads to $\wt{W}^T \wt{W} (\wt{V}^T \wt{W})^{-1} \wt{V}^T \wt{\theta} = \wt{W}^T \wt{W} U = \wt{W}^T \wt\theta$.
	Finally, we observe that
	\ba
	v_1^{(s-1)/2}\sqrt{\text{det}_{Z_{\gamma}^T \mathds{1}_n}(Z_{\gamma}^T\wt{L}_{\gamma}Z_{\gamma})} {}& = \left( \frac{\det( \wt{W}^T \wt{V} \wt{V}^T \wt{W}) }{\det \wt{W}^T\wt{W}} \right)^{1/2} \\
	{}&  =\left( \frac{ ( \det \wt{V}^T \wt{W})^2 }{\det \wt{W}^T \wt{W}} \right)^{1/2} = \frac{\det \wt{V}^T \wt{W}}{\sqrt{\det \wt{W}^T \wt{W} }},
	\ea
	since $\wt{V}\wt{V}^T = v_1\wt{L}_\gamma$ is the reduced graph Laplacian and the columns of $\wt{W}$ spans the nullspace of $Z_{\gamma}^T \mathds{1}_n$.
	The proof is complete.
\end{proof}

\section{Proof of Lemma \ref{lem:matrix-tree}}

We let $U\in\mathcal{O}(p,p-1)$ be an orthonormal matrix that satisfies  $\mathds{1}_p^TU=0$, and $V\in\mathcal{O}(p,p-1)$ be an orthonormal matrix that satisfies $w^TV=0$. Write $I_p-U(V^TU)^{-1}V^T$ as $R$. Then, $RU=0$, which implies that $R$ has rank at most one. The facts $Rw=w$ and $\mathds{1}_p^TR=\mathds{1}_p^T$, together with Lemma \ref{lem:matrix_inv}, imply that
\begin{equation}
I_p-U(V^TU)^{-1}V^T = \frac{1}{\mathds{1}_p^Tw}w\mathds{1}_p^T. \label{eq:wow-interesting}
\end{equation}
Therefore,
\begin{eqnarray} \label{eq:3.1main}
\nonumber \text{det}_w(L_{\gamma}) &=& \text{det}_+(VV^TL_{\gamma}VV^T) \\
\nonumber &=& \text{det}(V^TL_{\gamma}V) \\
\label{eq:tree1} &=& \text{det}(V^TUU^TL_{\gamma}UU^TV) \\
\nonumber &=& (\det(V^TU))^2\det(U^TL_{\gamma}U).
\end{eqnarray}
The inequality (\ref{eq:tree1}) is because $UU^T$ is a projection matrix to the null space of $L_{\gamma}$. We are going to calculate $\det(V^TU)$ and $\det(U^TL_{\gamma}U)$ separately. For $\det(V^TU)$, we have
\begin{eqnarray}
\nonumber 1 &=& \det\left(\begin{bmatrix}
V & \|w\|^{-1}w
\end{bmatrix}^T\begin{bmatrix}
U & p^{-1/2}\mathds{1}_p
\end{bmatrix}\right) \\
\nonumber &=& \det\left(\begin{bmatrix}
V^TU & p^{-1/2}V^T\mathds{1}_p \\
\|w\|^{-1}w^TU & w^T\mathds{1}_p/(p^{1/2}\|w\|)
\end{bmatrix}\right) \\
\nonumber &=& \det(V^TU)\det(w^T(I_p-U(V^TU)^{-1}V^T)\mathds{1}_p)/(p^{1/2}\|w\|) \\
\nonumber &=& \det(V^TU)\frac{p^{1/2}\|w\|}{\mathds{1}_p^Tw},
\end{eqnarray}
and we thus get
\begin{eqnarray} \label{eq:tree2}
\left(\det(V^TU)\right)^2=\frac{\left(\mathds{1}_p^Tw\right)^2}{p\|w\|^2}.
\end{eqnarray}
We use (\ref{eq:wow-interesting}) for the equality (\ref{eq:tree2}).

The calculation of $\det(U^TL_{\gamma}U)$ requires the Cauchy-Binet formula. For any given matrices $A,B\in\mathbb{R}^{n\times m}$ with $n\leq m$, we have
\begin{equation}
\det(AB^T)=\sum_{\{S\subset[m]:|S|=n\}}\det(B_{*S}^TA_{*S}).\label{eq:CB}
\end{equation}
The version (\ref{eq:CB}) can be found in \cite{tao2012topics} and references therein. Let $A=B=U^TD^T(v_0^{-1/2}\diag(\gamma)+v_1^{-1/2}\diag(1-\gamma))$, and we have
$$
\det(U^TL_{\gamma}U) = \sum_{\{S\subset E:|S|=p-1\}}\left(\prod_{(i,j)\in S}\left[v_0^{-1}\gamma_{ij}+v_1^{-1}(1-\gamma_{ij})\right]\right)\det(D_{S*}UU^TD_{S*}^T).
$$
Note that $\det(D_{S*}UU^TD_{S*}^T)=\det(U^TD_{S*}^TD_{S*}U)$, and $D_{S*}^TD_{S*}$ is the graph Laplacian of a subgraph of the base graph with the edge set $S$. Since $|S|=p-1$, $S$ is either a spanning tree ($\det(U^TD_{S*}^TD_{S*}U)=p$) or is disconnected ($\det(U^TD_{S*}^TD_{S*}U)=0$), we have
\begin{equation} 
\det(U^TL_{\gamma}U)=p\sum_{T\in\text{spt}(G)}\prod_{(i,j)\in T}\left[v_0^{-1}\gamma_{ij}+v_1^{-1}(1-\gamma_{ij})\right].\label{eq:originalmatrixtree}
\end{equation}
Therefore, by plugging \eqref{eq:tree2} and \eqref{eq:originalmatrixtree} into \eqref{eq:3.1main}, we obtain the desired conclusion.

\section{Some Implementation Details}\label{app:imp}

In this section, we present a fast algorithm that solves the M-step \eqref{eq:cartesian-M}.
To simplify the notation in the discussion, we consider a special case with $X_1=I_{n_1}$, $X_2=I_{n_2}$, $w=\mathds{1}_{n_1}\mathds{1}_{n_2}^T$ and $\nu=0$, which is the most important setting that we need for the biclustering problem. In other words, we need to optimize $F(\theta;q_1,q_2)$ over $\theta\in\Theta_w$ for any $q_1$ and $q_2$, where
\begin{eqnarray} \label{minimize}
F(\theta;q_1,q_2) &=& \fnorm{y-\bar{y}\mathds{1}_{n_1}\mathds{1}_{n_2}^T-\theta}^2+{\sf vec}(\theta)^T\left(L_{q_2}\otimes I_{p_1} + I_{p_2}\otimes L_{q_1}\right){\sf vec}(\theta) \\
&=& \fnorm{y-\bar{y}\mathds{1}_{n_1}\mathds{1}_{n_2}^T-\theta}^2 + \iprod{\theta\theta^T}{L_{q_1}} + \iprod{\theta^T\theta}{L_{q_2}}.
\end{eqnarray}
Algorithm \ref{alg:dlpa} is a Dykstra-like proximal algorithm (DLPA) \citep{dykstra1983algorithm} that iteratively solves the optimization problem.
\begin{algorithm}[t!]
	{\bf Input:} Initialize $u_1,u_2$ and $z_2$.
	\begin{algorithmic}
		\REPEAT
		\STATE $z_1^{new} = (I_{n_1} + L_{q_1})^{-1}(z_2 + u_2)$
		\vspace{0.05in}
		\STATE $z_2^{new} = (z_1^{new} + u_1)(I_{n_2} + L_{q_2})^{-1}$
		\vspace{0.05in}
		\STATE $u_1^{new} = z_2 + u_2 - z_1^{new}$, $\ u_2^{new} = z_1^{new} + u_2 - z_2^{new}$
		\vspace{0.05in}
		\STATE $z_1 = z_1^{new}$, $z_2 = z_2^{new}$, $u_1 = u_1^{new}$, $u_2 = u_2^{new}$
		\vspace{0.05in}
		\UNTIL{convergence criteria met}
	\end{algorithmic}
	{\bf Output:} {$\theta = z_2$}
	\caption{A fast DLPA}
	\label{alg:dlpa}
\end{algorithm}
It is shown that Algorithm~\ref{alg:dlpa} has a provable linear convergence \cite{combettes2011proximal}. If we initialize $u_1 = u_2 = 0_{n_1\times n_2}$ and $z_2 = y - \bar{y}\mathds{1}_{n_1}\mathds{1}_{n_2}^T$ with $\hat\alpha = \bar{y}$, then the first two steps of Algorithm~\ref{alg:dlpa} can be written as the following update
\begin{equation}
\theta^{\rm new} = (I_{n_1} + L_{q_1})^{-1}\left(y - \bar{y}\mathds{1}_{n_1}\mathds{1}_{n_2}^T\right)(I_{n_2} + L_{q_2})^{-1}. \label{first-two}
\end{equation}
In practice, we suggest using (\ref{first-two}) as approximate M-step updates in the first few iterations of the EM algorithm. Then, the full version of Algorithm \ref{alg:dlpa} can be implemented in later iterations to ensure convergence.

\begin{figure}[!t]
	\centerline{\includegraphics[width=2.4in]{./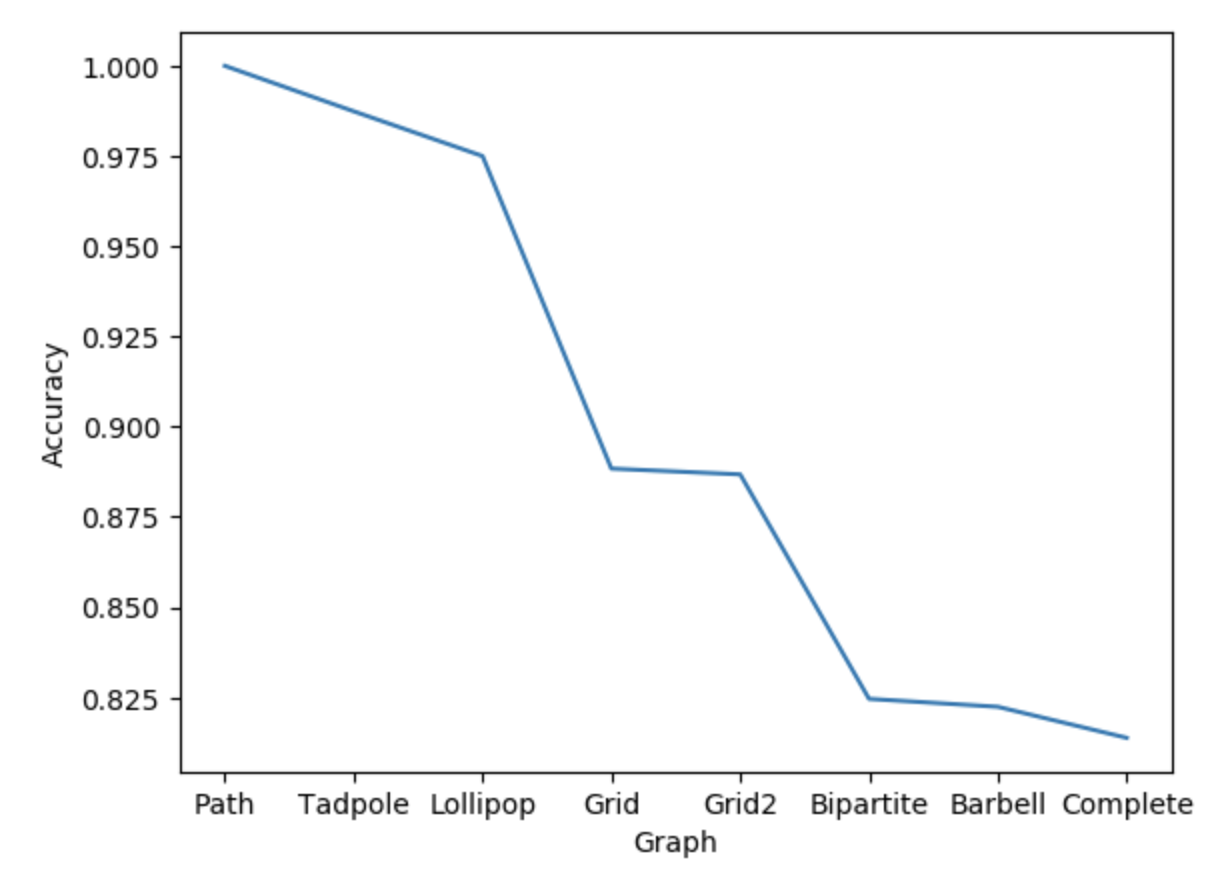}\includegraphics[width=3.7in]{./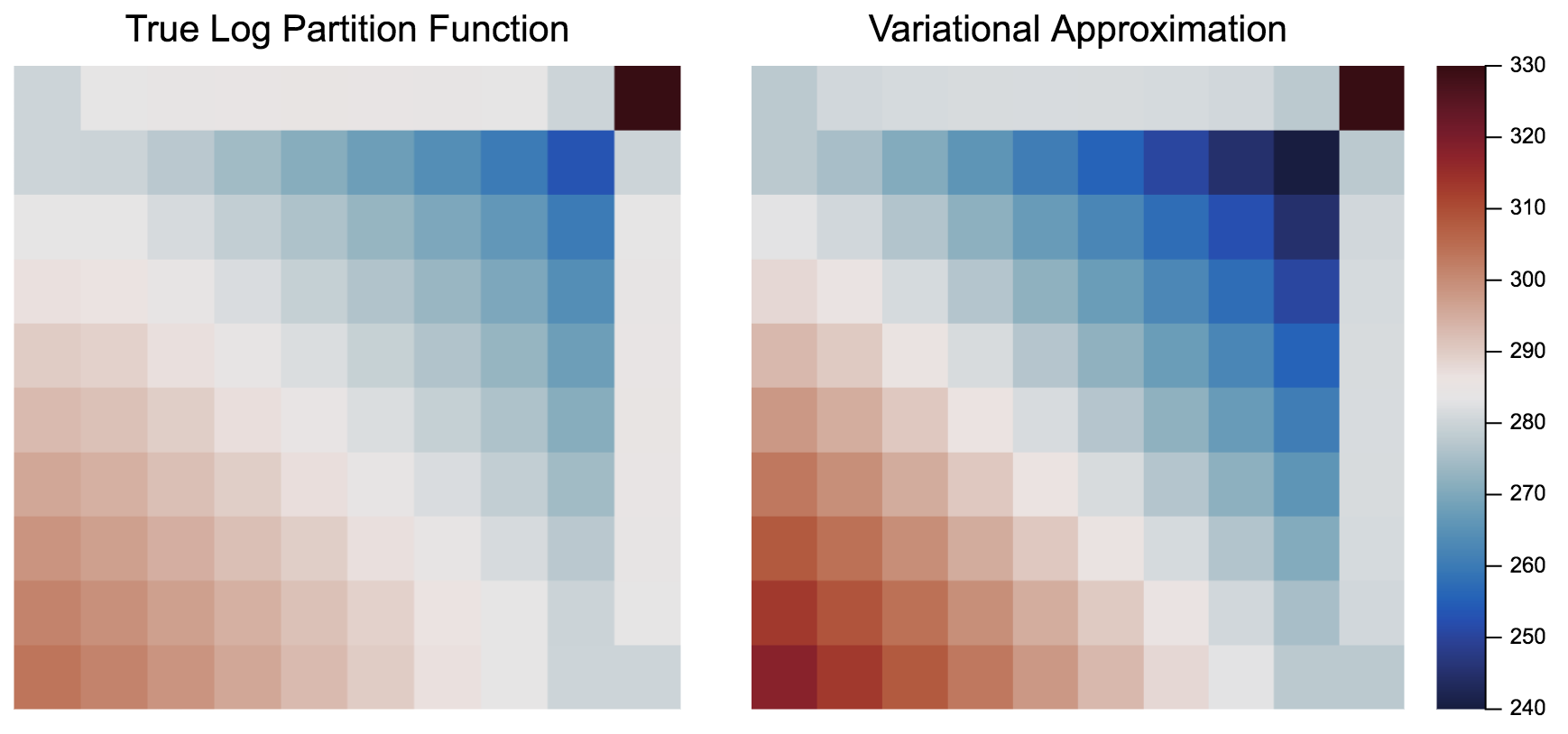}}
	\caption{(Left) Accuracy of variational approximation to the normalization constant; (Center and Right) The quality of variational approximation on Grid(10,10).}
	\label{fig:graph1}
\end{figure}
\section{Accuracy of Variational Approximation}
\label{sec:acc-var-approx}

In this section, we conduct an empirical study of the accuracy of the variational approximation.
The variational lower bound we used is
\ba
	{}& \log\sum_{T\in\text{spt}(G)}\prod_{(i,j)\in T}\left[v_0^{-1}\gamma_{ij}+v_1^{-1}(1-\gamma_{ij})\right] \\
	\geq {}& \sum_{(i,j)\in E}r_{ij}\log\left[v_0^{-1}\gamma_{ij}+v_1^{-1}(1-\gamma_{ij})\right] + \log|\text{spt}(G)|,
\ea
where $r_{ij} = |\text{spt}(G)|^{-1}\sum_{T\in\text{spt}(G)}\mathbb{I}\{(i,j)\in T\}$
is an effective resistance of an edge $(i,j)$. An effective resistance $r_e$ of an edge $e$ measures how important the edge $e$ is in the whole graph, provided that their local resistance is $1$. For instance, if $(i,j)$ is the only edge connecting two mutually exclusive subgraphs containing $i$ and $j$ respectively then $r_{ij} = 1$.  In this case, determining whether $\gamma_{ij}=1$ has a direct effect on separation of $i$ and $j$, one may want to put larger weights on $\gamma_{ij}$ in the objective function.
In the right hand side of \eqref{eq:lower}, $\log[v_0^{-1}\gamma_{ij} + v_1^{-1}(1-\gamma_{ij})]$ is weighted by the effective resistance $r_{ij}$ of the edge $(i,j)$. This implies that the variational lower bound or the right hand side of the above inequality puts larger weights to more ``important'' edges. This intuitive explanation can be verified by the following numerical experiments.

We compare the true log partition function
\[
f_1(\gamma) = \log\sum_{T\in\text{spt}(G)}\prod_{(i,j)\in T}\left[v_0^{-1}\gamma_{ij}+v_1^{-1}(1-\gamma_{ij})\right]
\]
with our variational lower bound
\[
f_2(\gamma) = \sum_{(i,j)\in E}r_{ij}\log\left[v_0^{-1}\gamma_{ij}+v_1^{-1}(1-\gamma_{ij})\right] + \log|\text{spt}(G)|
\]
for various choices of graphs. To this end, we randomly sample two weighted graphs $G_1$ and $G_2$ and then put independent random uniform weights on the edges of $G_1$ and $G_2$, respectively. Then we compare the differences between the true normalization constants and the differences between the variational lower bounds. We fix $n = 100$, $v_0 = 10^{-1}$ and $v_1 = 10^{3}$. Let us write the linear chain graph as $P_{n}$ and the complete graph as $K_n$. Let $\vee$ be a graph operator joining two graphs by adding  exactly one connecting edge between two graphs. We compare $P_n$, $K_n$ and the 6 graphs in Table~\ref{table:8}. The graph $K_{80,20}$ is sampled under the additional constraint that $\sum_j \gamma_{ij} = 1$.
\begin{table}[!t]
	\small
	\centering
	\begin{tabular}{l | l | l | l | l | l  l  l  l}
		Tadpole(50,50) & Lollipop(80,20) & Grid(10,10) & Grid(20,5) & Bipartite(80,20) & Barbell(50,50)\\
		\hline
		$P_{50} \vee C_{50}$ & $P_{80} \vee K_{20}$ & $P_{10} \otimes P_{10}$ & $P_{20} \otimes P_{5}$ & $K_{80,20}$ & $K_{50} \vee K_{50}$
	\end{tabular}
	\normalsize
	\caption{A List of Graphs Used for Evaluating Quality of Variational Approximation} \label{table:8}
\end{table}

The left panel of Figure~\ref{fig:graph1} plots the accuracy of approximation, measured by $e^{f_2(\gamma)}/e^{f_1(\gamma)}$ against the complexity of graphs. It shows that the approximation is more accurate for a sparser graph, but even for the complete graph we still obtain an accuracy over $0.8$. 

The center and the right panels of Figure~\ref{fig:graph1} displays the true log-partition function and the variational approximation when Grid(10,10) is clustered into two parts. To be specific, the true signal $\theta^* \in \mathbb{R}^{10 \times 10}$ is defined on the nodes of $G$ and has exactly 2 clusters, the bottom left square of $(a,b)$ and the rest. That is,
\ba
\mathcal{C}_1^{(a,b)} = \{\theta_{ij}: i \leq a\textrm{ and }j \leq b\} \quad \mathcal{C}_2^{(a,b)} = \{\theta_{ij}: i > a \textrm{ or } j > b\}.
\ea
For instance, if $a = 3$ and $b = 4$, then $\mathcal{C}_1$ has 12 nodes. Let $\gamma^{(a,b)}$ be the corresponding latent structure parameter, and $\gamma^{(a,b)}_{ij}=1$ if $i$ and $j$ are in the same cluster and $\gamma^{(a,b)}_{ij}=0$ otherwise. For $a = 1,\cdots,10$ and $b = 1,\cdots,10$, we compare the true log-normalization constant $f_1(\gamma^{(a,b)})$ and its variational approximation $f_2(\gamma^{(a,b)})$, and summarize the result in the center and the right panels of Figure~\ref{fig:graph1}. The result is displayed by heat-maps, and the values of $f_1(\gamma^{a,b})$ and $f_2(\gamma^{a,b})$ are reported in the $(a,b)$-th entries of the two heatmaps. The results support our conclusion that the variational approximations are reasonable for various graphical structures.



\bibliography{Bayes}

\end{document}